\documentclass[12 pt]{report}
\usepackage[margin=2.5cm]{geometry}
\usepackage{multicol}

\usepackage{graphicx}
\usepackage{dcolumn}
\usepackage[hidelinks]{hyperref}
\usepackage[english]{babel}
\usepackage[autostyle, english = american]{csquotes}
\MakeOuterQuote{"}
\usepackage{amsthm}
\usepackage{eufrak}
\usepackage[mathscr]{euscript}
\usepackage{amssymb}
\usepackage{bm}
\usepackage{physics}
\usepackage{comment}

\usepackage{lipsum}
\usepackage{setspace}

\usepackage[backend = bibtex]{biblatex}
\addbibresource{biblio.bib}

\newtheorem{theorem}{Theorem}[section]
\theoremstyle{definition}

\newtheorem{claim}{Claim}[section]

\theoremstyle{remark}

\numberwithin{equation}{section}

\usepackage[p,osf, theoremfont]{cochineal}
\usepackage[varqu,varl,var0]{inconsolata}
\usepackage[cochineal,vvarbb]{newtxmath}
\usepackage[scr = euler, cal = stixplain]{mathalfa}


%

\DeclareMathOperator\supp{supp}

\begin{document}

\onehalfspacing
\pagenumbering{roman} 

\begin{titlepage}

\onehalfspacing

\begin{center}
\begin{figure}[h]
\centering
\includegraphics[scale=0.5]{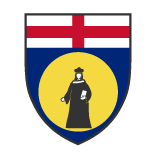}
\end{figure}

\Large{Università di Genova\\
Scuola di Scienze Matematiche, Fisiche e Naturali\\}

\vspace{5 mm}

\large{Master Degree Course in Physics}

\vspace{10mm}
\hrule

\vspace{15 mm}
%
%

\vspace{1\baselineskip}

\Large{ \textbf{Relative Entropy from Coherent States in Black Hole Thermodynamics and Cosmology}}\\

\end{center}

\vspace{10mm}

\par

\noindent

\begin{minipage}[t]
{\textwidth}
{ \large{{Supervisors:}}}\\
Prof. Nicola Pinamonti \\
Prof. Pierantonio Zanghì

\end{minipage}

\vspace{5 mm}
\begin{minipage}[t]
{\textwidth}
{\large{Co-Supervisor:}}\\
Prof. Camillo Imbimbo

\end{minipage}

\vspace{10 mm}

\begin{minipage}[t]
{\textwidth} \raggedleft
{\large{Candidate:}}\\
Edoardo D'Angelo
\end{minipage}

\vspace{10 mm}

\begin{center}
\large{\scshape ACADEMIC YEAR 2019-2020}
\end{center}

\end{titlepage}
\newpage

\newpage

\chapter*{Abstract}

The aim of this work is to study the role of relative entropy in the thermodynamics of black holes and cosmological horizons. Since the seminal paper of Bombelli et al. (1986) \cite{Bombelli86}, many attempts have been made to characterise the Bekenstein-Hawking entropy as the entanglement entropy of quantum degrees of freedom separated by the event horizon. Here we adapted some recent results by Ciolli et al. (2019) \cite{CiolliLongoRuzzi19}, and Casini et al. (2019) \cite{CasiniGrilloPontiello19} for the relative entropy of coherent excitations of the vacuum computed with the Tomita-Takesaki modular theory, to find the variation of generalised entropy of static and dynamical black holes and for cosmological horizons.  The main result we use is the explicit formula of the entanglement entropy in terms of the symplectic form of the classical field theory.

We review the argument for static black holes by Hollands and Ishibashi (2019) \cite{HollandsIshibashi19} with simple modifications, in particular to compute the entanglement entropy of coherent states with respect to the Unruh state, the most physical state for a black hole in formation. The entanglement entropy is computed at past infinity, where asymptotic flatness guarantees that we can apply the results found in flat spacetime. We link the variation of relative entropy to the growth of the horizon using a conservation law for the stress-energy tensor, and we recover the formulas of Hollands and Ishibashi. We then study the application of the same framework to the case of apparent horizons, which are local structure first introduced in the study of dynamical black holes. We study in detail the case of Vaidya spacetime, being the simplest generalisation of a Schwarzschild black hole to the dynamical case, and we find that a notion of black hole entropy naturally emerges, equals to one-fourth of the area of the apparent horizon. In the case of dynamical black holes we find that the variation of generalised entropy (defined as the matter relative entropy plus one-fourth of the horizon area) equals a flux term, plus an additional term, which is not present in the static case, and that can be interpreted as a work term done on the black hole. We finally show in a simple case that it is possible to follow the same scheme to assign an entropy to the horizons emerging in cosmological scenarios.
 
The work is organised as follows. We sum up the relevant tools from General Relativity, i.e., the Raychaudhuri equation for null geodesic congruences and the Gauss-Stokes Theorem. We review the algebraic approach to the free, neutral, scalar field propagating in globally hyperbolic spacetimes, with emphasis on the connections to the usual mode-decomposition approach, and the quasifree and Hadamard conditions for the physical states. We describe the KMS characterization of thermal states and its connection to the Tomita-Takesaki modular theory, which is used to define the relative entropy for quantum field theories via the Araki formula. We see that the relative entropy for coherent perturbations can be explicitly computed using the classical structure only. We then apply the formula for the relative entropy in the specific examples of Schwarzschild, Vaidya, and cosmological horizons.

\tableofcontents

\chapter*{Ringraziamenti}

Scrivere una tesi non è un lavoro individuale: assomiglia di più alla costruzione delle Piramidi, alle imprese che richiedono secoli e migliaia di persone. Per il dispiegamento di tutti i mezzi materiali ed emotivi necessari a farmi arrivare in fondo, anche nei momenti di peggiore autocommiserazione, a tutte le persone che mi sono state vicine, che hanno scritto questa tesi con me, va questa prima pagina.

Innanzitutto, ringrazio i miei relatori, i professori Nicola Pinamonti e Pierantonio Zanghì, per le infinite chiamate Skype, le discussioni, le indicazioni e i suggerimenti con cui abbiamo costruito questa tesi a distanza. Mi hanno insegnato nel modo migliore che fare scienza è, innanzitutto, una conversazione fatta d'idee, errori e immaginazione: senza il loro continuo supporto, scientifico, e, a volte soprattutto, psicologico, non avrei scritto una sola parola di questa tesi. Ringrazio anche il mio correlatore, il professore Camillo Imbimbo, per il prezioso aiuto e la pazienza con cui ha ascoltato la versione preliminare di questa tesi. Grazie per esservi dedicati a questo progetto perfino alle 9 di domenica mattina, praticamente a metà della notte.

Ringrazio poi tutti i miei compagni di corso, il nostro sensei Giulio, i compagni Martino, Giacomo, Alessandro e Dorwal, e tutti gli amici del \textit{Bit di Fisica in più}, perché non era mai chiaro quando finivano la Fisica, i corsi, gli esami e lo stress e cominciava l'amicizia. Grazie per questi anni condivisi insieme in Piccionaia, divorando panini nelle pause fra le lezioni e i pomeriggi di laboratorio, che per questo sono stati molto di più che anni di studio.

Ringrazio la mia famiglia, Mamma, Papà, Beatrice, Lucrezia, Anna, Luca, i miei nonni e i miei zii, e, mi sento di aggiungere, Matteo e Francesco, che durante i tre mesi d'isolamento hanno scoperto i tic e le nevrastenie di uno studente universitario agli sgoccioli (in tutti i sensi), e non hanno smesso di incoraggiarmi anche quando mi hanno visto camminare per casa borbottando mezze frasi sugli spazi quadridimensionali. Grazie per l'entusiasmo con cui avete accolto gli aggiornamenti quotidiani sul numero di pagine, puntuali come il bollettino meteo, e per avermi fatto prendere una boccata d'aria quando rischiavo di annegare fra i conti sbagliati.

Grazie, infine, ad Elena. Grazie per scoprire sempre la parte migliore di me, per essere il mio sostegno e la mia luce. Grazie per essere la migliore compagna con cui avventurarsi in questo Universo, e di aver scelto di condividere l'unica vita che abbiamo.

Ho avuto il privilegio di studiare e fare Fisica mentre il mondo era in fiamme: di potere uscire dall'isolamento per esplorare liberamente i misteri dei buchi neri, stupendomi per la meraviglia del cosmo. Credo che la scienza e i suoi valori possano aiutare a spegnere l'incendio che le crisi scoppiate in questa primavera (crisi sanitaria, climatica, della giustizia e della democrazia) fanno sembrare inarrestabile, e per questo ringrazio tutti le persone che hanno reso possibile questo lavoro.

\newpage
\
\newpage

\chapter*{Introduction. Bekenstein's Quantum Cup of Tea} \label{sec:intro}
\addcontentsline{toc}{chapter}{\nameref{sec:intro}}

\pagenumbering{arabic}
\setcounter{page}{9}

Let's start with one of our favourite equations, 
\begin{equation*}
S_{BH} = \frac{c^3 k_B}{G \hbar} \frac{A}{4}
\end{equation*}
written, for the first and last time, with the full set of constants it carries in SI units: the speed of light $c$, the Boltzmann constant $k_B$, the Newton's constant $G$, and the reduced Planck's constant $\hbar$. Even without knowing what $S_{BH}$ or $A$ are, the above equation seems puzzling: it seems to encompass all the branches of Physics, each one represented by the associated constant. There is thermodynamics, with the Boltzmann constant, and Quantum Mechanics, because of the Planck's constant. This can happen, in studying quantum materials at finite temperature. However, the presence of the speed of light and of Newton's constant denounce that the equation includes gravity and relativity too, and this is where a physicist becomes really interested in the equation: usually, Quantum Mechanics and General Relativity describe different worlds, the former being the theory from the tiniest constituents of matter and their interactions to the structure of materials, the latter being the theory of the motions of stars and galaxies, and of the structure of the Universe at its largest scales. Quantum Mechanics and General Relativity belong to different realms of Nature, and are rarely used together: because when the quantum effects are important, gravity can usually be neglected, and vice versa, because in the study of the largest structure of the Universe, the strange behaviour of matter at its microscopic scales is usually not relevant.

If the curious physicist would continue with a dimensional analysis, she could discover that the combination $c^3 /G \hbar $ is the inverse of a squared length, the so-called Planck length, roughly $l_P = \sqrt{G \hbar / c^3} \simeq 10^{-35} \ m$. On the other hand, the Boltzmann's constant has units of an entropy, $J K^{-1}$; then, maybe guided by the choice of symbols for $S_{BH}$ and $A$, she could come with a well-educated guess (suspiciously too good, we are afraid to admit), that the above equation assigns an entropy, $S_{BH}$, proportional to one-fourth of the area of some system, $A$. But, far from being resolved, the mystery thickens: at the very least, since the entropy is a measure of the microstates of a system, if we assign a bit of information to each small cell of the system, we would expect the entropy to be proportional to its volume, not to the area. Besides, entropy is a thermodynamic quantity, which has to do with energy and temperature, not with geometric quantities like the area. Still, it is far from clear the role that Quantum Mechanics and gravity should play in such a system.

As it turns out, the above equation was one of the first discoveries suggesting that gravity and Quantum Mechanics can conjure up in surprising ways, giving rise to phenomena which can be explained only by a combination of quantum and gravitational effects, and cannot be described using one of the two theories only. Physicists are discovering more and more that, in order to explain the phenomena we see, we need to simultaneously take into account both General Relativity and Quantum Mechanics: to study the extreme phases of the Universe in its first moments, for example, or to investigate the mysterious interactions between elementary particles at the largest scales of energy, well beyond the reach of current accelerators. The above equation was one of the first of such examples, perhaps the most important. The story of its discovery has to do with a couple of brilliant students and their professors, black holes, and a cup of tea.

Once upon a time there were a venerable professor and an audacious student talking of black holes in front of a cup of tea \cite{Oppenheim15}. They were discussing about the recent proposal they made, that black holes can be described only by a handful of parameters: their mass, their angular momentum, and their electric charge. Nothing else was needed to completely characterise a black hole state.

But what happens, the venerable professor argued, if I throw this cup of tea into a black hole? The tea carries an entropy, which contributes to the entropy of the Universe. If it falls into a black hole the entropy of the cup of tea would suddenly disappear, and the total amount of entropy in the Universe would decrease, violating the second law of thermodynamics. Was it possible, the professor asked, that black holes transcend such an important law in such a simple manner?

If the student were a bit lazy, maybe busy with his next exam, he could have made up excuses saying that one cannot look into a black hole, but that the cup of tea would still be there, with its entropy forever contributing to the entropy balance of the Universe, even without being measured ever again. But the venerable professor was John A. Wheeler, and the student Jacob Bekenstein.

Bekenstein started thinking about the problem, and in 1972 he published a seminal paper \cite{Bekenstein72}, in which he proposed that a black hole must indeed carry an entropy. On page 2, after exposing the puzzle, Bekenstein wrote: "We state the second law as follows: Common entropy plus black-hole entropy never decreases". He then went on explaining that common entropy is the entropy carried by matter outside the black hole, and that the black hole entropy should have been proportional to the event horizon area, with a proportionality coefficient of order unity, writing a formula very similar (indeed, the same, apart from an arbitrary coefficient) to our first equation:
\begin{equation*}
S_{BH} = \eta k_B \frac{A}{l_P^2} \ .
\end{equation*}

The choice of the area of the black hole as its entropy was motivated by a result by a student in Cambridge, Stephen Hawking, who discovered that, just as the entropy of a system, the area of black holes never decreases \cite{HawkingEllis73}. In a subsequent paper, \cite{Bekenstein73} Bekenstein went further, arguing that the proportionality coefficient should be $\eta = \frac{1}{2} \ln 2 \simeq 0.347 $, and stating that the black hole entropy should be understood as a statistical measure of the black hole microstates, and that "it would be somewhat pretentious to attempt to calculate the precise value of the constant $\eta \hbar^{-1}$ without a full understanding of the quantum reality which underlies a \textit{classical} black hole".

Bekenstein's proposal was outrageous. Hawking's area theorem was a statement in differential geometry, while the entropy was a statistical measure, emerging from the counting of the microstates of a system. How could they be related? Moreover, a finite entropy implies a finite temperature, because of the first law of thermodynamics, and that would mean that black holes could emit a thermal radiation in vacuum, in striking contrast with their fundamental property in General Relativity, the very fact that they are \textit{black}, and do not emit anything. To settle the question, Hawking himself computed the effects that the formation of a black hole would cause on the modes of a free quantum field, to definitely confute Bekenstein. The result was published in \textit{Nature} \cite{Hawking74BHExplosions}: Hawking discovered that black holes emit a black-body radiation at temperature
\begin{equation*}
T = \frac{\hbar c^3}{8 \pi G k_B} \frac{1}{M} \ .
\end{equation*} 
Hawking computation in turn fixed the proportionality coefficient between the area and the entropy of a black hole to $1/4$, showing that not only was Bekenstein right, but also that he came incredibly close to the correct value for the proportionality coefficient.

Hawking's paper put black hole thermodynamics on firm grounds. The next, natural step was to answer the question Bekenstein posed in his first paper, on the statistical interpretation of the black hole entropy. What are the microstates that cause the black hole to have a finite entropy? The first attempt in this sense was made by Bombelli et al. in 1986 \cite{Bombelli86}. They considered a quantum field propagating on a black hole background, and noted that, although the black hole interior was causally disconnected from the outside, still the degrees of freedom inside and outside would be entangled. They considered, then, an initially pure quantum state, described by a state vector $\ket \Phi$ in some Hilbert space $\mathcal H$, and its associated density matrix $\rho_\Phi = \ket \Phi \bra \Phi$. Then, they wanted to isolate the degrees of freedom inside the black hole: thus, they traced over the degrees of freedom outside the black hole, which we will denote with $\mathcal B'$, that is, the complement of the black hole, obtaining the reduced density matrix $\rho_\Phi^{\mathcal B} = \Tr_{\mathcal B'} \rho_\Phi$. They proceed computing the \textit{entanglement entropy}, which they defined as $S_{\mathcal B} = - \Tr \rho_\Phi^{\mathcal B} \log \rho_{\Phi}^{\mathcal B}$. They found that the entanglement entropy is indeed proportional to the area of the horizon, thus providing a natural explanation for the origin of the black hole entropy.

The entanglement entropy became an important measure of entanglement in statistical physics, but its application to black holes soon showed some problems. First, the entanglement entropy is a divergent quantity in the continuum: if $\epsilon$ is the distance between the two entangled regions, the entanglement entropy is
\begin{equation*}
S_{\mathcal B} \sim \frac{A}{\epsilon^{d-2}}
\end{equation*}
where $d$ is the dimension of the space, and we switched to natural units, $G = c = \hbar = k_B = 1$, which we adopt from now on. Moreover, the proportionality coefficient depends on the matter model and on the number of fields present in the theory. This is to be compared to the Bekenstein-Hawking formula, which is a UV-complete (that is, finite), universal quantity, independent on the model considered and reproduced in a variety of approaches and situations. For these reasons, approaches to black hole thermodynamics based on entanglement entropy lost interest (although the research continued, see \cite{Solodukhin11} for a review), as it was assumed, as Bekenstein originally did, that in order to explain the black hole entropy one needs a theory of quantum gravity.

Alternatively to the entanglement entropy, however, one can introduce the \textit{relative entropy}. In non-relativistic Quantum Mechanics, it is defined starting from two states, $\ket \Phi$ and $\ket \Omega$ and their associated reduced density matrices in a subsystem $\mathcal R$, $\rho^{\mathcal R}_\Phi$ and $\rho^{\mathcal R}_\Omega$, as $S(\Phi | \Omega ) = \Tr \rho^{\mathcal R}_\Phi ( \log \rho^{\mathcal R}_\Phi - \log \rho^{\mathcal R}_\Omega)$. 

Although it is constructed in a similar way to the entanglement entropy, the relative entropy is a slightly different measure. In fact, it does not compute the entropy between two regions, but rather, it is a measure of entanglement between two \textit{states}: therefore, it cannot be associate to some degrees of freedom localised in a region, but it is a property of the theory itself. 

Relative entropy admits a formulation for continuum theories in the context of Algebraic Quantum Field Theory, which generalises the formula valid in Quantum Mechanics, and solve the divergence problems of entanglement entropy. The generalisation to continuum theories were first found by Araki \cite{Araki76}, in 1976. The starting consideration is that entanglement is not a property of states, but rather is a property of the algebra of observables, and therefore it is somewhat an inevitable feature of any quantum theory. Araki was able to write the relative entropy as the expectation value of an operator, the \textit{relative modular Hamiltonian} $K_{\Omega, \Phi}$, constructed using the theory of algebra automorphisms (the operators which map the algebra into itself, that are, its symmetries) developed by Tomita and Takesaki \cite{Takesaki70}. The \textit{Araki formula} (given in \eqref{eq:araki-formula}) states that the relative entropy between two states $\ket \Omega$, $\ket \Phi$ is given by the expectation value of the relative modular Hamiltonian:
\begin{equation*}
S(\Omega | \Phi) = - \ev{K_{\Omega, \Phi}}{\Omega} \ .
\end{equation*}
A detailed discussion of the formula is presented in section \ref{sec:E-EE}, following the presentation of \cite{Haag92}, \cite{CasiniGrilloPontiello19}, \cite{Witten18}.

In ordinary Quantum Mechanics, the relative modular Hamiltonian $K_{\Omega, \Phi}$ reduces to a tensor product of the logarithms of the density matrices associated to the two states $\ket \Phi$ and $\ket \Omega$,
\begin{equation*}
K_{\Omega, \Phi} = - \log(\rho_\Omega \otimes \rho_\Phi^{-1}) \ ,
\end{equation*}
while the expectation value of an observable in a state $\ket \Phi$ is given by
\begin{equation*}
\ev{A}{\Phi} = \Tr \rho_\Phi A \ .
\end{equation*}
Thus, the Araki formula reduces to the formula $S(\Phi | \Omega ) = \Tr \rho^{\mathcal R}_\Phi ( \log \rho^{\mathcal R}_\Phi - \log^{\mathcal R}_\Omega)$ for ordinary Quantum Mechanics. However, since it makes no use of the properties of the Quantum Mechanics itself, can be formulated for any type of algebra, and in particular it still holds for the algebra of observables emerging in Quantum Field Theories.

In 2019, a series of papers \cite{CasiniGrilloPontiello19}, \cite{CiolliLongoRuzzi19}, \cite{Longo19} used the Araki formula to compute the relative entropy between coherent states of a Klein-Gordon field. Coherent states are particularly simple to handle, because they can be considered a classical perturbation of a quantum state (see subsection \ref{ssec:coherent-perturbations}). In this application, the Araki formula reduces to an integral of a component of the stress-energy tensor associated to a classical wave, and therefore the abstract setting of the modular theory boils down to a very explicit expression in terms of the wave and its derivatives. Thanks to this formulation, it has been possible to compute the relative entropy between coherent states in the exterior of a Schwarzschild black hole. In their work \cite{HollandsIshibashi19}, Hollands and Ishibashi computed the relative entropy between coherent states, both for scalar and gravitational perturbations of the vacuum, over a Schwarzschild background, reproducing the Bekenstein-Hawking formula for black hole entropy.

The strength of this approach stems from the fact that it directly computes the entropy content of the matter, and it shows that its variation causes an analogue variation in the horizon area. Previous approaches to black hole entropy were all based on a geometrical point of view. They started with an analogy between the first law of black hole mechanics,
\begin{equation*}
\delta M = \frac{\kappa}{8 \pi} \delta A + ... \ ,
\end{equation*}
which relates the variation in black hole mass during a quasi-stationary process to the variation of the horizon area, plus terms related to the angular momentum and electric charge of the black hole, and the first law of thermodynamics,
\begin{equation*}
\dd E = T \dd S + p \dd V \ .
\end{equation*}
Since Hawking's computation fixes the black hole temperature at $T = \kappa/ 2\pi$, where $\kappa$ is the surface gravity of the black hole, then the entropy is fixed to one-fourth of the area of the horizon.

An alternative approach worth mentioning was found by Wald \cite{Wald93}, who identified the black hole entropy with the Noether charge of a theory with diffeomorphism invariance. Again, however, this approach is based on geometric considerations only. 

In this thesis we generalise Hollands and Ishibashi's approach to \textit{apparent horizons} and cosmological horizons. The idea is to compute the variation of relative entropy for matter fields propagating on a black hole background. Since the relative entropy is defined from the states of the theory, it can be associated with the region in which the theory is defined. Varying the region of definition then causes a variation in the relative entropy. Then, we show that a variation in relative entropy is accompanied by a variation of one-fourth of the area of the horizon. This give a direct interpretation of the horizon area as the entropy carried by the black hole.

The approach is simply to take the original Bekenstein's gedanken experiments and apply them to the case of quantum matter: what happens if we take a quantum cup of tea and we throw it into a black hole? To model a quantum cup of tea we consider a free, massless scalar field, propagating over a black hole background from a set of initial data given at past infinity. We consider then the relative entropy between a vacuum state and a coherent state of the scalar field. Since the relative entropy is associated to a scalar field, it naturally depends on the region over which the field propagates: if we vary the surface on which we give initial data, the field can propagate in a different region, and thus the relative entropy varies accordingly. Then, we consider a region in the outside of a black hole, extending from past infinity and the past horizon to a hypersurface at $v = v_0$, where $v$ is the advanced time coordinate. The scalar field is then treated as a perturbation of the metric, and thus causes a variation of the area of the horizon. If we now vary the region over which the field can propagate, we cause a variation in relative entropy and a simultaneous variation in the horizon area, because clearly the area of the horizon can be perturbed only in the region where the field propagates. We then use a conservation law for a current constructed with the field's stress-energy tensor to link the variation of relative entropy to a variation of the area of the horizon, and we obtain that
\begin{equation*}
\dv{v_0}(S_{v_0}(\Omega | \Phi) + \frac{1}{4} \delta^2 A(v_0) ) = 2 \pi \mathcal F \ .
\end{equation*}
The right-hand side is a surface integral of the field current, while in the left-hand side $S_{v_0}(\Omega | \Phi)$ is the relative entropy between two coherent states computed in the region we considered, extended up to $v = v_0$ , and $\delta^2 A(v_0)$ is the second-order perturbation in the horizon area caused by the field at advanced time $v = v_0$ (that is, the 2-dimensional, spherical cross-section of the 3-dimensional event horizon). The derivative is with respect to the boundary of the region considered. This formula allow us to identify one-fourth the area of the horizon as the entropy contribution of the black hole, since it varies in the same way as the relative entropy of the field.

We apply such a computation in different backgrounds. First, we review the argument by Hollands and Ishibashi \cite{HollandsIshibashi19} for the case of a Schwarzschild black hole, with a couple of simple modifications. First, we choose to give the initial data for the field's wave equation at past infinity on the past horizon, rather than at future infinity and on the event horizon, as is done in \cite{HollandsIshibashi19}. We do so in order to refer the computation of the relative entropy to a coherent perturbation of the Unruh state, which is the most natural vacuum state in the presence of a black hole (see subsections \ref{ssec:hawking-temperature} and section \ref{sec:RE-Unruh}, and \cite{DMP11} for a more detailed discussion). Moreover, from a physical point of view, it is more reasonable to give initial data at past infinity, and then let the field propagate over the background, instead of giving initial data at future infinity, though the two approaches are mathematically equivalent. Second, we want to refer the variation of the horizon area to an arbitrary interval, not necessarily extended to infinity. The reason for this requirement is that, in treating dynamical black holes, we consider a distribution of matter falling into the black hole for a finite amount of time, which again is a reasonable assumption from a physical point of view. Although it is true that astrophysical black holes always absorb the Cosmic Microwave Background (CMB) radiation, which can be considered as a distribution of matter permeating the whole Universe, one is rather interested in the accretion of a black hole during the infalling of some distribution of matter. Therefore, we try to model such a situation, although in the highly idealised setting of a Vaidya black hole, considering a distribution of matter falling into the black hole in a finite interval of time, whith the black hole remaining stationary outside this interval. For this reason, also in the stationary case we consider two regions in the black hole exterior, extending from past infinity and the past horizon to a hypersurface, respectively at $v = v_0$ and $v = v_1$, and we compute the difference between the relative entropies associated to the theories defined in these two regions. Then, we compute the variation of this difference as we rigidly translate the two regions by the same amount. Since the Schwarzschild spacetime is static, it admits a globally time-like Killing field which defines a conserved current. We use the Killing conservation law to link the variation of relative entropy to the variation of the stress-energy tensor on the black hole horizon. Finally, the Raychaudhuri equation \eqref{eq:Raychaudhuri} let us link the stress-energy tensor to a variation in the horizon area. In the way we use it, the Raychaudhuri equation is a re-expression of Einstein equations, which relates the variation in the cross-sectional area of a congruence of geodesics (which can be visualised as a bundle of wires) to the projection of the stress-energy tensor along the geodesics.

Then, we apply the procedure to the case of apparent horizons. Apparent horizons are an alternative characterisation of black holes which is most useful for black holes evolving in time, due to the presence of matter. Here, we consider a simple model for dynamical black holes, the Vaidya spacetime \cite{Booth10}. The Vaidya black hole is the most simple generalisation of a Schwarzschild black hole to the dynamical case, since it replaces the Schwarzschild parameter $m$, the black hole mass, with a function of the advanced time, $m(v)$. It describes a black hole in accretion due to the infalling of shells of radiation, or an evaporating black hole emitting radiation. In particular, it preserves spherical symmetry, which is a key technical point. In fact, in spherically symmetric spacetimes it is possible to introduce a vector, called Kodama vector \cite{Kodama79}, which give rise to a conserved current, just as a Killing field. In the case of the Kodama vector, however, the conservation law is not a consequence of a symmetry of the metric, but rather it descends from the geometric properties of the vector itself (in particular, the fact that it is divergence-free).

We again perturb the spacetime with a free, massless scalar field, and we proceed in the same way as in the Schwarzschild case computing the variation of the difference of relative entropy associated to two regions, and the associated variation in the horizon area. In this context, since the Vaidya black hole is spherically symmetric, we can use the Kodama vector to find a conserved current, and we again apply the Raychaudhuri equation to link the conserved current to a variation of the horizon area. We find that we can again introduce a notion of black hole entropy associated to one-fourth the area of the apparent horizon. In the dynamical case, we also find a term which can be interpreted as a work term done over the black hole by the field.

Finally, we see that the same ideas can be applied in the case of cosmological horizons. For simplicity we consider only spaces which asymptotically reduce to flat FLRW spacetimes, and which exhibit an event horizon. We show that the same procedure can be applied in this context, and thus we can conclude that the relative entropy is a good candidate for the computation of entropy contributions in at least three different classes of background models.

The work is organised as follows. In chapter \ref{ch:GR} we review the main tools of General Relativity and of differential geometry we need to talk about black hole and the horizons' properties. We review the notion of globally hyperbolic spacetimes, the Gauss-Stokes theorem, geodesics congruences and the Raychaudhuri equation, and we show how to construct the Penrose diagram for asymptotically flat spacetimes. Penrose diagrams \cite{Penrose64} are one of the best ways to visualise the causal structure of a spacetime with spherical symmetry, and we will use it to explain the geometric settings we consider for the propagation of a scalar field over a curved background. In chapter \ref{ch:QFTCS} we review the construction of a quantum field theory over a curved spacetime. Here, we adopt the algebraic approach, which is mathematically rigorous and physically sound, as we explain in further detail in section \ref{sec:motivations}. We show how one can construct the observables for a free field propagating on a globally hyperbolic spacetime, and how one can define a state to give the expectation value of observables without referring to a preferred notion of a vacuum or to background symmetries. We then illustrate the Kubo-Martin-Schwinger (KMS) conditions \cite{Kubo57} \cite{Haag67}, which characterise thermal states at finite temperature, and its connections with the Tomita-Takesaki theory of modular automorphisms. The Tomita-Takesaki theory is in turn the context in which we define the relative entropy, via the Araki formula. We then show how to explicitly compute the relative entropy from the Araki formula for the simple case of coherent states. In chapter \ref{ch:new-results} we review the classical laws of black hole mechanics and we finally introduce the new computation on the relative entropy for a Schwarzschild black hole, and we show our new results on Vaidya black holes and cosmological spacetimes. We show that General Relativity predicts four dynamical laws for the black holes, which bear a striking resemblance with the four laws of thermodynamics, as was shown in \cite{Bardeen73}. We recall that the Unruh state, when restricted to the exterior of black hole, show a thermal flux of particles directed toward future infinity, and we discuss its relation with the Hawking effect, namely, the fact that black holes radiate with a black body spectrum. We then compute the relative entropy in the case of Schwarzschild, Vaidya, and cosmological spacetimes. Our results are expressed in the three equations \eqref{eq:result-RE-Schw}, \eqref{eq:result-RE-Vaidya}, and \eqref{eq:result-cosmological-horizon-RE}.

Throughout this thesis, we adopt the metric signature $(-,+,+,+)$. We will use natural units, in which $ \hbar = c = k_B = G = 1 $.

\chapter{General Relativity} \label{ch:GR}

    \section{The Most Beautiful of All Theories} \label{sec:most-beautiful-theory}
    
    This is how Landau and Lifschitz called General Relativity (GR), in their volume on Classical Fields \cite{Landau82}. The same sense of beauty inspires us while listening to Beethoven's Quartets, or reading the Hamlet, or staring in awe at the ceiling of the Sistine Chapel: a wordless, almost magic feeling that the creation of a work of art was a logical necessity for the world to be complete, inevitable as the stars in the night sky. Mathematical beauty is no different from artistic or poetic beauty; what rings deeply of GR is its deep unity, the same of Notre-Dame cathedral in Paris or of Leonardo's \textit{Annunciazione}, where few, fundamental principles are able to produce entire new worlds.
       
   The principles of GR are based on what Einstein called "the happiest thought of [his] life": while he was still in a patent office in Bern, in 1907, he realised that a man in free fall would not feel his own weight. This is what became known as the Equivalence Principle: gravitational effects can, locally, be identified with those of non-inertial observers. This identification generalises Galileo's principle of relativity, in the sense that, if one includes gravity, non-inertial frames without gravity are equivalent to inertial frames in a gravitational field. In Einstein's hands, the Equivalence Principle became the key fact that the laws of Physics must be the same for \textit{all} observers, not just inertial ones; in other words, physical laws must be written in a coordinate-independent way, a property that we now call the covariance of a theory.
   
   These physical insights can be expressed in a mathematical consistent way. If all observers, at least locally, are equivalent, each of them believes to be inertial (no matter if they truly are; actually, there is nothing like a "truly" inertial observer, no more than there is something like a "truly" observer in constant motion) and therefore experience the physical laws of inertial observers: they live in Minkowski spacetime. Note there is no requirement that the spacetime is \textit{globally} Minkowski; the point is that it must always be possible to construct a local inertial frame of reference. On the other hand, physical quantities must be written in a coordinate-independent way, because different observers may experience different effects (in other words, the numerical value of physical quantities of course depend on the coordinates) but must agree on the physical law: the theory must be generally covariant. This is precisely the basic property of manifolds studied in differential geometry; the caveat is that one (an expert mathematician at the beginning of the last century) was used to require a local Euclidean spacetime, and therefore the spacetime would have had the structure of a Riemannian manifold; in GR one requires that the spacetime locally reduces to Minkowski space, and so the first, mathematical realisation of the Equivalence Principle is that \textit{the spacetime structure is described by a Lorentzian manifold}. Differential geometry is the right language for gravity because it expresses its equations using tensors, which are objects that have well-defined transformation laws under change of coordinates and admits a description in a coordinate-independent way.
   
    The second consequence is that everything must be locally determined, and, moreover, it must be dynamical; if something were non-dynamical, it would define a preferred frame of reference. In particular, even the background, that is, spacetime itself must evolve according to some dynamical equation. Therefore, both matter and geometry must satisfy partial differential equations with a well-posed initial value problem, so that they are uniquely determined from their initial data within a suitable domain of dependence.
   
   Implementing these two mathematical requirements in a physical theory is not an easy task. It took Einstein eight years, from 1907 to 1915\footnote{and, actually, it took \textit{everyone} eight years: many, most notably the mathematician David Hilbert, looked for the unification of Newton's gravity with Special Relativity; Einstein had been the fastest.} to find that gravity is an effect of geometry itself, the expression of the dynamical deformation of spacetime in the presence of matter. Einstein finally published the field equations of gravity in November, 1915 \cite{Einstein15}. They are so beautiful they can be written on a t-shirt:
   
   \begin{equation} \label{Einstein-equations}
   R_{\mu \nu} - \frac{1}{2}R g_{\mu \nu} = 8 \pi T_{\mu \nu} \ .
   \end{equation}
$R_{\mu \nu}$ is the Ricci tensor, $R$ is the Ricci scalar or Ricci curvature, $g_{\mu \nu} $ is the metric which encodes the gravitational field, and $T_{\mu \nu}$ is the stress-energy tensor. They are actually incomplete: we do not include the cosmological constant term, $\Lambda g_{\mu \nu}$, on the left hand side. This is because we will mainly deal with black holes, whose core properties are captured by a spacetime without cosmological constant. Although black hole solutions with a nonvanishing cosmological constant are known and well-studied, for simplicity we will work with $\Lambda = 0$ only.
    
    In this short introduction we will only review the main tools to deal with black holes and cosmological spacetimes. Our treatment is by no means self-contained nor comprehensive of the many profound directions in which the mathematical theory of GR developed; we will mainly follow \cite{AAQFT15} and \cite{Wald84} for the treatment of globally hyperbolic spacetimes and \cite{Poisson09} for the discussion on geodesic congruences and on hypersurfaces. We introduce the global structure of a spacetime in which one can pose a well-defined Cauchy problem, that is, in which it is possible to make physical predictions on the evolution of a system. This lead to the notion of globally hyperbolic spacetimes and of Green hyperbolic operators. We discuss the Raychaudhuri equation, that determines the evolution of a family of geodesics. We briefly explain how one can visualise the global structure of a globally hyperbolic spacetime using Penrose diagrams, a way to map an infinite spacetime in a finite diagram. We show how to apply these tools discussing the properties of event horizons and apparent horizons.
   
    \section{Globally Hyperbolic Spacetimes} \label{sec:globally-hyperbolic-spaces}

    In this section we discuss the properties of spacetimes in which it is possible to set a well-posed initial value problem. This condition is an \textit{a priori} requirement on any physically sensible spacetime, because it permits to make physical predictions on the evolution of some initial values through partial differential equations. The geometric notions are therefore independent on the particular law of gravitation one chooses, and we will not make use of Einstein's equations in this section. However, we will impose the \textit{principles} of GR, general covariance and the Equivalence Principle, as they are necessary requirements for any generally covariant theory of gravity. Moreover, throughout the thesis we will work assuming that the connection on the spacetime is metric, that is, the covariant derivative of the metric vanishes: $\nabla_{\mu} g^{\mu \nu} = 0$. This is always assumed in the context of GR, so there is no loss of generality.
    
    The starting point is the definition of a manifold. As we said, spacetime in GR should be a topological space which locally resembles the Minkowski spacetime. This is the definition of a \textit{Lorentzian manifold}. The manifold must be equipped with a Lorentzian metric, with signature $(-,+,+,+)$. The technical definition assumes some more additional properties: it must be Hausdorff, second countable, connected, time orientable, and smooth. We will call such a manifold a spacetime, denoted $(\mathscr M, g)$.
    
    We assume a connected spacetime for simplicity; on one hand, we will deal with connected spacetimes only, and, on the other, if the spacetime is not connected one can always consider our definitions in each of its connected parts. The spacetime is time orientable if it admits a global vector which is everywhere time-like; manifolds as the Klein bottle are not admitted. A time orientable spacetime is necessary to avoid discontinuity in the "flow of time", that is, such pathologies as the inversion of the direction of time in some adjacent regions.
    
    The Lorentzian character of the metric $g$ implies that every spacetime comes with a causal structure, determined by the union of the light cones defined at each point. Given a point $p \in \mathscr M$, we denote its \textit{causal} past (future) $J^-(p) \ (J^+(p))$ the set of all points separated by a time-like or a light-like interval from $p$, and which lie in the past (future) of $p$. We further call the \textit{chronological} past (future) $I^-(p)$ ($I^+(p)$) the set of all points separated by a time-like interval from $p$, and which lie in the past (future) of $p$. The fact that the spacetime is time orientable means that, taken two close points, one can smoothly cross their causal futures. Therefore it is possible to define the causal future (past) of a region $\Sigma \subset \mathscr M$ as the union of the causal future (past) of its points. A smooth curve $\gamma : I \subset \mathbb R \to \mathscr M $ is said to be time-like, space-like, or light-like if its tangent vector is time-like, space-like, or light-like; a causal curve is a curve either time- or light-like.
    
    The causal structure allows us to find the conditions that a spacetime must satisfy to avoid breakdowns in predictability, that are, situations in which a suitable set of initial data fails to predict the evolution of some physical quantity (it is clear that 'suitable' is a key requirement here: nobody expects to predict the trajectory of a body from the number of angels dancing on the head of a pin, but we do expect that given, say, the initial position and velocity of a point particle, one can predict its trajectory with deterministic certainty, if quantum effects are negligible. We will specify what we mean with "suitable" in theorem \ref{th:Cauchy-problem}). It turns out that the worst thing that can happen in GR are closed time-like curves, in which one always moves forward in time until she hits her own past, causing all sorts of time-travel paradoxes. We will therefore add one more property to find the physically reasonable spacetimes. We will call a set $\Sigma \subset \mathscr M$ \textit{achronal} if each time-like curve in $\mathscr M$ interstects $\Sigma$ at most once, and we will define the future ($+$) or past ($-$) \textit{domain of dependence} $D^\pm(\Sigma)$ the set of all points $p \in \mathscr M$ such that an inextensible causal curve through $p$ intersects $\Sigma$ in the past (future). Then $\Sigma$ is a \textit{Cauchy hypersurface} if it is closed an achronal. Finally, a \textit{globally hyperbolic spacetime} is a spacetime which admits a Cauchy hypersurface, that it, there exists a closed achronal subset $\Sigma$ such that $D^+(\Sigma) \cup D^-(\Sigma) = \mathscr M$. The set  $D^+(\Sigma) \cup D^-(\Sigma) $ is called \textit{causal development} of $\Sigma$.
    
    Such a definition lets one realise what we required on physical grounds in the opening paragraph: in fact, the Cauchy surface $\Sigma$ is the surface on which one gives the initial data of some physical quantity; the dynamical evolution through partial differential equations then lets one find the value of such a quantity in the causal development of the Cauchy surface. Note that the definition also rules out closed time-like curves. Einstein would be relieved; in fact, for the 70th birthday of Einstein, G\"odel gave him as a gift a solution of GR which admits precisely closed causal curves as a kind of time-travel machines. It was the first time he doubted of the correctness of his theory, Einstein said \cite{Holt05}.
    
    Although global hyperbolicity is usually assumed as a necessary requirement, it is not known if it can hold on any physically reasonable spacetime, like spacetimes sourced by some matter content which is realisable in the Universe. In fact, it is known that global hyperbolicity would fail in the presence of the so-called naked singularities, singularities which are not hidden by an event horizon, and naked singularities are realisable in ordinary spacetimes such as the Kerr family of metrics, describing a rotating black hole. To avoid naked singularities one must in turn assume the \textit{cosmic censorship hypothesis}, but there are examples in which at least some versions of the hypothesis fail. That being said, we will deal only with globally hyperbolic spacetimes, keeping Physics safe for now from the breakdown of predictability.
    
    Globally hyperbolic spacetimes can be characterised in a more practical way, proving the following theorem (see \cite{Bar07}):
    
    \begin{theorem}
    The following facts are equivalent:
    \begin{enumerate}
        \item $(\mathscr M, g) $ is globally hyperbolic;
        \item There exists no closed causal curve and $J^+(p) \cap J^-(q) $ is either compact or empty $\forall p,q \in \mathscr M $
        \item The metric can be written in the form $\dd s^2 = - \beta \dd t^2 + h(t)$, where $\beta$ is some smooth function of $t$ and $h(t)$ is a family of 3D Riemannian metrics.
    \end{enumerate}
    \end{theorem}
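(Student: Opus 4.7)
The plan is to establish the theorem by the cyclic chain of implications $(i) \Rightarrow (ii) \Rightarrow (iii) \Rightarrow (i)$, since the cyclic structure lets each step invoke the preceding hypothesis and avoids duplicating the hardest construction.

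For $(i) \Rightarrow (ii)$, I would first note that a closed causal curve is incompatible with global hyperbolicity: such a curve cannot lie entirely in $D^+(\Sigma) \cup D^-(\Sigma)$ for a Cauchy hypersurface $\Sigma$ without either failing to meet $\Sigma$ or meeting it in more than one point, contradicting achronality. To obtain compactness of $J^+(p) \cap J^-(q)$, I would invoke the strong causality implied by global hyperbolicity and apply the limit curve lemma: any sequence $\{x_n\} \subset J^+(p) \cap J^-(q)$ sits on some causal curve from $p$ to $q$, and in a strongly causal spacetime admitting a Cauchy surface one can extract a convergent subsequence whose limit is again causally between $p$ and $q$, so the set is sequentially compact and hence compact.

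For $(iii) \Rightarrow (i)$, the easiest direction, I would use the explicit splitting $\dd s^2 = -\beta \, \dd t^2 + h(t)$ to show that each level set $\Sigma_{t_0} := \{t = t_0\}$ is a Cauchy hypersurface. Along any causal curve $\gamma$ with tangent $\dot\gamma = \dot t \, \partial_t + \vec v$ one has $g(\dot\gamma,\dot\gamma) = -\beta\, \dot t^{\,2} + h(\vec v,\vec v) \leq 0$, so $\dot t$ cannot vanish and $t$ is strictly monotonic along $\gamma$; this yields achronality of each $\Sigma_{t_0}$, while inextensibility together with the range of $t$ covering $\mathbb R$ ensures that every inextensible causal curve meets $\Sigma_{t_0}$ exactly once, so $D^+(\Sigma_{t_0}) \cup D^-(\Sigma_{t_0}) = \mathscr M$.

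The hard part is $(ii) \Rightarrow (iii)$, where I would follow the Geroch--Bernal--Sánchez construction of a smooth Cauchy temporal function. Choose a finite Borel measure $\mu$ on $\mathscr M$ equivalent to the spacetime volume form and set $\tau(p) := \log\bigl(\mu(I^-(p)) / \mu(I^+(p))\bigr)$. The compactness of causal diamonds from $(ii)$ guarantees that $\mu(I^{\pm}(p))$ varies continuously with $p$, and the absence of closed causal curves makes $\tau$ strictly increasing along future-directed causal curves; one then checks that the level sets $\{\tau = \textrm{const}\}$ are Cauchy hypersurfaces. I expect the main obstacle to be the smoothing step: the naive $\tau$ is only continuous, and upgrading the resulting topological product to the smooth orthogonal splitting $\dd s^2 = -\beta \, \dd t^2 + h(t)$ requires the non-trivial Bernal--Sánchez approximation argument, which replaces $\tau$ by a smooth function whose gradient is everywhere timelike; one then uses its flow to identify $\mathscr M$ diffeomorphically with $\mathbb R \times \Sigma$ and reads off the claimed form of the metric from the orthogonal decomposition of $g$ with respect to $\partial_t$.
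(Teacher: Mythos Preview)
The paper does not prove this theorem; it merely states it and refers the reader to \cite{Bar07}. There is therefore no ``paper's own proof'' to compare against. Your proposal is a reasonable sketch of the standard argument, combining Geroch's volume-function construction for the Cauchy temporal function with the Bernal--S\'anchez smoothing to obtain the orthogonal splitting; this is exactly what one finds in the literature the paper cites.

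One caveat on your $(iii)\Rightarrow(i)$: the monotonicity of $t$ along causal curves does not by itself guarantee that an inextensible causal curve attains \emph{all} values of $t$; you need an additional argument (or a hypothesis on the range of $t$ and completeness of the slices) to rule out the curve escaping to the boundary at a finite value of $t$. As stated in the paper the condition $(iii)$ is in fact slightly imprecise on this point, so be aware that the implication requires the level sets $\{t=\mathrm{const}\}$ to actually be Cauchy, which is part of the content of the Bernal--S\'anchez theorem rather than an automatic consequence of the metric form alone.
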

    The last property in particular is the most practical tool to determine if the spacetime is globally hyperbolic. It is immediate in fact to notice that Schwarzschild, Vaidya and FLRW spacetimes are all globally hyperbolic, the spacetimes we will deal with in chapter \ref{ch:new-results}.
    
    Globally hyperbolic spacetime are the ideal stage for the dynamics of physical systems. However, to solve a well-posed initial value problem we need not only to select a suitable Cauchy surface on which to assign initial data, but also the class of differential operators useful to discuss the equations that arise in field theory. This class of operators are known as the \textit{normally hyperbolic operators}. In order to completely understand this notion we would need to introduce the structure of vector bundles and its properties, but there exists a practical criterion to identify such operators, and we will stick to it here to avoid any digression outside the scope of this section. Moreover, we will consider the definition only for scalar functions, since we will deal with the scalar field only.
    
    We take the space of compactly supported, smooth functions, denoted $C^\infty_0(\mathscr M) \ni \psi : \mathscr R \subset \mathscr M \to \mathbb C $. A normally hyperbolic operator $P$ is of the form
    
    \begin{equation}
    P\psi = g^{\mu \nu} \partial_\mu \partial_\nu \psi + A^\mu \partial_\mu \psi + A \psi
    \end{equation}
for some set of smooth functions $A, \ A^\mu, \ \mu = 0,...,d$ where $d$ is the dimension of the spacetime. This is just a generalisation of the d'Alembert operator $\square = g^{\mu \nu}\nabla_\mu \nabla_\nu$. Then, the following theorem holds:

\begin{theorem} \label{th:Cauchy-problem}
Let $(\mathscr M, g)$ be a globally hyperbolic spacetime with some Cauchy surface $\Sigma$ with unit, future-directed, normal vector $n$. Let $P$ be a normally hyperbolic operator. Let $j \in C^\infty(\mathscr M)$, $\psi \in C^\infty(\mathscr M), \ f_0, \ f_1 \in C^\infty_0(\Sigma) $. Then, the following initial value problem admits a unique solution in $\mathscr M$:
\begin{align}
P\psi = j \\
\eval{\psi}_{\Sigma} = f_0 \\
\eval{n^\mu \nabla_\mu \psi}_{\Sigma} = f_1
\end{align}
\end{theorem}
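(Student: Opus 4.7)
The plan is to split the proof into three standard ingredients: uniqueness via an energy estimate, local existence near $\Sigma$, and a global patching argument that uses the geometric characterization of global hyperbolicity already established.

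First I would prove \emph{uniqueness}. By linearity of $P$, it suffices to show that $P\psi = 0$ together with $\psi|_\Sigma = 0$ and $n^\mu\nabla_\mu\psi|_\Sigma = 0$ forces $\psi \equiv 0$. Here I would exploit the metric splitting $\dd s^2 = -\beta\,\dd t^2 + h(t)$ guaranteed by condition (iii) of the characterization theorem, foliating $\mathscr M$ by Cauchy surfaces $\Sigma_t$. For each compact $K \subset \Sigma$ I would define a local energy
\begin{equation*}
E_K(t) \;=\; \int_{J^+(K)\cap\Sigma_t} \bigl( (\partial_t\psi)^2 + h^{ij}\partial_i\psi\,\partial_j\psi + \psi^2 \bigr)\,\dd\mu_{h(t)},
\end{equation*}
differentiate in $t$, use $P\psi = 0$ to trade second time derivatives for spatial ones, and obtain a Grönwall inequality $\dot E_K(t) \le C(t)\,E_K(t)$. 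Since $E_K(0)=0$, this gives $E_K \equiv 0$ in the causal development of $K$, hence $\psi = 0$ on $D(\Sigma) = \mathscr M$. The boundary terms in the energy identity are controlled thanks to the compactness of $J^+(p)\cap J^-(q)$ from property (ii), which is what makes this estimate close up.

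Next I would prove \emph{local existence} in a tubular neighbourhood $(-\varepsilon,\varepsilon)\times\Sigma$ of the Cauchy surface. With the splitting coordinates, $P\psi=j$ becomes a strictly hyperbolic second-order PDE in $t$ with smooth coefficients, to which one can apply classical results: either a Galerkin/spectral construction in $H^s(\Sigma_t)$ with energy bounds, or, more in the spirit of algebraic QFT, the construction of a formal fundamental solution via Riesz distributions followed by a Hadamard parametrix correction. One obtains retarded and advanced local Green's operators $G^\pm_{\text{loc}}$ whose kernels are supported in $J^\pm$, and sets $\psi = G^+_{\text{loc}}(j) + \Psi_0$, with $\Psi_0$ the unique solution of the homogeneous problem with the prescribed Cauchy data, built explicitly from $f_0,\,f_1$ via the same Green's functions.

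Finally I would \emph{globalize}. Given the local solution on $(-\varepsilon,\varepsilon)\times\Sigma$, I would iterate: on each later slice $\Sigma_{t_0}$ the restriction of $\psi$ and $n^\mu\nabla_\mu\psi$ provides fresh Cauchy data for the same operator $P$, and local existence extends the solution by a further time-step. The crucial point is that finite propagation speed, encoded again in the compactness of $J^+(p)\cap J^-(q)$, guarantees that the length of the time-step can be chosen uniformly on every compact set of $\Sigma$, so the iteration exhausts all of $\mathscr M$ without the solution blowing up in finite coordinate time. Uniqueness across overlapping neighbourhoods is automatic from the previous step. The main obstacle, in my view, is precisely this passage from the classical local PDE result to a globally defined $\psi$ on a manifold with possibly non-compact Cauchy surfaces: one has to combine the geometric input (the splitting theorem and the compactness of causal diamonds) with the analytic input (the energy identity) in a uniform way across the foliation. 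Everything else—linearity, smoothness of the coefficients, and the compact support of $f_0, f_1$—is comparatively routine.
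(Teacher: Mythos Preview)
The paper does not actually prove this theorem: it is stated as a known result in the section on globally hyperbolic spacetimes, immediately after the characterisation theorem for global hyperbolicity, and the surrounding discussion attributes such facts to the reference \cite{Bar07} (B\"ar--Ginoux--Pf\"affle). No argument is given in the text; the theorem functions purely as background for the construction of the classical field theory in the following chapter.

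Your outline is essentially the standard proof one finds in that reference: uniqueness via an energy/Gr\"onwall estimate on a foliation, local existence through a Hadamard/Riesz parametrix yielding local advanced and retarded Green's operators, and a globalisation step exploiting the compactness of causal diamonds and the splitting $\mathscr M \cong \mathbb R \times \Sigma$. As a high-level roadmap it is sound. Be aware, though, that several of the steps you label ``routine'' carry real weight in a full proof: the convergence of the Hadamard series and the passage from a formal parametrix to a genuine local fundamental solution require careful asymptotic analysis; the uniformity of the time-step on non-compact $\Sigma$ is usually handled not by a direct bound but by a covering argument using relatively compact, causally compatible subsets; and the energy integral you wrote down needs the lower-order terms of $P$ (the $A^\mu\partial_\mu + A$ piece) absorbed into the Gr\"onwall constant. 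None of this is wrong in your sketch, but in a thesis that otherwise cites this result, it would be more honest either to flesh these points out or simply to cite B\"ar--Ginoux--Pf\"affle, as the paper does.
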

We remark that all the definitions can be restricted to a subregion $\mathscr R \subseteq \mathscr M$, using a \textit{partial Cauchy hypersurface} $\Sigma$ such that its domain of dependence covers only a subregion $\mathscr R = D^+(\Sigma) \cup D^-(\Sigma)$.

 \section{Gauss-Stokes Theorem} \label{sec:gauss-stokes-theorem}

    \subsection{Differential Forms} \label{ssec:differential-forms}

The main goal of this section is to discuss the generalisation of Gauss theorem to the general spacetimes required in GR. This theorem will be one of the main tools in the discussion of relative entropy for black holes in chapter \ref{ch:new-results}, when we will use it to rewrite the contribution to relative entropy at past infinity as a surface integral on the black hole horizon.
 
In order to discuss the Gauss-Stokes Theorem, we need first to understand the integration over general manifolds. To do so, a brief digression on differential forms is in order.

A differential p-form is a totally antisymmetric covariant tensor, that is, $\omega$ is a p-form if

\begin{equation}\label{def:p-forms}
\omega_{a_1...a_p} = \omega_{[a_1...a_p]} \ .
\end{equation}
The squared parentheses denotes the total antisymmetrization of the enclosed indices.

In a manifold, at each point $x$ is associated the vector space of all p-forms, $\Lambda^p_x$. To combine a p-form with a q-form, we introduce an antisymmetric product, $\wedge : \Lambda^p_x \cross \Lambda^q_x \to \Lambda^{p+q}_x $ defined as

\begin{equation}
\omega_{a_1...a_p} \wedge \mu_{b_1...b_q} = (\omega \wedge \mu)_{a_1...a_p b_1...b_q} = \frac{(p+q)!}{p!q!} \omega_{[a_1...a_p}\mu_{b_1...b_q]} \ .
\end{equation}
We denote the space of all forms at each point with $\Lambda_x = \bigoplus_{p=0}^\infty \Lambda^p_x$, and the space of differential p-form fields over the manifold with $\Lambda^p$. This space has dimension $\dim \Lambda_p = \frac{d!}{p!(d-p)!}$.

p-forms are a generalisation of linear forms, called, in this broader context, 1-forms. Linear forms are nothing but covariant vectors; they are in duality with the contravariant vectors (or vectors \textit{tout-court}), because the metric defines a natural pairing between the two. In particular, given a set of coordinates $(x^1,..., x^d)$, the vector basis is given by $\{ \partial_\mu \}$, and we can find the dual basis in the 1-form space, denoted $\{ \dd x^\mu \} $, imposing the condition

\begin{equation} \label{eq:duality-vectors-forms}
g(\partial_\mu, \dd x^\nu) = \delta^\nu_\mu  \ .
\end{equation}
 The basis $\{ \dd x^\mu \} $ is called \textit{natural basis}. Therefore, any 1-form can be written as $\omega = \omega_a \dd x^a$. Now, a general p-form can be expressed as
 
\begin{equation}
\omega = \omega_{[a_1...a_p]} \dd x^{a_1} ... \dd x^{a_p} = \frac{1}{p!} \omega_{a_1...a_p} \dd x^{a_1} \wedge ... \wedge \dd x^{a_p} \ .
\end{equation}

The contraction of vectors and p-forms can be written in two, equivalent ways: if one want to emphasize that any vector $\xi$ has a natural action on p-forms, one writes \eqref{eq:duality-vectors-forms} as the map $i_\xi : \Lambda^p \ni \omega \mapsto i_\xi(\omega)_{a_2...a_p} = \xi^{a_1}\omega_{a_1...a_p} $, or, vice versa, if one sees it as the action of a p-form over a vector, one writes $\omega(\xi)$.

Among the p-forms, d-forms (forms of the same dimension of the manifold) are special because they are 1-dimensional, and therefore every d-form is proportional to a given one, which is the basis of the space $\Lambda^d$. In d dimensions, one already has a totally antisymmetric matrix, the Levi-Civita symbol $\epsilon^{a_1...a_d} $. However, the Levi-Civita symbol is not a tensor, since it does not transform covariantly for coordinate transformations. If the manifold is equipped with a metric (which is always our case), one can use the Levi-Civita symbol to defines a canonical basis for d-forms, called the \textit{volume form}:

\begin{equation}
\Omega = \frac{1}{d!} \sqrt{-g} \epsilon_{a_1...a_d} \ \dd x^{a_1} \wedge ... \wedge \dd x^{a_d} = \frac{1}{d!} \sqrt{-g} \ \dd x^{a_1} \wedge ... \wedge \dd x^{a_d} \ .
\end{equation}
It can be proven (see \cite{Wald84}) that this d-form arises simply imposing $\Omega^{a_1...a_d} \Omega_{a_1...a_d} = - d!$.

Before discussing the integration, we introduce a last operation on p-forms, the \textit{exterior differential} $\dd$. This is a map $\dd : \Lambda^p \to \Lambda^{p+1}, \ \omega \mapsto \dd \omega $, which satisfies three properties:
\begin{enumerate}
\item The action on a $0$-form (i.e., a function) $f$ is defined as $\dd f = \partial_\mu f \dd x^\mu $;
\item $\dd^2 \omega = 0$, the \textit{Poincaré Lemma};
\item $\dd (\omega \wedge \mu) = \dd \omega \wedge \mu + (-1)^p \omega \wedge \dd \mu $, where $p$ is the rank of $\omega$ (that is, the number of indices of $\omega$).
\end{enumerate}
From these properties, one can see that the action of the exterior differential on a general p-form $\omega$ is given by

\begin{equation}
\dd \omega = \frac{1}{p!} \nabla_\mu \omega_{a_1...a_p} \dd x^\mu \dd x^{a_1} \wedge ... \wedge \dd x^{a_p} \ .
\end{equation}
The components are $(\dd \omega)_{\mu a_1... a_p} = (p+1) \nabla_{[\mu} \omega_{a_1...a_p]} $. The second defining property is called Poincaré lemma because one could proceed in the opposite way, defining the action of the exterior differential on the general p-form and then prove the above properties.

Finally, we are able to integrate d-forms in a d-dimensional manifold. Suppose we restrict to a subregion $\mathscr R$ in which a chart $\psi : \mathscr R \ni x \mapsto (x^1,..., x^d) $ is defined everywhere. The integration of a d-form is simply

\begin{equation} \label{def:integration-forms}
\int_{\mathscr R} \omega = \int_{\psi(\mathscr R)} \omega_{a_1...a_d} \dd x^{a_1} ... \dd x^{a_d} \ .
\end{equation}
The right-hand side is interpreted as the usual Lebsgue integral in $\mathbb R^d$.

The integral over $\mathscr M$ is then accomplished by integrating over each chart of the atlas, taking care of the overlapping regions so to not integrate twice in the same region. 

The integration of a function is defined associating to the function a suitable d-form. The obvious candidate is the volume form; the integration of a function is then defined as
\begin{equation} \label{def:integration-functions}
\int_{\mathscr R} f \Omega = \int_{\psi(\mathscr R)} f \sqrt{-g} \dd x^1 ... \dd x^n \ ,
\end{equation}
from which we see that we can define the volume element of the manifold as $\dd \text{vol}_{\mathscr M} = \sqrt{-g} \dd x^1...\dd x^d$. Actually, the determinant of the metric guarantees that the integrand transforms covariantly under coordinate change.

 \subsection{Hypersurfaces} \label{ssec:hypersurfaces}
  
  A hypersurface $\Sigma$ is defined giving an equation that constraints the variables,
    \begin{equation} \label{eq:hypersurface}
  S(x) = 0 \ ,
  \end{equation}
or giving parametric equations for the relationship between global coordinates $x^\alpha$ and the intrinsic coordinates $y^a$ on the hypersurface,
\begin{equation}
x^\alpha = x^\alpha(y^a)  \ .
\end{equation}
We will use the Latin alphabet for indices concerning the intrinsic geometry of the hypersurface, $a = 1,2,3$.

As the function $S$ is constant along the hypersurface, the vector $\partial_\alpha S$ is normal to it. If it is not null, then it is possible to introduce a unit normal vector,
\begin{equation}
n^\alpha = \frac{\epsilon}{\abs{g^{\alpha \beta} \partial_\alpha S \partial_\beta S}^{\frac{1}{2}}} g^{\alpha \beta} \partial_\beta S \ ,
\end{equation}
where either $\epsilon = -1 $ if the vector is time-like, and then the hypersurface is called \textit{space-like}, or $\epsilon = 1$ if the vector is space-like, and the hypersurface is called \textit{time-like}.

Null hypersurfaces (that are, hypersurfaces for which the vector $\partial_\alpha S$ is null) do not admit a unit normal vector. Nevertheless the vector $\tilde n^\alpha = g^{\alpha \beta} \partial_\beta S$ still exists, and its sign is chosen so that it is future-directed. As $\tilde n^\alpha \tilde n_\alpha = 0$ in this case, the vector $\tilde n^\alpha$ is also tangent to the hypersurface; in fact, by direct computation, one can see that
\begin{equation}
\tilde n^\alpha \nabla_\alpha \tilde n_\beta = \partial^\alpha S \nabla_\alpha \partial_\beta S =
\partial^\alpha S \nabla_\beta \partial_\alpha S = \frac{1}{2} \nabla_\beta \partial^\alpha S \partial_\alpha S \ ,
\end{equation}
but since $\partial^\alpha S \partial_\alpha S = 0 $ everywhere on the hypersurface, its gradient must be directed along $\tilde n^\alpha$, and therefore the normal vector satisfies the geodesic equation
\begin{equation}
\tilde n^\alpha \nabla_\alpha \tilde n_\beta = \kappa \tilde n_\beta \ .
\end{equation}
The geodesics to which $\tilde n$ is tangent are called \textit{generators} of the hypersurface. Now, it is worth a little digression on the geodesic equation and the parameter of geodesics. In fact, $\kappa$, called the \textit{inaffinity function}, in general is not vanishing, but it can be set to zero choosing a different parameter along geodesics, writing
\begin{equation}
n^\alpha = \dv{\lambda} = e^{-\gamma} n^\alpha = \dv{\tilde \lambda}{ \lambda} \dv{\tilde \lambda} \ ,
\end{equation}
so that $n^\alpha \nabla_\alpha n_\beta = 0$. $\lambda$ is called the \textit{affine parameter}. Imposing this condition on $n$ one finds that
\begin{equation}
    0 = (n^\mu \nabla_\mu n^\nu) = 
    \dv{\lambda}\big(e^{- \gamma}\tilde n^\nu \big) = 
    - \dv{\gamma}{\lambda} n^\nu + e^{- \gamma}\dv{\tilde \lambda}{\lambda}\dv{\tilde n^\nu}{\tilde \lambda} 
    = n^\nu (-  \dv{\gamma}{\lambda} + \dv{\tilde \lambda}{\lambda}\kappa ) \ ,
\end{equation}
so that we find an equation for $\gamma$,
\begin{equation}
\dv{\gamma}{\tilde \lambda} = \kappa \ ,
\end{equation}
from which one determines the equation for the affine parameter, $\dv{\lambda}{\tilde \lambda} = e^\gamma $.

When $\Sigma$ is null, it is natural to set $\lambda$ as one of the intrinsic coordinates on the hypersurface, so we will write $y^a = (\lambda, \theta^A)$, with $A = 1,2$: $\lambda$ is the coordinate along each generator of the hypersurface, while $\theta^A$ move between generators.

As the hypersurface is a differential manifold in its own right, one can make it a metric space defining a metric to measure distances along the hypersurface. In fact, the metric of the embedding spacetime induces a natural metric on the hypersurface, so that the line element for infinitesimal displacements on the hypersurface are measured, as in the embedding spacetime, by
\begin{equation}
\dd s^2_\Sigma = g_{\alpha \beta} \dd x^\alpha \dd x^\beta \ .
\end{equation}

We can introduce the vectors $e^\alpha_a = \pdv{x^\alpha}{y^a}$, which are tangent to the hypersurface ($n_\alpha e^\alpha_a = 0 $), so that the line element becomes
\begin{equation}
\dd s^2_\Sigma = g_{\alpha \beta} \pdv{x^\alpha}{y^a} \pdv{x^\beta}{y^b} \dd y^a \dd y^b \ .
\end{equation} 

In the time- and space-like case, this defines the induced metric on the hypersurface as
\begin{equation}
h_{ab} = g_{\alpha \beta} \pdv{x^\alpha}{y^a} \pdv{x^\beta}{y^b} \ .
\end{equation}
In the null case, however, a simplification occurs if we use the coordinates $y^a = (\lambda, \theta^A)$. Since the vector $e^\alpha_\lambda = \pdv{x^\alpha}{\lambda} = n^\alpha$ is null, $h_{\lambda \lambda} = g_{\alpha \beta} n^\alpha n^\beta = 0$, and moreover $h_{\lambda A} = 0$ because by construction $ n_\alpha e^\alpha_a = 0$. In this case $h_{ab}$ is degenerate along the $\lambda$ direction, and the nondegenerate part of the induced metric is actually two-dimensional,
\begin{equation} \label{eq:surface-element-null}
\dd s^2_\Sigma = \sigma_{AB} \dd \theta^A \dd \theta^B \ ,
\end{equation} 
with 
\begin{equation}
\sigma_{AB} = g_{\alpha \beta} \pdv{x^\alpha}{\theta^A}\pdv{x^\beta}{\theta^B} \ .
\end{equation}

\subsection{Gauss-Stokes Theorem} \label{ssec:gauss-stokes-theorem}
      
We can use the general definition of integration of a d-form in a d-dimensional spacetime \eqref{def:integration-forms} to define the integration of p-forms on a p-dimensional hypersurface $\Sigma$. We restrict to a four-dimensional spacetime with an embedded three-dimensional hypersurface. The hypersurface has a natural structure of differential manifold induced by the embedding spacetime, and in particular there is an induced metric $h_{\alpha \beta}$ (or $\sigma_{AB}$ in the null case). If the hypersurface is not null, it is immediate to introduce the surface element
\begin{equation} \label{eq:surface-element-not-null}
\dd \Sigma = \sqrt{h} \dd^3 y
\end{equation}
which is just \eqref{def:integration-forms} expressed for the hypersurface viewed as a manifold. However, the hypersurface has also a natural \textit{directed} surface element, given by
\begin{equation}
\dd \Sigma^\alpha = n^\alpha \sqrt h \dd^3 y \ .
\end{equation}

If the hypersurface is null we can introduce an oriented surface element, using coordinates $ (\lambda, \theta^A) $, where $A = 1,2$, as
\begin{equation}
\dd \Sigma^\alpha = n^\alpha \sqrt{\sigma} \dd \lambda \dd^2 \theta  \ .
\end{equation}

We are finally in the position to state the Gauss-Stokes theorem. We will first use the abstract language of differential forms, and then we will re-express it in the abstract index notation to make its meaning more concrete. Consider an orientable manifold $\mathscr M$ with boundary. Such a manifold is locally isomorphic not to $\mathbb R^d$ but to half of $\mathbb R^d$, the subset with $x^0 \leq 0$. The boundary $\partial \mathscr M$, is the set of points in $\mathscr M$ mapped to $x^0 = 0$ by local charts. The orientation of $\mathscr M$ induces an orientation in $\partial \mathscr M$, and therefore the boundary is an orientable manifold in itself. The Gauss-Stokes Theorem states that, for a (d-1)-form field $\omega$ on $\mathscr M$,
\begin{theorem}{Gauss-Stokes Theorem} \label{th:gauss-stokes-theorem}
\begin{equation}
\int_{\mathscr M} \dd \omega = \int_{\partial \mathscr M} \omega \ .
\end{equation}
\end{theorem}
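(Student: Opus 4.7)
The plan is to reduce the global statement to a local computation in a coordinate chart and then invoke the fundamental theorem of calculus. The essential structural tool is a partition of unity subordinate to an oriented atlas of $\mathscr M$. Fix a locally finite open cover $\{ U_\alpha \}$ such that each $U_\alpha$ is either an interior chart, with image in $\mathbb R^d$, or a boundary chart, with image in the half-space $\mathbb H^d = \{ x^0 \leq 0 \}$ in such a way that $U_\alpha \cap \partial \mathscr M$ is mapped to $\{ x^0 = 0 \}$; all chart transitions preserve the chosen orientation. Let $\{ \chi_\alpha \}$ be a smooth partition of unity subordinate to this cover. Writing $\omega = \sum_\alpha \chi_\alpha \omega$, the linearity of $\dd$ and of the integral reduces both sides of the identity to a sum of terms of the form $\int_{\mathscr M} \dd(\chi_\alpha \omega)$ and $\int_{\partial \mathscr M} \chi_\alpha \omega$, each supported in a single chart. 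It therefore suffices to prove the theorem for a $(d-1)$-form $\eta$ compactly supported in one chart $U_\alpha$.

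The next step is the local calculation. In the chart, write $\eta = \sum_{i} f_i \, \dd x^1 \wedge \cdots \wedge \widehat{\dd x^i} \wedge \cdots \wedge \dd x^d$, where the hat denotes omission and the $f_i$ are compactly supported smooth functions on either $\mathbb R^d$ or $\mathbb H^d$. A direct application of the defining properties of the exterior differential listed in subsection \ref{ssec:differential-forms} gives
\begin{equation*}
\dd \eta = \sum_{i=1}^{d} (-1)^{i-1} \partial_i f_i \, \dd x^1 \wedge \cdots \wedge \dd x^d.
\end{equation*}
Using the definition of integration \eqref{def:integration-forms}, the left-hand side becomes $\sum_i (-1)^{i-1} \int \partial_i f_i \, \dd x^1 \cdots \dd x^d$, and for each fixed $i$ one applies the fundamental theorem of calculus in the $x^i$ direction with the other variables held fixed. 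In the interior-chart case $\mathbb R^d$ all such integrals vanish because $f_i$ has compact support; and $\partial \mathscr M$ does not meet $U_\alpha$, so the identity holds trivially as $0 = 0$.

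In the boundary-chart case only the $i = 0$ term survives, since the $x^0$ direction is the only one in which the integration interval has an endpoint that is not pushed to $\pm \infty$ by compact support; one obtains
\begin{equation*}
\int_{\mathbb H^d} \dd \eta = \int_{\{x^0 = 0\}} f_0(0, x^1, \ldots, x^{d-1}) \, \dd x^1 \cdots \dd x^{d-1}.
\end{equation*}
The right-hand side is precisely the integral of $\eta$ over the boundary chart, provided the induced orientation on $\partial \mathscr M$ is the one that makes $\dd x^1 \wedge \cdots \wedge \dd x^{d-1}$ positive; this is the standard convention, in which the outward-pointing normal followed by an oriented basis of $\partial \mathscr M$ yields an oriented basis of $\mathscr M$.

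The main technical obstacle I expect is bookkeeping the orientations and sign conventions consistently: ensuring that the partition-of-unity decomposition does not introduce spurious boundary terms at interior chart overlaps, that the induced orientation on $\partial \mathscr M$ is compatible chart-by-chart with the global one, and that the sign $(-1)^{i-1}$ in the local expression for $\dd \eta$ combines correctly with the orientation convention so that the surviving $i=0$ term produces $+\int_{\partial \mathscr M} \omega$ rather than its negative. Once these conventions are fixed once and for all, the argument above reduces the theorem to the one-variable fundamental theorem of calculus, and summing over the partition of unity recovers the global identity.
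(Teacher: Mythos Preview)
Your argument is the standard partition-of-unity reduction to a half-space chart followed by the one-variable fundamental theorem of calculus, and it is correct. The only caveat is that in the paper's index conventions the half-space is $\{x^0 \le 0\}$, so the boundary coordinate is $x^0$ and the indices run $0,\ldots,d-1$; your write-up mixes a $0$-index with sums starting at $i=1$, but once the labelling is made consistent the computation goes through exactly as you describe, and the sign bookkeeping you flag is indeed the only subtle point.

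By contrast, the paper does not prove this theorem at all: it is stated as a fact from differential geometry and then merely \emph{translated} into index notation, by writing a top-degree form as $\nabla_\mu v^\mu$ times the volume form and identifying the pull-back to $\partial \mathscr M$ with $v^\mu n_\mu$ times the induced volume element. That passage is a reformulation, not an argument. So your proposal supplies a genuine proof where the paper offers none; what the paper's treatment buys instead is the concrete expression $\int_{\mathscr R} \nabla_\mu v^\mu \,\dd\text{vol} = \int_{\partial \mathscr R} v_\mu n^\mu \,\dd\Sigma$ that is actually used in the later chapters.
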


This theorem expresses in a unified language the fundamental theorem of calculus, Gauss', and Stokes' theorems, generalising them to the case of any differential manifold with boundary. However, to understand its significance for actual computations, we rewrite it in index notation. We consider a 4-dimensional submanifold $\mathscr R \subset \mathscr M$ with a boundary $\partial \mathscr R$, because we will apply the theorem to this case. A (d-1)-form $\omega$ can be obtained from the volume form $\Omega$ in $\mathscr R$ and any $C^\infty$-vector field $v^a$ by
\begin{equation}
\omega_{a_1 a_2 a_3} = \Omega_{b a_1 a_2 a_3} v^b \ .
\end{equation}
Its exterior differential is
\begin{equation}
(\dd \omega)_{c a_1 a_2 a_3} = 4 \nabla_{[c} ( \Omega_{b a_1 a_2 a_3]} v^b) = 4 \Omega_{b [a_1 a_2 a_3} \nabla_{c]} v^b \ .
\end{equation}
This is immediate because the connection is metric, and therefore $\nabla \Omega = 0$. However, since the right-hand side is a totally antisymmetric tensor of rank $d$ it must be proportional to the volume form,
\begin{equation}
4 \Omega_{b [a_1 a_2 a_3} \nabla_{c]} v^b  = f \Omega_{a_1 a_2 a_3 c} \ .
\end{equation}
To find the proportionality function $f$ we compute the contraction with $\Omega^{a_1 a_2 a_3 c}$ of both sides. The contraction of the left-hand side gives
\begin{equation}
4 \sqrt g \epsilon^{c a_1 a_2 a_3} \epsilon_{b [ a_1 a_2 a_3} \nabla_{c]} v^b =
4 \sqrt g \epsilon^{c a_1 a_2 a_3} \epsilon_{b a_1 a_2 a_3} \nabla_{c} v^b =
24 \ \sqrt g \ \delta^c_b \ \nabla_c v^b = 24 \sqrt g \nabla_b v^b \ .
\end{equation}
The factor of 6 comes from the contraction of the two $\epsilon$.
The contraction of the right-hand side gives
\begin{equation}
f \sqrt{g} \epsilon_{a_1 a_2 a_3 c} \epsilon^{a_1 a_2 a_3 c} = 24 f \sqrt g \ .
\end{equation}
Therefore, we find
\begin{equation}
(\dd \omega)_{c a_1 a_2 a_3} = \nabla_b v^b \Omega_{c a_1 a_2 a_3} \ ,
\end{equation}
and the left-hand side of the Gauss-Stokes theorem can be rewritten as
\begin{equation}
\int_{\mathscr R} \nabla_b v^b \Omega_{c a_1 a_2 a_3} = \int_{\mathscr R} \nabla_b v^b \sqrt{-g} \ \dd^4 x \ .\
\end{equation}

The right-hand side of the Gauss-Stokes theorem can be rewritten as well, noting that the volume form on $\mathscr M$ induces a volume form $\tilde \Omega$ on $\partial \mathscr M$. If we introduce the normal vector $n$ to $\partial \mathscr M$, we write
\begin{equation}
\frac{1}{4} \Omega_{a_1 a_2 a_3 a_4} = n_{[a_1} \tilde \Omega_{a_2 a_3 a_4]} \ ,
\end{equation}
and therefore $v^b \Omega_{b a_1 a_2 a_3} = v^b n_b \tilde \Omega_{a_1 a_2 a_3}$. The induced volume form $\tilde \Omega$ is defined with $\sqrt h$ in the time- and space-like case, and with $\sqrt \sigma$ in the null case. Therefore, the left-hand side becomes
\begin{equation}
\int_{\partial \mathscr R} v^b \Omega_{b a_1 a_2 a_3} = \int_{\partial \mathscr R} v^b n_b \sqrt h \dd^3 y
\end{equation}
in the time- and space-like case, while in the light-like case it becomes
\begin{equation}
\int_{\partial \mathscr R} v^b \Omega_{b a_1 a_2 a_3} = \int_{\partial \mathscr R} v^b n_b \sqrt \sigma \dd \lambda \dd^2 \theta \ .
\end{equation}
The Gauss-Stokes Theorem can finally be written in the more familiar form
\begin{equation}
\int_{\mathscr R} \nabla_\mu v^\mu \dd \text{vol}_{\mathscr R} = \int_{\partial \mathscr R} v_\mu n^\mu \dd \Sigma \ .
\end{equation}
This is the expression we will use in chapter \ref{ch:new-results}. In fact, if one is given a conserved current $\nabla_\mu j^\mu = 0$ in a region $\mathscr R$, one sees immediately from the Gauss-Stokes theorem that its flux across the boundary of the region vanishes.

    \section{Geodesic Congruences} \label{sec:geodesic-congruences}
    
    In a region $\mathscr R$, consider a family of geodesics that cross each point of the region once: this is a \textit{geodesic congruence}. It is best visualised as a bundle of wires held tight with a rubber band. We want to study the deviation of geodesics from one another as we move along the proper time of one geodesic of reference. In particular, we want to find an evolution equation for the \textit{deviation vector} $\xi$ between two neighbouring geodesics.
    The treatment of geodesic congruence is different in the two cases, whether the geodesics are null or not; in general, the null case is a bit more involved because the transverse spacetime is actually 2-dimensional, and therefore one must pay more care constructing vectors in the transverse directions. The space-like case is completely analogous to the time-like one, so we stick with the latter for definiteness.
    
    We start with the time-like case, defining the tangent vector to the geodesic congruence, $u$, that satisfies the geodesic equation $u^\mu \nabla_\mu u^\nu = 0$ and is normalized, $\norm{u}^2 = -1$. The deviation vector $\xi$ satisfies the two relations
    \begin{align}
    \xi^\alpha u^\beta g_{\alpha \beta} &= 0 \label{def-xi-1}\\
   \mathcal L_{u} \xi = 0 \Rightarrow \xi^\alpha \nabla_\alpha u^\beta &= u^\alpha \nabla_\alpha \xi^\beta \label{def-xi-2} \ ,
    \end{align}
where $\mathcal L_u$ is the Lie derivative along $u$. The first condition says that the deviation vector is everywhere orthogonal to the direction of geodesic evolution. Then, we introduce the transverse metric, $h_{\alpha \beta} = g_{\alpha \beta} + u_\alpha u_\beta $ and the tensor field $B_{\alpha \beta} = \nabla_\alpha u_\beta $. This tensor field measures the failure of $\xi$ to be parallelly transported along the geodesics, $B_\alpha^\beta \xi^\alpha = \nabla_u \xi^\beta $ immediately from \eqref{def-xi-2}. $B_{\alpha \beta}$ therefore captures the information on the evolution of $\xi$ along the geodesics. It follows from the definition, the geodesic equation, and the normalization, the fact that $B_{\alpha \beta} u^\alpha = B_{\alpha \beta} u^\beta = 0 $. The tensor $B_{\alpha \beta}$ can be decomposed in its symmetric, antisymmetric parts and its trace:
\begin{align}
B^\alpha_\alpha = \nabla_\alpha u^\alpha = \theta \label{def:expansion} \\
\sigma_{\alpha \beta} = B_{(\alpha \beta)} - \frac{1}{3}\theta \label{def:shear}\\
\omega_{\alpha \beta} = B_{[\alpha \beta]} \label{def:rotation} \ .
\end{align}
They are called, respectively, expansion, shear, and rotation. The tensor field then is written as
\begin{equation}
B_{\alpha \beta} = \frac{1}{3}\theta h_{\alpha \beta} + \sigma_{\alpha \beta} + \omega_{\alpha \beta} \ .
\end{equation}
From the property of $B_{\alpha \beta}$ follows that each term is transverse: $ h_{\alpha \beta} u^\alpha = \sigma_{\alpha \beta} u^\alpha = \omega_{\alpha \beta} u^\alpha = 0 $.

To understand their physical meaning, suppose that at some proper time the cross section of the congruence is a circle. The shear then parametrises the deformation of the circle into ellipses; the rotation, as one may guess, the rotation of the ellipses along the direction of $u$; and the expansion scalar the change in size of the cross section. In particular, if $\theta > 0$ means the cross-sectional area is growing, and contracting otherwise.

In the light-like case there are some subtleties in the definition of the tensor $B_{\alpha \beta} $, because if $u$ is light-like the condition $u^\alpha \xi_\alpha = 0$ is not sufficient to guarantee that $\xi$ has no components along the $u$ direction. In fact, in the light-like case we can introduce a tensor field $ B_\alpha^\beta = \nabla_\alpha u^\beta $, satisfying the condition $B_\alpha^\beta u^\alpha = B_{\alpha \beta}u^\beta = 0 $, but there remains a transverse component of $B_{\alpha \beta}$ that should be eliminated. We therefore introduce an auxiliary null vector, $N$, which satisfies $N^\alpha u^\beta g_{\alpha \beta} = -1 $. If we define the transverse metric $h_{\alpha \beta} = g_{\alpha \beta} + N_\alpha u_\beta + N_\beta u_\alpha $, one sees immediately that it is transverse to both $u$ and $N$, since $h_{\alpha \beta} u^\alpha = h_{\alpha \beta} N^\alpha = 0$, and therefore we can use it to select the transverse part of the deviation tensor, $\tilde B_{\alpha \beta} = h_\alpha^\mu h_\beta^\nu B_{\mu \nu} $.

Then, we can decompose the deviation tensor as before,
\begin{equation}
\tilde B_{\alpha \beta} = \frac{1}{2}\theta h_{\alpha \beta} + \sigma_{\alpha \beta} + \omega_{\alpha \beta} \ .
\end{equation}
It can easily be seen that the expansion is given by $\theta = \tilde B^\alpha_\alpha = B^\alpha_\alpha $, writing $\tilde B$ explicitly and using the fact that $B$ is transverse to $u$. The difference in the numerical factor arises because the relevant space is now two-dimensional.

In the case in which the geodesic congruence is defined as the family of geodesics orthogonal to some hypersurface, a simplification occurs. This is the case of interest for us, since we will deal with congruences orthogonal to a black hole horizon. Therefore, suppose that a hypersurface is given by some equation $S(x) = 0$, with normal vector $n^\alpha \propto \partial^\alpha S$. If the congruence is hypersurface orthogonal then the tangent vector $u$ is proportional to $n$, and since $S$ is a scalar function we can change the ordinary derivative with the covariant derivative:
\begin{equation}
u_\alpha = - \mu \nabla_\alpha S
\end{equation}
for some coefficient of proportionality $\mu$. In this case, the rotation tensor is
\begin{equation}
\omega_{\alpha \beta} = \nabla_{[\alpha}u_{\beta]} = - \nabla_{[\alpha}(\mu \nabla_{\beta]}S) = - \nabla_{[\alpha} \mu \nabla_{\beta]}S = \frac{1}{\mu} \nabla_{[\alpha} \mu \ u_{\beta]} = \frac{1}{2}(u_\beta \nabla_\alpha \mu  - u_\alpha \nabla_\beta \mu )
\end{equation}
since $\nabla_\alpha \nabla_\beta S$ is symmetric. On the other hand, the condition $\omega_{\alpha \beta} u^\beta = 0$ implies that $ \nabla_\alpha \mu = - u_\alpha u^\beta \nabla_\beta \mu $ since $u_\beta u^\beta = -1$, and this gives $\omega_{\alpha \beta} = 0$.

Note that we never mention if the tangent vector is time-like or null, and therefore the computation holds in both cases. If the congruence is light-like, then the geodesics are actually the generators of the hypersurface.

The converse is also true: if the rotation tensor is vanishing, then it is possible to find a family of hypersurfaces to which the geodesic congruence is everywhere orthogonal. The result is known as the \textit{Frobenius Theorem} (see, e.g., \cite{Wald84}, Appendix B.3). 

        \subsection{Geometric Meaning of the Expansion} \label{ssec:geometric-meaning-expansion}
    Now we focus our attention on the expansion scalar. In particular, we want to make clear its interpretation as the variation of the cross-sectional volume of the congruence.
    
    We start deriving the relation for time-like congruences. On a geodesic of the congruence, we select two point, $p, \ q$, with proper time $\tau_p$ respectively $\tau_q$. In their neighbourhood we select a set of points $\delta\Sigma_p$ and $\delta\Sigma_q$, portions of the two hypersurfaces $\tau = \tau_p$ and $\tau = \tau_q$ such that a geodesic of the congruence passes through each point. In $\delta \Sigma_{p,q}$ we construct a local reference frame $(\tau, y^a)$, which is related to the original system of coordinates $x^\alpha$ via
    \begin{equation}
    u^\alpha = \pdv{x^\alpha}{\tau} \qquad e^\alpha_a = \pdv{x^\alpha}{y^a} \ .
    \end{equation}
Finally, we have the induced metric on the hypersurface, $h_{ab} = e^\alpha_a e^\beta_b g_{\alpha \beta} $, with determinant $\sqrt h$. We can find the inverse metric $h^{ab}$ computing
\begin{equation}
 1 = h^{ab} h_{ab} = h^{ab} e^\alpha_a e^\beta_b g_{\alpha \beta} = h^{ab} e^\alpha_a e^\beta_b ( h_{\alpha \beta} - u_\alpha u_\beta ) = h^{ab} e^\alpha_a e^\beta_b h_{\alpha \beta} \ ,
\end{equation} 
and therefore
\begin{equation}
h^{\alpha \beta} = h^{a b} e^\alpha_a e^\beta_b \ .
\end{equation} 
The cross-sectional volume element is $\delta V = \sqrt h \dd^3 y$. Since $y^a$ are transverse coordinates the variation of the volume element along the geodesics is entirely in the determinant, $\sqrt h$. The relative variation is therefore
\begin{equation} \label{eq:middle-passage-geometry-expansion}
\frac{1}{\delta V} \dv{\delta V}{\tau} = \frac{1}{ \sqrt h}\dv{\sqrt h}{\tau} =  \frac{1}{2h} \dv{h}{\tau} = \frac{1}{2} h^{ab} \dv{h_{ab}}{\tau} \ .
\end{equation}
In the last passage, we used the matrix identity
\begin{equation}
\frac{1}{\det A} \dv{\det A}{\tau} = \Tr[ B^{-1} \dv{B}{\tau}] \ .
\end{equation}
This identity can be proven from $e^{\Tr B} = \det e^B$: taking the logarithm of both sides and deriving with respect to $\tau$ one obtains
\begin{equation}
\Tr \dv{B}{\tau} = \frac{1}{\det e^B} \dv{\det e^B}{\tau} \ ,
\end{equation}
and calling $e^B = A$ the identity follows.

Going back to \eqref{eq:middle-passage-geometry-expansion}, we note that the variation of the induced metric can be computed using the deviation tensor. By direct computation one finds
\begin{equation}
    \dv{h_{ab}}{\tau} = g_{\alpha \beta} u^\mu \nabla_\mu ( e^\alpha_a e^\beta_b ) = g_{\alpha \beta} ( e^\beta_b u^\mu \nabla_\mu e^\alpha_a + e^\alpha_a u^\mu \nabla_\mu e^\beta_b ) \ ,
\end{equation}  
     but $u^\mu \nabla_\mu e^\alpha_a = u^\mu \partial_\mu e^\alpha_a + \Gamma^\alpha_{\mu \nu} u^\mu e^\nu_a u^\mu = \Gamma^\alpha_{\mu \nu} u^\mu e^\nu_a = e^\mu_a \partial_\mu u^\alpha + \Gamma^\alpha_{\mu \nu} u^\nu e^\mu_a = e^\mu_a \nabla_\mu u^\alpha $, since the two vectors are orthogonal, and similarly for $u^\mu \nabla_\mu e^\beta_b$. So
     \begin{equation}
     \dv{h_{ab}}{\tau} = g_{\alpha \beta} (e^\beta_b e^\mu_a \nabla_\mu u^\alpha + e^\alpha_a e^\mu_b \nabla_\mu u^\beta ) 
     = g_{\alpha \beta} ( e^\mu_a e^\beta_b B_\mu^\alpha + e^\alpha_a e^\mu_b B_\mu^\beta )
     = e^\alpha_a e^\beta_b ( B_{\alpha \beta} + B_{\beta \alpha}) \ .
\end{equation}
Multiplying by $h_{ab}$, and remembering the relation with $h^{\alpha \beta}$ and the fact that $B_{\alpha \beta}$ is transverse, we arrive at
\begin{equation}
    \begin{split}
\frac{1}{2}h^{ab} \dv{h_{ab}}{\tau} &= \frac{1}{2} h^{ab} e^\alpha_a e^\beta_b (B_{\alpha \beta} + B_{\beta \alpha} ) \\
&= \frac{1}{2} h^{\alpha \beta} (B_{\alpha \beta} + B_{\beta \alpha} ) \\
& = \frac{1}{2} g^{\alpha \beta} (B_{\alpha \beta} + B_{\beta \alpha} ) = B^\alpha_\alpha = \theta 
\end{split} \ .
\end{equation}
Thus, the expansion scalar is the logarithmic derivative of the cross-sectional volume element.

The light-like case is completely analogous. The only difference is that the transverse space is two-dimensional; because of the ambiguity of the condition $u^\alpha h_{\alpha \beta} = 0$, one constructs an auxiliary geodesic congruence tangent to the auxiliary vector $N^\alpha$, and finds the transverse hypersurface fixing the proper time on a geodesic of the original congruence \textit{and} on a geodesic of the congruence tangent to $N$. It is clear that the transverse space is therefore two-dimensional, and then, following the same passages, the expansion scalar is the logarithmic derivative of the cross-sectional \textit{area},
\begin{equation} \label{eq:expansion-area-light-like}
\theta = \frac{1}{\sqrt \sigma} \dv{\sqrt \sigma}{\tau} \ ,
\end{equation}
where $\sigma^{ab}$ is the two-dimensional induced metric.

        \subsection{Raychaudhuri Equation} \label{ssec:Raychaudhuri-eq}
        
Now, we want to derive an evolution equation for the expansion scalar. The expansion scalar is an important quantity because it describes the evolution of congruences but also the temporal evolution of hypersurfaces, thanks to its geometric meaning, and it plays a fundamental role in the singularity theorems by Hawking and Penrose and in Hawking's area theorem, which started the study of black hole thermodynamics. In chapter \ref{ch:new-results} we will use the Raychaudhuri equation to compute the growth of black hole horizons due to matter crossing the horizon.

Raychaudhuri equation follows from direct computation. The derivation for time-like and light-like congruences is identical: the only relevant difference resides in the numerical factor in front of the expansion scalar in the decomposition of the deviation tensor. For definiteness we assume a null congruence, because it is the case we will deal with in chapter \ref{ch:new-results}. We look for the evolution of the expansion along the geodesics, so we start with the deviation tensor $B_{\alpha \beta}$
\begin{equation}
    \begin{split}
    u^\mu \nabla_\mu B_{\alpha \beta} &= u^\mu \nabla_\mu \nabla_\alpha u_\beta = \\
    & = u^\mu \nabla_\alpha \nabla_\mu  u_\beta - R_{ \alpha \mu \beta}^\nu  u^\mu u_\nu \\
    &= \nabla_\alpha(u^\mu \nabla_\mu u_\beta) - \nabla_\alpha u^\mu \nabla_\mu u_\beta - R_{\alpha \nu \beta \mu} u^\mu u^\nu \\
    &= - \tilde B_\alpha^\mu \tilde B_{\mu \beta} - R_{\alpha \nu \beta \mu} u^\mu u^\nu \ .
    \end{split}
\end{equation}
In the second line we used the definition of the Riemann tensor, in the third the Leibniz rule, and in the last the geodesic equation. Taking the trace gives
\begin{equation} \label{eq:Raychaydhuri-middle-passage}
\dv{\theta}{\tau} = - B^{\mu \alpha} B_{\mu \alpha} - R_{\mu \nu} u^\mu u^\nu  \ ,
\end{equation}
but
\begin{equation}
\tilde B^{\alpha \beta} \tilde B_{\alpha \beta} = h^\alpha_\mu h^\beta_\nu B^{\mu \nu} \ h_\alpha^\rho h_\beta^\sigma B_{\rho \sigma} = h^\rho_\mu h^\sigma_\nu B^{\mu \nu} B_{\rho \sigma} = B^{\mu \nu}B_{\mu \nu}
\end{equation}
because $B_{\mu \nu}$ is transverse. On the other hand, we have
\begin{multline}
\tilde B^{\alpha \beta} \tilde B_{\alpha \beta} = (\frac{1}{2} \theta h^{\alpha \beta} + \sigma^{\alpha \beta} + \omega^{\alpha \beta} ) ( \frac{1}{2} \theta h_{\alpha \beta} + \sigma_{\alpha \beta} + \omega_{\alpha \beta}) = \\
= \frac{1}{2}\theta^2 + \sigma^{\alpha \beta} \sigma_{\alpha \beta} + \omega^{\alpha \beta} \omega_{\alpha \beta} \ ,
\end{multline}
and so we can write the \textit{Raychaudhuri equation}:
\begin{equation} \label{eq:Raychaudhuri}
\dv{\theta}{\tau} = - \frac{1}{2} \theta^2 - \sigma^{\alpha \beta} \sigma_{\alpha \beta} - \omega^{\alpha \beta} \omega_{\alpha \beta} - R_{\mu \nu} u^\mu u^\nu \ .
\end{equation} 

As we said, the time-like case follows the same derivation, in particular equation \eqref{eq:Raychaydhuri-middle-passage} holds. The only difference is in the decomposition of the deviation tensor, which results in a factor of $\frac{1}{3}$ in front of the expansion term, instead of $\frac{1}{2}$. 

We remark here on a couple of modifications to the Raychaudhuri which we will use later, in chapter \ref{ch:new-results}. First, we note that in the null case we can rewrite the equation in terms of the stress-energy tensor. In fact, Einstein's equations
\begin{equation}
R_{\mu \nu} - \frac{1}{2}Rg_{\mu \nu} = 8 \pi T_{\mu \nu}
\end{equation}
can be used to rewrite the last term in the right-hand side. Since $u$ is light-like, the curvature scalar term vanishes, and we find
\begin{equation}\label{eq:Raychaudhuri-SET}
\dv{\theta}{\tau} = - \frac{1}{2} \theta^2 - \sigma^{\alpha \beta} \sigma_{\alpha \beta} - \omega^{\alpha \beta} \omega_{\alpha \beta} - 8 \pi T_{\mu \nu} u^\mu u^\nu \ .
\end{equation}

A second modification happens if we do not consider generators affinely parametrised, but that satisfy the more general geodesic equation
\begin{equation}
u^\alpha \nabla_\alpha u^\beta = \kappa u^\beta \ ,
\end{equation}
where $\kappa$ is called the \textit{inaffinity function}. In this case, the expansion becomes
\begin{equation}
\theta = \nabla_\alpha u^\alpha - \kappa
\end{equation}
and the Raychaudhuri equation acquires an additional term,
\begin{equation} \label{eq:Raychaudhuri-inaffine}
\dv{\theta}{\tau} = \kappa \theta - \frac{1}{2} \theta^2 - \sigma^{\alpha \beta} \sigma_{\alpha \beta} - \omega^{\alpha \beta} \omega_{\alpha \beta} - R_{\mu \nu} u^\mu u^\nu \ .
\end{equation}
As a conclusive remark we want to show how Raychaudhuri equation can be used to prove a theorem on the evolution of congruences, called \textit{Focussing Theorem}. From the right hand side of \eqref{eq:Raychaudhuri-SET} one immediately sees that if the null energy condition holds,
\begin{equation}
T_{\mu \nu} u^\mu u^\nu \geq 0 \ ,
\end{equation}
then $\dv{\theta}{\tau} \leq 0$, a consequence of the fact that gravity is always attractive. If we assume that the shear and rotation tensor vanish, $\dv{\theta}{\tau} \leq -\frac{1}{c}\theta^2 $, where $c=2$, $c=3$ in the light-like respectively time-like case, implying
\begin{equation}
\dv{\theta^{-1}}{\tau} \geq \frac{1}{c} \Rightarrow \theta^{-1}(\tau) \geq \theta^{-1}(0) + \frac{1}{c} \tau \ .
\end{equation}
As a consequence, if the congruence is initially converging, $\theta(0) < 0$, then $\theta^{-1}$ must become zero, that is, the expansion scalar diverges in a proper time $\tau \leq \frac{c}{\theta(0)}$. This divergence of the expansion occurs when the geodesics converge into a point, called caustics. This is just a divergence in the congruence, but together with global properties of the spacetime can be used to prove the existence of singularities of the structure of the spacetime itself, a fundamental result contained in Penrose and Hawking singularity theorems.

    \section{Penrose Diagrams} \label{sec:Penrose-diagrams}

Penrose diagrams are probably the most useful tool to visualise the causal structure of a spacetime. The goal is to reduce the spacetime to a finite size, so that it is possible to represent it on a sheet of paper, while preserving the light-cone structure. Basically, since light-cones, even in highly curved spacetimes, are identified by perpendicular light rays, the idea is to map the spacetime $\mathscr M$ in a subregion of another, unphysical one, so that drawing a finite portion of the unphysical spacetime $\tilde{\mathscr M}$ one can actually represents the original spacetime $\mathscr M$. By doing so, we can identify a set of points in the unphysical spacetime which represent the causal boundaries of the original spacetime, while we need to preserve the angles between the light rays so that it is still possible to visualize the original causal structure.

 This compression-without-loss of the spacetime is achieved through what is called \textit{conformal compactification}: given a spacetime $\mathscr M$ with a metric $g$ such that
\begin{equation}
\dd s^2 = g_{\alpha \beta} \dd x^\alpha \dd x^\beta
\end{equation}
we want to find a new, unphysical spacetime $\tilde{\mathscr M}$ with a metric $\tilde g$, given by
\begin{equation}
\omega^{-2} \dd \tilde s^2 =  \dd s^2
\end{equation}
where $\omega > 0$.

Since the two metrics are related by a conformal transformation, they share the same causal structure, because angles are locally preserved. The original spacetime is however embedded in the unphysical one, and therefore the intervals of the original coordinates, which usually extend to infinity (at least in one direction, as the radial coordinate in polar coordinates) are compressed to a finite interval in the new one, and taking the extremes of these intervals, we can identify the causal boundaries of $\mathscr M$.  Drawing the portion of $\tilde{\mathscr M}$ corresponding to $\mathscr M$ and its boundaries, one finally obtains the Penrose diagram of $\mathscr M$.

There is one more caveat, namely, a Penrose diagram cannot be more than 2-dimensional, due to the well known limitations of sheets of paper. Therefore, one needs to choose the relevant directions to draw, leaving the others implicit. This is the reason why Penrose diagrams are most useful in dealing with spacetimes with spherical symmetry, because one can draw the $t-r$ plane only, assuming that each point represents a 2-sphere. It is also possible to draw Penrose diagrams for axi-symmetric spacetimes, as the Kerr black hole, but in that case the Penrose diagram is no more compact but only periodic.

The best way to understand Penrose diagrams is through examples. We show now how to construct the simplest of all, that of Minkowski spacetime:
\begin{equation}
\dd s^2 = - \dd t^2 + \dd r^2 + r^2 \dd \Omega^2 \ , 
\end{equation}
where we adopted polar coordinates, with $\dd \Omega^2 = \dd \theta^2 + \sin^2 \theta \dd \varphi^2 $ being the angular metric of a unit 2-sphere.

As we said, the goal is to write the line element in the form $\dd s^2 = \omega^{-2} \dd \tilde s^2$, so we need to perform a couple of change of coordinates, carefully keeping track of the intervals in which the variables are defined. First, we introduce a set of variables adapted to the light-cone structure, called \textit{light-like coordinates}:
\begin{align}
u = t - r \\
v = t + r
\end{align}
$u = u_0 $ and $v = v_0$ identify, respectively, outgoing and ingoing lightrays. Both $u,v$ take values in all $\mathbb R$, but since $r = \frac{v - u}{2} \geq 0$ we have the additional condition $u \leq v$. In these coordinates the metric becomes
\begin{equation}
\dd s^2 = - \dd u \dd v + \frac{(v-u)^2}{4} \dd \Omega^2 \ .
\end{equation}
Since the coordinates are extended to infinity, we need a function to compactify them. The best \textit{squishification} function is the $\arctan$, so we define
\begin{align}
U = \arctan u \\
V = \arctan v
\end{align}
Now, $U,V \in (-\frac{\pi}{2}, \frac{\pi}{2})$, and $U \leq V $. The differentials in the metric become
\begin{equation}
\dd u \dd v = \dd(\tan U) \dd (\tan V) = \frac{1}{\cos^2 U \cos^2 V} \dd U \dd V
\end{equation}
while the prefactor in front of the angular variables becomes
\begin{equation}
v - u = \tan V - \tan U = \frac{\sin(V-U)}{\cos U \cos V}
\end{equation} 
and so the metric is
\begin{equation}
\dd s^2 = \frac{1}{4 \cos^2 U \cos^2 V} \bigg[ - 4 \dd U \dd V + \sin^2(V-U) \dd \Omega^2  \bigg] \ .
\end{equation}
The conformal factor is $\omega^2 = \cos^2 U \cos^2 V$. The unphysical metric is given by
\begin{equation}
\dd \tilde s^2 = - 4 \dd U \dd V + \sin^2(V - U) \dd \Omega^2 \ .
\end{equation}
To better understand its structure, we can go back to time-like and radial-like coordinates, 
\begin{align}
T = U + V \\
R = V - U
\end{align}
so that the unphysical metric becomes
\begin{equation}
\dd s^2 = - \dd T^2 + \dd R^2 + \sin^2 R \dd \Omega^2 \ .
\end{equation}
We are still assuming that $ - \pi < T + R < \pi, \ - \pi < T - R < \pi, R \geq 0$, but now we see that we can extend the coordinates so that $T \in \mathbb R $ while $ R, \theta, \varphi \in S^3$, that are, angular variables of a unit 3-sphere. Such a spacetime is called \textit{Einstein universe}. From the metric, we see that it is similar to Minkowski, with the difference that the radial variable is not extended to infinity but rather is periodic: if we suppress the angular variables $\theta, \varphi$, the Einstein universe can be visualised as a cylinder.

\begin{figure}
\centering
\includegraphics[scale=0.3]{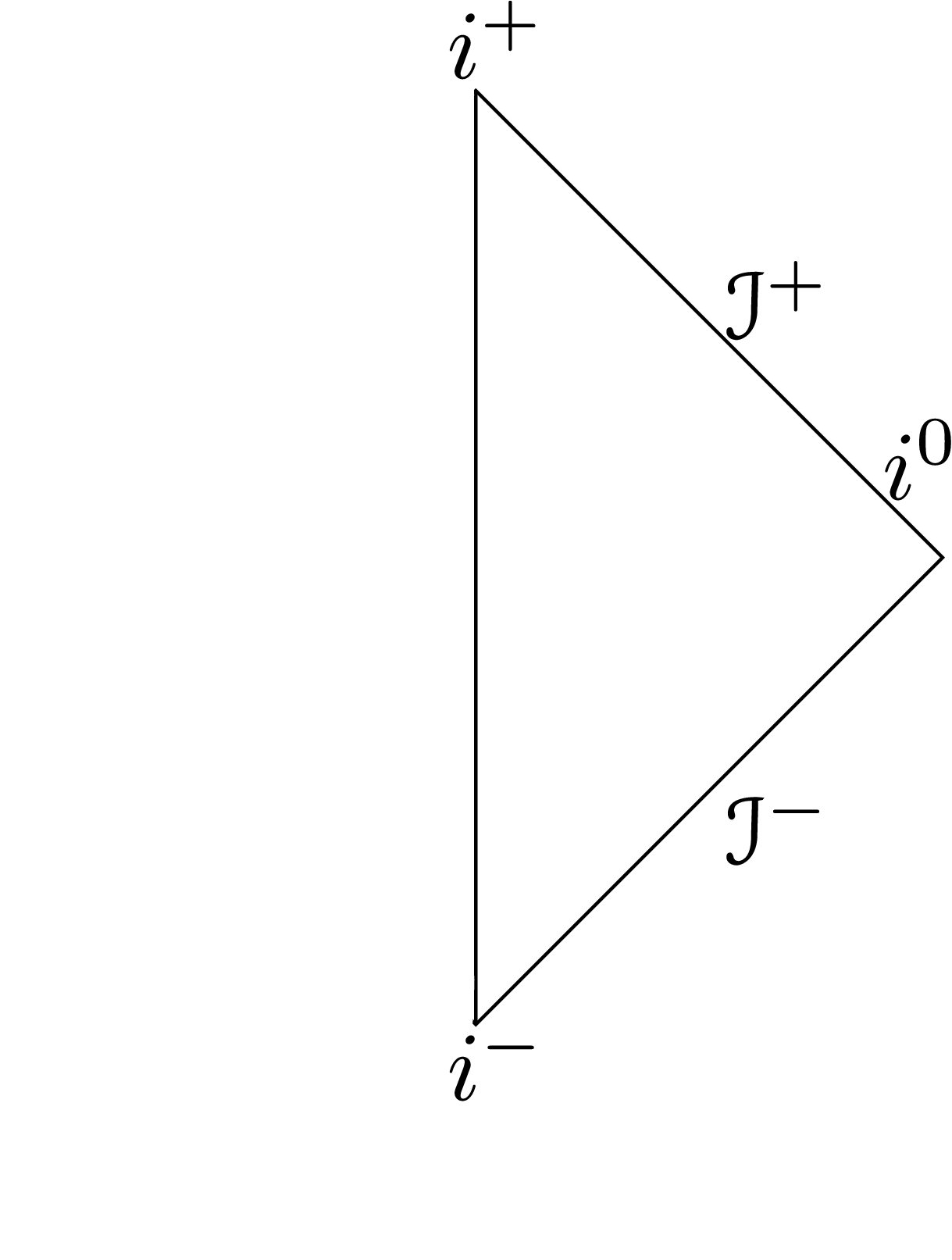}
\caption{Penrose diagram of Minkowski spacetime. Each point corresponds to a 2-sphere, expect $i^\pm$ and $i^0$, which are single points, and the vertical line, corresponding with the single point $r = 0$.}
\end{figure}

The portion of Einstein Universe covered by $- \frac{\pi}{2} \leq U \leq V \leq \frac{\pi}{2} $ (note the inclusion of the extremes) is the conformal compactification of Minkowski spacetime; the lines $ U,V = - \frac{\pi}{2} $ and $U,V = \frac{\pi}{2} $ are the causal boundaries of Minkowski space. In the $T-R$ plane (with $T$ on the vertical axis and $R$ on the horizontal axis, as usual) these four lines intersect to form a square, or a diamond, centered at the origin, with the four vertices at the intersection of $V = \pm \frac{\pi}{2}$ with $U = \pm \frac{\pi}{2}$. The condition $R \geq 0 $ finally identifies the Minkowski compactification with the right half-square, the triangle in the $R \geq 0$ half-plane.

In the compactified Minkowski spacetime we can identify three types of boundaries: the points $(U,V) = ( \frac{\pi}{2}, \frac{\pi}{2} ) $ and $(U,V) = (-\frac{\pi}{2}, - \frac{\pi}{2} ) $ are called, respectively, \textit{time-like future} and \textit{time-like past infinity}, denoted with $i^\pm$, because they are the points where time-like geodesics terminates and originates (or, rather, we should say that time-like geodesics approaches $i^\pm$ for arbitrary large/negative amounts of proper time). In the same way, space-like geodesics originates and terminates at $i^0 = (U = - \frac{\pi}{2}, V = \frac{\pi}{2}) $ and therefore this point is called \textit{space-like infinity}. The two segments $U = -\frac{\pi}{2}$ and $V = \frac{\pi}{2}$ are the set of points where null geodesics originates and terminates, and so are called \textit{past and future null infinity}.

The behaviour of Minkowski conformal infinity is somewhat the "standard" one: a spacetime is called \textit{asymptotically flat} if it has the same conformal structure at infinity of Minkowski spacetime.

Minkowski spacetime is the simplest case, but we will see that the construction of the Penrose diagram for different spacetimes (in particular, we will deal with Schwarzschild and FLRW spaces) goes along the same lines: i) introduce light-like coordinates; ii) use a squishification function, usually the arctangent function, to bring the infinity into a finite interval; iii) identify the conformal factor and the unphysical spacetime; iv) draw the portion of the unphysical spacetime corresponding to the original one, keeping the relevant coordinates.

\chapter{Quantum Field Theory in Curved Spacetime} \label{ch:QFTCS}
    \section{Motivations} \label{sec:motivations}
       
The two greatest achievements of theoretical Physics over the last century are General Relativity (GR) and Quantum Field Theory (QFT). The former changed our understanding on the beginning and fate of the Universe, giving us the picture of a dynamical spacetime with ripples and even fractures, which both have been directly observed in recent years with the first detection of gravitational waves \cite{GW2016} and the first image of the event horizon of a black hole \cite{EHTelescope19}. The latter described the constituents of matter at the smallest scales ever reached. QED is the most precise theory we have so far \cite{Gabrielse06}, while the Standard Model of particle physics proved correct with outstanding observational confirmations, as the discovery of the Higgs boson at LHC in 2012 \cite{HiggsBoson2012}. Together, these two theories are the pillars of modern science, and will stand at least as incredibly accurate Effective Field Theories for experiments with energy scales below the threshold of new, yet unknown physics.

Famously, the striking difference between the geometric interpretation of gravity and the quantum nature of all other forces lead physicists to quest for a quantum theory of gravity, reconciling both views. In 1975, Hawking began his celebrated work on the radiation of black holes \cite{Hawking74} saying that, despite fifteen years of research, no fully satisfactory and consistent theory of gravity was found yet. Although enormous progress has been made since 1975, we can say that, despite at least five decades of research, we do not yet have a fully satisfactory and consistent quantum theory of gravity. However, even if the full quantum gravity theory is not known, there are situations in which the quantum nature of matter and the geometric nature of gravity combine, giving rise to new effects which can be predicted neither from General Relativity nor from Quantum Field Theory alone, but from the combination of both. The archetypical examples are black hole thermodynamics, to which Hawking himself gave seminal contributions with his paper, and early Universe cosmology. In both cases, quantum fluctuation becomes relevant because they are enhanced by the dynamical nature of spacetime: in the case of Hawking radiation, spacetime curvature polarize the vacuum, creating particle-antiparticle pairs; in the inflationary era, quantum fluctuations of the fields are expanded to cosmic scale, seeding the large-scale structure of the Universe.

To correctly account for these phenomena an hybrid theory is needed, describing the propagation of quantum fields over a classical, curved spacetime. Therefore, Quantum Field Theory in Curved Spacetime (QFTCS) has to be considered an approximation for those phenomena where gravitational effects are relevant, but quantum gravity itself may be neglected, and must be based on a suitable unification of the principles of General Relativity and Quantum Field Theory.

We already reviewed the principles of GR in section \ref{sec:most-beautiful-theory}. They are that i) the spacetime structure is described by a Lorentzian manifold; ii) the metric and the matter fields are dynamical, with their evolution locally determined by partial differential equations. On the other hand, it is much more difficult to identify the correct principles of QFT that can be generalised to curved backgrounds. The main formal difference from ordinary quantum mechanics is the failure of Stone-von Neumann theorem, which states that, there is a unique (up to isomorphism) irreducible, unitary representation of the Weyl relations of the canonical commutation relations (or, equivalently, of the Heisenberg group) on a finite number of generators. In QFT, instead, there are an infinite number of unitarily inequivalent representations of the fundamental commutation relations. In Minkowski spacetime, Poincaré invariance selects a preferred representation built on a Poincaré invariant state, the vacuum. In general backgrounds one does not have such a criterion, and therefore the standard procedure, of quantizing a classical field solution of a suitable differential operator separating the positive and negative frequencies, is not possible. Moreover, even for free fields the operator $\phi(x)$ sharply evaluated at a point $x$ is not mathematically well-defined, since arbitrary high frequencies contribute and its expectation value is often divergent. The fields instead do make sense as distributions, smeared with continuous test functions with compact support. What is considered fundamental, then, are the commutation relations of the field observables, instead of a particular representation, and therefore we take as a first, basic principle the fact that i) the quantum fields are distributions valued in an algebra. Then, the principles and requirements of physical theories become conditions that the algebra must satisfy to describe a field theory. 

A fundamental condition of QFT in Minkowski spacetime is the positivity of the energy. As energy is defined using the property of invariance of field observables under time translations, this is no longer applicable in general backgrounds. However, a suitable generalisation has been formulated, called \textit{microlocal spectral condition}, which  is a local condition on the quantum field theory which corresponds to positivity of energy on Minkowski and is sensible in curved backgrounds. Although we will not deal with such a condition in the rest of the work, it is important to remark that such a condition on the energy exists. Therefore, we will assume that ii) the fields must satisfy suitable microlocal spectral conditions.

We can see a third principle as a requirement of compatibility with the principles of General Relativity. In Minkowski spacetime, one assumes that the fields are representations of the Poincaré group, that is, they transform covariantly under Poincaré transformations. The Poincaré group plays such a role because it is an isometry of the Minkowski metric, but it can be seen as the special relativistic case of the more general invariance of the metric under diffeomorphisms introduced in GR. What we require, then, is that the quantum fields are compatible with the background isometries. Therefore, in QFTCS one assumes that iii) the quantum fields must be locally and covariantly constructed.

Locality is achieved assigning an algebra $\mathcal A(\mathscr O) $ to each open region $\mathscr O $ of a spacetime $\mathscr M$. The union of the local algebras assigned to each subregion must generate an algebra over the full spacetime, $\bigcup_{\mathscr O} \mathcal A(\mathscr O) = \mathcal A(\mathscr M)$. Locality then gets translated in the two properties of \textit{isotony} and \textit{causality}. Isotony means that, for any pair of open regions such that $\mathscr O_1 \subset \mathscr O_2 $, then $\mathcal A(\mathscr O_1) \subset \mathcal A(\mathscr O_2)$. Causality means that, if two regions $\mathscr O_1, \mathscr O_2$ are space-like separated, the two algebras should commute: $[\mathcal A(\mathscr O_1), \mathcal A(\mathscr O_2)] = 0 $.

With these principles at hand a QFTCS can be constructed. As we said before, however, the standard approach followed in flat space, in which one selects a preferred vacuum state and then constructs particle states with ladder operators, cannot be applied here. In fact, although notions of vacuum states and particles can be defined on curved spacetimes, there is no natural way to select a preferred one. In general, the notion of the vacuum is observer-dependent, a key fact which we will show explicitly in the discussion of the Unruh effect in flat spacetime (section \ref{sec:Unruh-effect}) and of the thermal properties of the Unruh state in a Schwarzschild background (see section \ref{sec:Hawking-temperature}). Even if the spacetime of interest has suitable asymptotic properties in the past and in the future, so that one can define a natural notion of particle states and construct an S-matrix, many interesting questions regard the dynamical behaviour of the fields at finite times, where the particle notion becomes ambiguous. A path integral approach does not seem ideal neither, again because it is based on a preferred vacuum state for which the correlation functions are computed.

For the above reasons, the approach we will take here is the algebraic quantization, in which the commutation relations satisfied by the fields are taken as the fundamental property of the theory.
 
The algebraic approach follows a two-step procedure: first, one defines for a physical system a suitable $\ast$-algebra $\mathcal A$ (or a C*-algebra) which encodes the algebraic relations between observables, such as the canonical commutations relations, as well as the required properties of isotony, locality and covariance. This can be done explicitly for free fields. The second step is to select a so-called algebraic state $\omega$, a continuous, linear, positive functional on $\mathcal A$ which allows to define the expectation values of observables. Finally, the GNS reconstruction (see subsection \ref{ssec:GNS}) allows one to recover the usual interpretation of the algebra elements $A \in \mathcal A$ as linear operators over a Hilbert space $\mathcal H$.

We will describe the algebraic quantization in some detail for the case of the free, neutral, scalar field propagating in globally hyperbolic spacetimes, as it is the simplest case and will be enough to discuss the results in this work. The following sections, on the algebraic quantization follow closely \cite{AAQFT15} and \cite{HollandsWald14}. Then, we will describe the characterization of thermal states using the KMS conditions, the definition of relative entropy via the Araki formula, the connection between the entropy and the temperature (in other words, of the thermodynamic properties of the algebra itself, and of the states), and the explicit formula for the relative entropy between coherent states. For these sections, we refer to \cite{CasiniGrilloPontiello19}, \cite{CiolliLongoRuzzi19}, \cite{Haag92}, \cite{HollandsIshibashi19}, and \cite{Witten18}.

\section{Algebraic quantization for the free neutral scalar field} \label{sec:AQFT}
    \subsection{Classical field theory} \label{ssec:classical-field-theory}
We begin with the classical Klein-Gordon field, which is a solution of the Klein-Gordon (K-G) equation on a globally hyperbolic spacetime $(\mathscr M, g)$, with metric signature $(-,+,+,+)$,
\begin{equation}
    (\square - m^2 - \xi R) \psi = 0 \ ,
\end{equation}
where $\square = g^{\mu \nu} \nabla_\mu \nabla_\nu$ is the d'Alembertian operator.

The K-G equation is a well-posed initial value problem on the domain of dependence $J^+(\Sigma) \cap J^-(\Sigma)$ of some (partial) Cauchy surface $\Sigma$ if we give initial data $\eval{\psi}_\Sigma = f_0$, $n^\mu \nabla_\mu \eval{\psi}_{\Sigma} = f_1$, where $n$ is a unit vector normal to $\Sigma$.  Moreover, the solution does not change at $x$ if we change the initial data outside $J^-(x) \cap \Sigma$, thus preserving causality. We remember that $J^{\pm}(x)$, defined in section \ref{sec:globally-hyperbolic-spaces}, denotes the causal future/past  of $x$.

The Klein-Gordon operator $P = \square - m^2 - \xi R$ is a formally self-adjoint operator, meaning that, given two solutions of the K-G equation $\psi_1, \ \psi_2$,
\begin{equation}
\int_{\mathscr M} \psi_1 P \psi_2 \ \dd \text{vol}_{\mathscr M} = \int_{\mathscr M} P \psi_1 \psi_2 \ \dd \text{vol}_{\mathscr M} \ .
\end{equation}
Moreover, it is a Green hyperbolic operator, and thus admits unique retarded and advanced Green operators $E^{\pm}$, which are distributions that satisfy the Green equation $P E^\pm(x,y) = \delta(x,y)$, where the Klein-Gordon operator acts distributionally over the first variable, and are supported in the causal development of the support of $f$,  $\supp (E^\pm f) \subseteq J^\pm(\supp f) $. From this, one can show that $PE^\pm f = f $.

To construct the space of classical observables, one start with the space of test functions $ f \in C^\infty_0(\mathscr M)$. We then introduce the linear functionals $F_f : C^\infty(\mathscr M) \to \mathbb R$, which take any smooth function $\psi$ and give
\begin{equation}
F_f(\psi) = \int_{\mathscr M} f \psi \dd \text{vol}_{\mathscr M}  .
\end{equation}

These linear functionals label the \textit{off-shell configurations}, that is, the space of smooth functions $C^\infty(\mathscr M)$, since it is possible to find a test function $f$ such that $F_f(\psi_1) \neq F_f(\psi_2) $ for any two  $\psi_1, \psi_2 \in C^\infty(\mathscr M)$. However, if we consider solutions of the K-G equation only, the linear functionals $F_f$ are still enough to label the solutions (as the space of solutions is a restriction of the space of smooth functions), but the test functions $f$ are redundant, because if we add $Ph$, for some test function $h$, to any $f$, the two resulting linear functionals are the same: $F_f(\psi) = F_{f + Ph}(\psi) $ for any solution $\psi$ of the K-G equation.  Therefore, we define an equivalence class $[f]$, such that $f \sim g \Leftrightarrow F_f(\psi) = F_g(\psi) \ \forall \ \psi \in C^\infty(\mathscr M)$, that is, $f = g + Ph$ for some $h \in C^\infty_0(\mathscr M)$. Now, for every solution of the Klein-Gordon equation (the \textit{on-shell field configurations}), the linear functionals $F_{[f]}(\psi) $ label the solutions.  For this reason, the functionals $F_{[f]}$ are called the \textit{classical observables}. We denote the space of classical observables $\mathcal E$.

The classical field theory is completed introducing a symplectic structure on the space of the observables. The symplectic form is naturally induced by the equation using the advanced and retarded operator to introduce a solution of the Klein-Gordon equation
\begin{equation}
E = E^- - E^+
\end{equation}
which is called retarded-minus-advanced operator, or causal propagator, or commutator function, or Lichnérowicz function (and probably in some other way; we will call it causal propagator). The causal propagator, thanks to its distributional nature, can be seen as a bilinear, antisymmetric form over the space of test functions $C^\infty$. However, such a form would be a \textit{pre-symplectic form} because it would be degenerate. We introduce therefore a symplectic form $E : \mathcal E \times \mathcal E$, that, given any two representatives of the classes $[f], \ [g]$, is defined by
\begin{equation}
    E([f],[g]) := E(f,g) := \int_{\mathscr M} f E g \ \dd \text{vol}_{\mathscr M} \ .
\end{equation}

The classical structure outlined here can be linked to a different approach, more familiar from the physical viewpoint, in which one defines the symplectic structure over the space of continuous solutions of the Klein-Gordon equation with space-like compact support, $\mathsf{Sol}$, with the symplectic form:
\begin{equation} \label{eq:symplectic-form}
w(\phi, \psi) = \int_{\Sigma} \psi n^\mu \nabla_\mu \phi - \phi n^\mu \nabla_\mu \psi \ \dd \Sigma
\end{equation}
where $\Sigma$ is a partial Cauchy hypersurface and $n$ is a unit future-directed normal vector to the surface. We say that $\Sigma$ is a \textit{partial} Cauchy surface to include the case in which the QFT is not defined in the global spacetime, but only in the domain of dependence of $\Sigma$. It can be seen using the Klein-Gordon equation that the symplectic form does not depend on the choice of the Cauchy hypersurface, while the restriction to the compactly supported solutions ensures that the integral converges.

Although $\Sigma$ in general is taken to be a space-like hypersurface, the quantization procedure can be generalised to the limit in which $\Sigma$ becomes null. In this case, the normal vector  $n$ becomes tangent to the hypersurface, while we can adopt the coordinates $y^a = (\lambda, \theta^A)$ we introduced in subsection \ref{ssec:hypersurfaces}, where $\lambda$ is the affine parameter along the hypersurface. Then, $n = \partial_\lambda$, and the symplectic form becomes

\begin{equation}
w(\psi_1, \psi_2) = \int_\Sigma \psi_2 \partial_\lambda \psi_1 - \psi_1 \partial_\lambda \psi_2 \ \sqrt \sigma \dd \lambda \dd^2 \theta
\end{equation}

It is possible to prove the equivalence between the two approaches based on $(\mathsf{Sol},w)$ and $(\mathcal E, E)$ showing that there is a symplectomorphism (an isomorphism which maps the two symplectic forms into each other) between the two symplectic spaces,
\begin{equation}
I : [f] \mapsto E f
\end{equation}
where $f$ is any representative of $[f]$. Here, we only show that $E([f],[g]) = w(E f,Eg)$. Recalling that $\psi_f = Ef, \psi_g = Eg$ are solutions of the Klein-Gordon equation, we have
\begin{equation} \label{eq:symplectic-form-causal-propagator-relation}
    \begin{split}
E([f],[g]) &= 
\int_{\mathscr M} f E g \ \dd \text{vol}_{\mathscr M} = 
\int_{J^+(\Sigma)} f \psi_g \ \dd \text{vol}_{\mathscr M} + \int_{J^-(\Sigma)} f \psi_g \ \dd \text{vol}_{\mathscr M} =  \\
&= \int_{J^+(\Sigma)} (PE^-f) \psi_g \ \dd \text{vol}_{\mathscr M} + 
\int_{J^-(\Sigma)} (PE^+ f) \psi_g \ \dd \text{vol}_{\mathscr M} = \\
&= \int_{J^+(\Sigma)} P(E^-f \psi_g) \ \dd \text{vol}_{\mathscr M} + 
\int_{J^-(\Sigma)} P(E^+ f \psi_g) \ \dd \text{vol}_{\mathscr M} = \\
&= \int_\Sigma \psi_g \nabla_n(E^- f) + E^-f \nabla_n \psi_g \ \dd \Sigma -
\int_\Sigma \psi_g \nabla_n (E^+f) + E^+f \nabla_n \psi_g \ \dd \Sigma \\
&= \int_\Sigma \psi_g \nabla_n(E^- f) + \psi_g \nabla_n(E^+f) \ \dd \Sigma - \int_\Sigma \psi_g \nabla_n(E^+ f) + \psi_g \nabla_n(E^-f) \ \dd \Sigma = \\
&= w(\psi_f,\psi_g)
    \end{split}
\end{equation}

In the first step we separated the integral over the volume in the past and future domains of dependence of a Cauchy hypersurface $\Sigma$; in the second we used the property $P E^\pm f = f$; in the third we used the fact that, $P \psi_g = 0$. In the fourth we used Gauss-Stokes theorem to convert the bulk integrals in surfaces integrals; the minus sign in the second integral appears because the normal vector is future-directed, and we use the Leibniz rule to split the action of $\nabla_n$ on $\psi_g$ and $E^\pm f$. In the last passage we integrated by parts the second and fourth term, as the boundary terms vanishes because are evaluated at infinity, and $\phi,\psi$ have compact support, to obtain the causal propagator $(E^- - E^+)f = Ef = \psi_f $.

The approach based on test functions is more useful when one is interested in the local behaviour of some observable, because in that case the test functions with compact support define the observables in some finite region. The approach based on the classical solutions is more useful when is interested in the actual propagation of the wave: in this case, one gives initial data for the wave on some Cauchy surface, $\Sigma$, and then study the propagation of the field into the spacetime. In chapter \ref{ch:new-results}, when we will apply this formalism to the case of a scalar field over a black hole spacetime, we will consider the latter approach.

The symplectic form $w$ is causal, as $E$, in the sense that the symplectic pairing of classical solutions localized in causally disjoint regions vanishes. The symplectic structure of classical observables is the starting point for the procedure of quantization. 

    \subsection{Algebra of Linear Observables} \label{ssec:algebra-observables}

Now we construct the algebra of observables $\mathcal A(\mathscr M)$ for the quantum theory. We will not consider here the enlarged algebra generated by \textit{composite fields}, that is, the algebra generated by polynomials of the fields.

An algebra is a complex vector space equipped with a product that is associative and distributional with respect to the sum of vectors, and also is associative with respect to the multiplication by complex scalars. A $\ast$-algebra is an algebra, together with am involutive map $\ast : \mathcal A \ni A \mapsto A^* $ with the following properties:
\begin{align}
(A^*)^* &= A \\
(A + B)^* &= A^* + B^* \quad \forall B \in \mathcal A\\
(\lambda A)^* &= \bar \lambda A^* \quad \forall \lambda \in \mathbb C \\
\end{align}
where the overbar denotes conjugation of a scalar.

An algebra can be defined giving the set of \textit{generators} of the algebra and their \textit{relations}. An algebra is generated by a set of elements if any $A \in \mathcal A$ can be written as a finite complex linear combinations of products of the generators.

 As we said, the field $\phi$ is a distribution smeared by a test function $f \in C^\infty_0(\mathscr M)$. We define the smeared field $\phi(f)$ via the pairing
\begin{equation}
 \phi(f) = \langle \phi, f \rangle = \int_{\mathcal M} f \phi \dd \text{vol}_{\mathscr M} \ .
\end{equation}   

The \textit{algebra of linear observables} is a $\ast$-algebra generated by the symbols $\mathbf 1$ and $\phi(f)$ and by the following relations, satisfied for all $f,g \in \mathcal E$ and for all $\alpha,\beta \in \mathbb R$:
\begin{align}
\textbf{Linearity}&: \phi(\alpha f + \beta g) = \alpha \phi(f) + \beta \phi(g) \\
\textbf{Klein-Gordon Equation}&: \phi((\square - m^2 - \xi R)f) = 0 \\
\textbf{Hermiticity}&: \phi(f)^* = \phi(\bar{f}) \\
\textbf{Canonical Commutation Relations (CCR)}&: [\phi(f), \phi(g)] = i E(f,g) \mathbf 1 = i w(\psi_f,\psi_g) \mathbf{1}
\end{align}

This algebra should be considered "abstract", in the same sense in which a Lie algebra is defined with no reference to any particular representation.

Although different approaches exist, here we consider a familiar approach in Physics, in which one finds a classical solution for a relativistic wave equation $\phi(f)$, and then one quantizes it with suitable commutation relations.

This algebra does not contain all the physically interesting observables, since, for example, there is no element in $\mathcal A$ which can be identified as the stress-energy tensor of the field, nor there are point-wise interactions as $\phi^2(x)$. However, this algebra is sufficient for our purposes.

The algebra defined this way is not a C*-algebra. A C*-algebra is a $\ast$-algebra complete with respect to a norm with the properties that $\norm{AB} \leq \norm{A} \norm{B}$, and $\norm{A^*A} = \norm{A}^2$. This is not always a necessary requirement, but in some situations it may prove useful. To construct a C*-algebra for the Klein-Gordon field one uses the algebra generated by $\mathbf 1$ and the symbol $W(f)$, together with the \textit{Weyl relations}
\begin{align} \label{eq:Weyl-relations}
W(0) &= \mathbf 1 \\
W(f)^* &= W(-f) = W(f)^{-1} \\
W(f)W(g) &= e^{- \frac{i}{2} w(\psi_f, \psi_g)}W(f + g)
\end{align} 
where $\psi_f, \ \psi_g $ are compactly supported, classical solutions of the Klein-Gordon equation, $\psi_f = Ef $ and $\psi_g = Eg$. The Weyl operators are formally related to the $\ast$-algebra operators as exponentials,
\begin{equation} \label{eq:Weyl-operators-classical-solutions}
W(f) = e^{i \phi(f)} \ .
\end{equation}

Two important properties of the quantum theory are
\begin{enumerate}
    \item \textbf{Causality}: elements of the algebra localized in causally disjoint regions commute, i.e., if $f,g \in C_0^\infty(\mathscr M)$ are such that $supp(f) \cap J^\pm(supp(g)) = \emptyset \Rightarrow [\phi([f]),\phi([g])] = 0$.
    \item \textbf{Time-slice axiom} If $\Sigma$ is a partial Cauchy surface, let $\mathcal O$ be any fixed open neighbourhood of $\Sigma$. Then, the algebra $\mathcal A$ of the full space is generated by $\mathbf 1$ and by $\phi(f)$, where $suppf \subseteq \mathcal O$. This axiom (which is actually a proven theorem in the case of free fields) lets one introduce a dynamical structure in the theory, and in particular means that a measurement in a finite time interval suffices to predict all other observables.
\end{enumerate}

\subsection{States} \label{ssec:states}
The algebra of observables encodes the physical requirements of locality, causality and covariance, together with the quantum structure of the commutation relations (in the form of CCR or the Weyl relations). The second step of the quantization procedure is the definition of a state. This enables one to evaluate the expectation values of observables, and thus to confront the theory with the experiment. This way, the abstract framework becomes more concrete, and it is possible to understand the physical interpretation of the theory.

In the algebraic approach, a state is a linear, positive, normalized, and continuous functional $\omega: \mathcal A \to \mathbb C$. Positivity means $\omega(AA^*) \geq 0 \ \forall A \in \mathcal A$, while normalization requires $\omega(\mathbf 1) = 1$. Continuity may be a tricky issue, but for C*-algebras it is ensured by positivity, and this is one of the reasons why C*-algebras are preferable from a mathematical point of view.

Since a generic element of the algebra is in the form
\begin{equation}
A = c_0\mathbf 1 + \sum_{i_1} c_1^{i_1}\phi(f^1_{i_1}) + \sum_{i_1,i_2} c_2^{i_1.i_2}\phi(f^2_{i_1})\phi(f^2_{i_2}) + ... + \sum_{i_1,...,i_n}c^{i_1...i_n}_{n}\phi(f^n_{i_1})...\phi(f^n_{i_n}) \ ,
\end{equation}
a state is specified by the collection $(W_n)_{n>0}$ of its \textit{n-point functions},
\begin{equation}
W_n (f_1,...f_n) = \omega(\phi(f_1),...,\phi(f_n)) \ ,
\end{equation}
which are seen as distributions over $C^\infty_0(\mathscr M^n) $. Therefore, we can understand the continuity requirement of the state as the continuity of the n-point functions in the topology of $\ \mathbb C_0^\infty(\mathscr M)$. In the following, we will often denote a n-point function by its distributional kernel,
\begin{equation}
W_n(f_1,...,f_n) = \int_{\mathscr M} W_n(x_1,...,x_n)f(x_1)...f(x_n) \ \dd \text{vol}_{\mathscr M} \ .
\end{equation}

The Klein-Gordon equation and the CCR conditions on the algebra become conditions on the n-point functions, which in the simplest case $n=2$ are:
 \begin{align}
 \omega(\phi(Pf)\phi(g)) = W_2(Pf, g) = W_2(\phi(f),\phi(Pg)) = 0 \\
 W_2(f, g) - W_2(g,f) = i w(\psi_f,\psi_g) = i E(f,g) \ .
 \end{align}
Since the symplectic form is real, from the second condition one immediately gets
\begin{equation}
\Im(W_2(f,g)) = \frac{1}{2}w(\psi_f,\psi_g) \ .
\end{equation}
The positivity requirement on the state becomes a complicated hierarchy of conditions on the n-point functions, the simplest of which is
\begin{equation}
W_2(f, \bar f) \geq 0 \ .
\end{equation}

The mathematical definition of a state encompasses infinitely many states which are not physically sensible, because their behaviour for some observables of physical interest is too singular. Thus, we need to select some regularity conditions to single out unphysical states and determines which ones are suitable for a physical interpretation of the theory. A natural criterion is the compatibility of the state with the background isometries. Under mild hypothesis, for globally hyperbolic spacetimes with maximal symmetry this requirement admits a unique vacuum state, so that, for instance, in Minkowski spacetime, Poincaré invariance selects the familiar vacuum state, while in de Sitter spacetime one uses the Bunch-Davies vacuum. If the spacetime is not maximally symmetric, uniqueness may fail, so in more general backgrounds one needs some further regularity condition to single out (if possible) the most physically sensible state. The Schwarzschild spacetime, for example, admits different states which can seem reasonable, as the Boulware, Unruh, and Hartle-Hawking states; only physical considerations on the regular behaviour of observables permit to rule out the Boulware vacuum, while the choice between the Unruh state or the Hartle-Hawking state depends on the physical conditions of the system under consideration (the Unruh state is the most suitable for describing the formation and evaporation of black holes, while the Hartle-Hawking state describes an eternal black hole in thermal equilibrium with its exterior). Before discussing the regularity properties of the states we see how one can reconstruct, from the abstract algebra of observables and the choice of a state, the usual notions of QFT as linear operators over a Hilbert space.

\subsection{GNS Reconstruction} \label{ssec:GNS}
  The Gel'fand, Naimark, Segal (GNS) theorem lets one recover the usual representation of a QFT over a Hilbert space in terms of linear operators.
  
  A \textit{representation} of an algebra, denoted $(\mathcal H, \pi, \mathcal D)$ is a Hilbert space $\mathcal H$, together with a dense subspace $\mathcal D$ and a map from $\mathcal A$ to the algebra of closable operators over $\mathcal D$, $\pi : \mathcal A \to \mathcal \pi(\mathcal A)$, which satisfies the usual properties of a representation, for $A,B \in \mathcal A$ and $\alpha, \beta \in \mathbb C$:
  \begin{align}
  \pi(\mathbf 1) &= 1 \\
  \pi(A) \pi(B) &= \pi(AB) \\
  \pi(\alpha A + \beta B) &= \alpha \pi (A) + \beta \pi (B) \\
  \pi(A^*) &= \pi(A)^*
\end{align} 

Any state $\omega$ on $\mathcal A$ determines a unique (up to unitary equivalence) representation, called the GNS representation, or Wightman reconstruction argument, denoted by the quadruple $(\mathcal H_\omega, \pi_\omega, \mathcal D_\omega, \ket{\Omega_\omega})$. $\ket{\Omega_\omega}$ is a unit vector in $\mathcal D$ such that
\begin{equation}
\omega(A) = \mel{\Omega_\omega}{\pi_\omega(A)}{\Omega_\omega}
\end{equation} 
for any $A \in \mathcal A$, and $\pi_\omega(\mathcal A) \ket{\Omega_\omega} = \mathcal D_\omega$, implying that $\ket{\Omega_\omega}$ is cyclic for the representation. Cyclic means that, given a vector $\ket \Omega \in \mathcal H$ and an operator $\pi(A)$ over $\mathcal H$, $ \pi(A) \ket \Omega$ is dense in $ \mathcal H $.

The GNS theorem shows the correspondence between states and Hilbert space representations by explicit construction. In the case of C*-algebras, $\mathcal D_\omega = \mathcal H_\omega$ and this is why $(\mathcal H_\omega, \pi_\omega, \ket{\Omega_\omega)}$ is called the GNS triple (and this is another reason why C*-algebras are preferred).

In this representation, the smeared fields are mapped into the \textit{smeared field operators}, which are symmetric densely defined operators $\hat \phi(f) = \pi_\omega(\phi(f))$ over $\mathcal D_\omega$. We will consider \textit{regular states} only, for which the smeared field operators are essentially self-adjoint.

The induced von Neumann algebra over $\mathcal H_\omega$ is $\mathfrak U(\mathcal H) = \pi(\mathcal A)''$, where the double prime denotes the double commutant and coincides with the weak closure of $\pi(\mathcal A)$. A von Neumann algebra is $\ast$-algebra of bounded operators on a Hilbert space $\mathcal H$, closed in the weak operator topology, and containing the identity operator.

This construction shows that the algebraic approach encompass the Hilbert space formulation of QFTCS, with one fundamental generalisation: in fact, the representations of the algebra $\mathcal A$, in general, are not unitarily equivalent. This means that choosing a state (or, equivalently the cyclic vector $\ket{\Omega_\omega}$) is a highly non-trivial step, leading to different physical interpretations of the same algebraic structure. This is why the algebraic approach is preferable, as it permits to highlight the properties of the theory without referring to any specific representation. The choice of a state must be motivated in a second step, on physical grounds.

As a brief aside we comment here that the positivity requirement is motivated by physical assumptions, but there are situations, as in the Gupta-Bleuler, or the BRST quantization of gauge theories, in which states of non-definite norm must be included, called ghost fields. In these situations physically meaningful states can fail the positivity requirement to include the algebra generated by the ghost fields. In this case, the GNS reconstruction fails, as the positivity requirement is a crucial assumption, and in any case the representation should be over a space with indefinite norm instead of a Hilbert space; however, there are several generalisations of the GNS theorems which replace the positivity requirement and represents the theory over the correct space. 

\subsection{Quasifree States and One-Particle Structure} \label{ssec:quasifree-states}
The first class of states we consider are the \textit{Quasifree} or \textit{Gaussian} states. These states are completely determined by their 2-point functions. In particular, a quasifree state is defined by the two conditions:
\begin{enumerate}
    \item for all $n$ odd, $W_n = 0$;
    \item for all $n$ even, $W_n(f_1,...,f_n) = \sum_{partitions}W_2(f_{i_1},f_{i_2})...W_2(f_{i_{n-1,}, f_{i_n}})$.
 \end{enumerate}
 The "sum over partitions" is intended as the sum over all possible $n/2$ disjoint pairs $\{i_{k-1},i_k \}$ with $k = 1,...,n$.
 
 Quasifree states always exist in globally hyperbolic spacetimes, an important theorem which we will not prove. However, quasifree states admits a representation in terms of a Fock space. To see this in some detail, we first introduce another structure. Consider the symplectic space $(\mathsf{Sol},w)$, and consider a scalar product $\mu : \mathsf{Sol} \times \mathsf{Sol} \to \mathbb C$ such that
 \begin{equation} \label{ineq:1-particle-structure}
\frac{1}{4} w(\psi_f,\psi_g)^2 \leq  \mu(\psi_f,\psi_g) \mu(\psi_g,\psi_g)
 \end{equation}
In this case, we have an associated \textit{one-particle structure} $(K, \mathcal H_1)$, where $\mathcal{H}_1$ is a complex Hilbert space and $K: \mathsf{Sol} \to \mathcal H_1$ a map such that
\begin{enumerate}
        \item K is $\mathbb R$-linear and $K(\mathsf{Sol}) + i K(\mathsf{Sol})$ is dense in $\mathcal H_1$;
    \item $\braket{K \psi_f}{K \psi_g} = \mu(\psi_f, \psi_g) + \frac{i}{2} w(\psi_f, \psi_g)$ \label{eq:internal-product} 
\end{enumerate}
Conversely, if $(K, \mathcal H_1)$ satisfies the first property and $w(\psi_f,\psi_g) = 2 \Im\braket{K\psi_f}{K \psi_g}$, the scalar product $\mu(\phi, \psi) = \braket{K \phi}{K \psi} -  \frac{i}{2} w(\phi, \psi) $ satisfies the inequality \eqref{ineq:1-particle-structure}.

Using the one-particle structure we can characterise a quasifree state with its GNS representation. Although we use a characterisation in terms of $\ast$-algebras, the C*-algebra approach is equivalent for quasifree states. 

If we consider the $\ast$-algebra of linear observables, together with the scalar product $\mu$ on $\mathsf{Sol}$ defined above, then there exists a quasifree state whose 2-point function satisfies
\begin{equation} \label{eq:2PF-Quasifree-1-particle-structure}
W_2(f,g) = \mu(Ef,Eg) + \frac{i}{2}E(f,g)
\end{equation}
so that the internal product \eqref{eq:internal-product} is the map of the 2-point function into the one-particle Hilbert space.
Moreover, its GNS triple $(\mathcal H_\omega, \pi_\omega, \ket{\Omega_\omega})$ is characterised by the following properties:
\begin{enumerate}
    \item $\mathcal H_\omega$ is a symmetrized Fock space with the one-particle subspace given by $\mathcal H_1$;
    \item $\ket{\Omega_\omega}$ is the vacuum vector of the Fock space;
\end{enumerate} 
 
In this sense, quasifree states are the natural generalisation of the Minkowski vacuum state to general backgrounds, as they always admit a representation as the vacuum vector of a Fock space.

The quasifree state can be equivalently characterised by its action on the Weyl operator, $W(f)$. This can be computed by Taylor expanding the exponential,
\begin{equation}
\omega(W(f)) = \sum_{n=0}^\infty \frac{i^n}{n!} \omega(\phi(f)^n) \ .
\end{equation}

For n odd, the term in the series vanishes, while for n even, the action of the state on the product of $2n$ terms is given by the product of $n$ 2-point functions summed over all the possible partitions:
\begin{equation}
\omega(W(f)) = \sum_{n=0}^\infty -\frac{1}{(2n)!} \sum_{partitions}W_2(f,f)^{n} \ ,
\end{equation}
where we have redefined $n$. This means that each term is actually the product of $2n$ terms. How many partitions are there? The number of ways we can choose the first pair is given by the binomial coefficient, $\binom{2n}{2}$. The second pair can be chosen out of $2n-2$ elements, and so on. The total number of ways we can choose the pairs is
\begin{equation}
\binom{2n}{2} \binom{2n-2}{2}... \binom{2}{2}  = \frac{(2n)!}{2!(2n-2)!} \frac{(2n-2)!}{2!(2n-4)!} \frac{(2n-4)!}{2!(2n-6)!}...= \frac{(2n)!}{2^n} \ .
\end{equation}
However, this is actually more than the true number of different partitions, because we are counting as different partitions the configuration with the same pairs, in different order. Thus, we must divide by the number of ways one can arrange the $n$ pairs, that is, $n!$. The total number of partitions is therefore
\begin{equation}
\frac{(2n)!}{n!2^n} \ .
\end{equation}

The quasifree state functional over the Weyl operator then is
\begin{equation}
\omega(W(f)) = \sum_{n=0}^\infty - \frac{1}{n!} \frac{W_2(f,f)^n}{2^n} = e^{-\frac{1}{2}W_2(f,f)} \ .
\end{equation}

Using the one-particle structure, this can be written in two other equivalent ways, as the symplectic form is antisymmetric:
\begin{equation}
\omega(W(f)) = e^{- \frac{1}{2}\norm{K \psi_f}^2} = e^{-\frac{1}{2} \mu(\psi_f,\psi_f)} \ ,
\end{equation}
where the norm is taken with respect to the internal product of the one-particle Hilbert space.

Now we can finally recover the usual mode-decomposition quantization. In fact, we can introduce the usual creation and annihilation operators $a(\psi), a^*(\psi)$ on the Fock space, corresponding to the classical solution $\psi$, that satisfy the commutation relations $[a(\psi_1), a^*(\psi_2)] = \braket{K\psi_1}{K\psi_2}$, $[a(\psi_1), a(\psi_2)] = [a^*(\psi_1), a^*(\psi_2)] = 0$. The representation $\pi_\omega$ is then completely determined by the ladder operators, with
 \begin{equation}
 \hat \phi(f) = \pi_\omega(\phi(f)) = a(Ef) + a^*(Ef) \ .
 \end{equation}
 
In Minkowski spacetime, if we consider a set of smooth, complex-valued, "mode functions" $\psi_{\mathbf k}$, classical solutions of the Klein-Gordon equation, then we can construct $K$ for all $f \in C^\infty_0(\mathscr M)$ as the map
\begin{equation}
X \ni \mathbf k \mapsto K \psi_f(\mathbf k) = \int_{\mathscr M} \bar{\psi}_{\mathbf k}(x) f(x) \dd \text{vol}_{\mathscr M}
\end{equation} 
This is a map into $L^2(X)$. We can also assume that $\Im\braket{K \psi_f}{K \psi_g} = \frac{1}{2}E(f, g)$, so that
\begin{equation}
 W_2(f, g) = \braket{K \psi_f}{K \psi_g} = \int_{\mathscr M} \int_{\mathscr M} \int_X  \psi_{\mathbf k}(x) \bar{\psi}_{\mathbf k}(y) \frac{\dd^3 \mathbf k}{(2 \pi)^{\frac{3}{2}}\sqrt{\omega_{\mathbf k}}} \ f(x) g(y) \ \dd \text{vol}_{\mathscr M} \dd \text{vol}_{\mathscr M}
\end{equation}
defines the 2-point function of a Gaussian state. From it one can find $\mu$ and reconstruct the one-particle structure. As $K$ is a map into $L^2(X)$, the Hilbert space $\mathcal H$ of the GNS reconstruction is the bosonic Fock space over $L^2(X)$, with the classical mode-function solutions
\begin{equation}
\psi_{\mathbf k}(x) = \frac{1}{\sqrt{2 \omega_{\mathbf k}}} e^{-i(\omega_{\mathbf k}t-\mathbf{k}\cdot \mathbf x)}
\end{equation} 
 
The quantum field operator representative over the Fock space can be informally written with as
\begin{equation}
\hat \phi = \int_0^\infty  \frac{1}{(2 \pi)^{\frac{3}{2}}\sqrt{2 \omega_{\mathbf k}}} \bigg( a_{\mathbf k} e^{-i(\omega_{\mathbf k}t - \mathbf{k}\cdot \mathbf x )} + a^*_{\mathbf k} e^{i(\omega_{\mathbf k}t - \mathbf{k}\cdot \mathbf x )} \bigg) \dd^3 \mathbf{k}
\end{equation} 
with $a(\phi) = \int a_{\mathbf k} \psi_{\mathbf k}(x) \dd \mathbf k $.
Such a construction is always possible when the spacetime has a globally defined time-like isometry, as one can take the mode functions to have positive frequency with respect to global time translations.

As a final remark here we note that the scalar product $\mu$ remains arbitrary in the construction. As $\mu$ is in one-to-one correspondence with the Hilbert space construction, this in turn means that the particle content of the theory is arbitrary, an explicit realisation of the theoretical considerations made in section \ref{sec:motivations}.

\subsection{Coherent perturbations} \label{ssec:coherent-perturbations}
A coherent state is constructed starting from a cyclic and separating vector $\ket \Omega$, the GNS-representative of a state functional $\omega(A) = \ev{A}{\Omega}$ for all $A \in \mathcal A$. As we said in subsection \ref{ssec:GNS}, cyclic means that $U \ket \Omega$ for $U \in \mathfrak U$ is dense in $ \mathcal H $, where $ \mathfrak U(\mathcal H) $ is the induced von Neumann algebra over $\mathcal H$, while separating means that $U \ket \Omega = 0 \Rightarrow U = 0$.  For some test function $f$, $W(f)$ is the corresponding Weyl operator, and a coherent state of $\ket \Omega$ is a vector $W(f) \ket \Omega$ with state functional $\omega_f(A) = \omega(W(f)^*AW(f)) = \ev{W(f)^*A W(f)}{ \Omega}$. $ \psi_f = Ef$ as before is the classical solution associated with the test function. The quantum field operator $\hat \phi$ is related to the Weyl operator by \eqref{eq:Weyl-operators-classical-solutions},
\begin{equation} \label{eq:Weyl-operators-fields}
W(f) = e^{i \hat \phi(f)} = e^{ia(Ef) + ia^*(Ef)}
\end{equation}
Using the Weyl relations, then, one can show that
\begin{equation}
W(f) \ \hat \phi \ W(f)^* = \hat \phi + W(f) [\hat \phi, W(f)^*] = \hat \phi + \psi_f \mathbf 1
\end{equation}
and so $\omega_f(\hat \phi) = \omega(W(f) \hat \phi W(f)^*) = \psi_f$, thus justifying the name coherent state.

\subsection{Hadamard States} \label{ssec:hadamard-states}

As we saw, quasifree states are a large class of states generalising the Minkowski vacuum. However, since there is no regularity condition on their 2-point function, this class is still much larger than the class of physically sensible states. This additional requirement selects the so-called \textit{Hadamard states}. The conditions one requires for a state to be physically acceptable are that i) quantum fluctuations of all observables are finite; ii) the ultraviolet divergence of the 2-point function is of the same form as in the Minkowski vacuum. Under these assumptions, one can show that the 2-point function of a quasifree state for on-shell configurations of the fields must be of the Hadamard form:
\begin{equation}
W_2 (x,y) = \lim_{\epsilon \to 0} \frac{1}{8 \pi^2} \bigg( \frac{u(x,y)}{\sigma_\epsilon(x,y)}+ v(x,y) \log{\frac{\sigma_\epsilon(x,y)}{\lambda^2}} + z(x,y) \bigg)
\end{equation} 
where $\sigma_\epsilon(x,y)$ is the squared geodesic distance and $u,v,z$ are smooth functions, with $u,v$ being completely determined in terms of the metric and parameters of the equation of motion. The freedom in the choice of the quasifree state relies on $\lambda$, a free reference length and on $z$, an arbitrary symmetric function constrained by the positivity requirement and the equation of motion.

\section{Thermal Properties} \label{sec:thermal-properties}

     \subsection{KMS States} \label{ssec:KMS}

In the previous section we focussed on the general properties of any free bosonic algebra, and on the properties of those states most similar to the vacuum in Minkowski. Now we want to characterize equilibrium states at finite temperature in QFT. Even in Minkowski, these states cannot be described by a density matrix, as is done in non-relativistic Quantum Mechanics (QM), since the infinite number of degrees of freedom makes the operator $e^{-\beta H}$ unbounded and its trace $ Z = \Tr e^{-\beta H} $ diverges. Instead, as in the generalisation from flat to curved spacetimes, we need to find those fundamental properties which, on one hand, uniquely characterize the equilibrium states, and, on the other, are suitable to a generalisation to the continuum limit. In this context the properties we are interested in must be preserved by the thermodynamic limit $N,V,E \to \infty$ (the particle number, the volume, and the energy of the system) with $N/V,E/V$ finite. This limit is necessary in ordinary statistical mechanics to give an unambiguous meaning to concepts like temperature and to explain phase transitions. In that context, one usually first study the system in a finite volume, inside a box with some boundary conditions, to make it finite in size, and eventually remove the boundary and investigates the limit of a box covering the whole space. If one uses \textit{intensive quantities}, which do not depend on the volume in which they are evaluated, one expects that they remain meaningful in the thermodynamic limit. In particular, one can use the thermodynamic limit of the 2-point function to define a quasifree state in thermal equilibrium at temperature $T$. In general, this state will not be represented by a density matrix in the Fock space, but as some other vector in a Hilbert space in the GNS reconstruction.

Let's start then by an analysis of ordinary QM at finite temperature. In the \textit{canonical ensemble}, denoted with $H$ the Hamiltonian of the system and $\beta = T^{-1}$ the inverse temperature, the density matrix and the partition function are
\begin{equation}
\rho = Z^{-1} e^{- \beta H }, \quad Z = \Tr e^{- \beta H} \ .
\end{equation}
The density matrix can be seen as the representation of a state over the Fock space of the algebra of observables, with the expectation value given by 
\begin{equation} \label{eq:Gibbs-state}
\omega_\beta (A) = \Tr\rho A , \quad \forall A \in \mathcal A \ ,
\end{equation}
where as is customary in physics we denote with the same symbol the algebra and its elements $\mathcal A, \ A$, and their representation as operators over a Hilbert space, $ \pi_\omega(A), \ \mathfrak U$.
The time evolution of observables is given in the Heisenberg picture by
\begin{equation} \label{eq:time-translation-group}
\alpha_t A = e^{iHt} A e^{-iHt} \ .
\end{equation}
Due to the invariance of the trace under cyclic permutations, we can find the properties satisfied by the Gibbs state:
\begin{multline}\label{eq:property1-Gibbs-state}
\omega_\beta(\alpha_t A B) = Z^{-1} \Tr e^{-\beta H} e^{iHt} A e^{-iHt} B = Z^{-1} \Tr e^{-\beta H} B e^{iH(t+i\beta)} A e^{-iH(t+i\beta)} = \\
= \omega_\beta(B \alpha_{t+i\beta}A)
\end{multline}
where now the time evolution $ \alpha_z $ is extended to complex values, $z = t + i \gamma$. We see that the function of $z$ 
\begin{equation}
F_{A,B}(z) = \omega_\beta(B \alpha_z A) = Z^{-1} \Tr e^{-\beta H} B e^{iHt} e^{-\gamma H} A e^{-iHt} e^{\gamma H}
\end{equation} is analytic in the strip
\begin{equation}
0 < \Im z < \beta
\end{equation}
because the term $e^{-\beta H}$ provides "enough dumping" to make the trace finite in this range. On the boundary of the strip, we have the relation \eqref{eq:property1-Gibbs-state}
\begin{equation} \label{eq:twisted-time-periodicity}
F_{A,B}(t+i\beta) = G_{A,B}(t)
\end{equation}
where $G_{A,B}(t) = \omega_\beta(\alpha_t(A) B) $, a sort of "twisted" periodicity in the imaginary time. The two relations \eqref{eq:property1-Gibbs-state}, \eqref{eq:twisted-time-periodicity} are stable under thermodynamic limit, and so they represent exactly the properties needed to generalise the thermal structure to QFT. In particular, they still hold for a thermal state even when the density matrix representation which ultimately motivate them does not exist.

The first to note the analyticity property was Kubo \cite{Kubo57}, and it was used to define the thermal Green functions by Martin and Schwinger \cite{Martin59}. The adaptation to the algebraic approach due to Haag, Hugenholtz and Winnick \cite{Haag67} lead to the definition of a \textit{KMS State} which does not refer to any particular representation, nor to the time evolution itself, but can be formulated as a condition on the expectation values of a state as follows. Consider a 1-parameter automorphism of the C*-algebra of observables $\alpha_t : \mathcal A \to \mathcal A$. A state $\omega_\beta : \mathcal A \to \mathbb C $ is a \textit{KMS state} at inverse temperature $\beta$ with respect to $\alpha_t$ if it satisfies the following two conditions:
\begin{enumerate}
    \item For $A,B \in \mathcal A$, the function $F_{A,B}: \mathbb C \ni z \mapsto \omega_\beta(B \alpha_z A)$ is analytic in the strip $0 < \Im z < \beta$ and is bounded and continuous at the boundary;
    \item On the boundary, $F_{A,B}(t+i\beta) = G_{A,B}(t) $
\end{enumerate}

In the case of C*-algebras generated by Weyl relations, the conditions for the 2-point function implies analogue conditions for the more general n-point functions. The second condition can be reformulated in terms of the (integral kernel of) the 2-point function as
\begin{equation}
W_2(x,y + i \beta e^0) = W_2(y,x) \ ,
\end{equation}
where $e^0$ is a unit vector in the direction of the translation generated by $\alpha_t$. Although, in general, a KMS state is not quasifree, in the cases we will consider it will always be, and therefore the condition on the 2-point function uniquely determines the state.

The definition assumes only the existence of an algebra $\mathcal A$ with a 1-parameter family of automorphisms $\alpha_t$, making no references to the representation nor to the automorphism itself, as promised. Moreover, it makes no reference to a box of volume $V$ in which the system is confined, to boundaries, or to the thermodynamic limit. Thus, it is applicable to a broad range of quantum systems, even relativistic, so that it is suitable for a description of the thermal properties of a quantum scalar field in curved spacetimes.

In the case in which the spacetime at hand has a global Killing symmetry, this implements an isometry of the metric and therefore an automorphism of the algebra, leading to a natural notion of thermal state on the whole spacetime as a KMS state with respect to this global automorphism; however, the general definition reinforces the notion that thermal or vacuum states depend on the automorphism, and therefore on the isometry of the spacetime. Once again, we see that different observers, which move differently with respect to some isometry of the spacetime, can detect a different particle content. This phenomenon is at the heart of the Unruh effect.

Now, we can check that the KMS conditions are sufficient to single out the thermal states. As we did for the algebraic quantization, then, we see how to recover the usual notion of Gibbs state for ordinary QM starting from the algebraic definitions. If we consider a self-adjoint Hamiltonian for which $ \Tr e^{-\beta H} < 0$, then it admits a representation in terms of a density matrix, that is, a state $\omega_\beta(A) = \Tr \rho A $. The automorphism $\alpha_z$ can be defined by the translations generated by the Hamiltonian $H$. It is then implemented by a set of unitaries $U(z) = e^{-H z} \in \mathfrak U$. If we take an invariant element of the algebra $A$ under $\alpha_z$, then $F_{A,B}(z) = \omega_\beta(B\alpha_z A)$ is independent of $z$, and the KMS condition says that $F_{A,B} = G_{B,A}$. This implies that
\begin{equation}
0 = F_{A,B} - F_{B,A} = Z^{-1}(\Tr\rho BA - \Tr \rho AB) ) = Z^{-1} \Tr[A,\rho]B
\end{equation}
If $A$ commutes with $U$, and $\rho$ commutes with $A$, then $\rho \propto U(\zeta) = e^{iH\zeta}$ (we use $\zeta$ to distinguish from the action of $\alpha_z$ in $F_{A,B}(z)$, which we use in a moment). The second KMS condition at $z = i \beta$, i.e. $t=0$, for generic $A,B$ says
\begin{equation}
\begin{split}
& \Tr e^{iH \zeta}Be^{-\beta H} A e^{\beta H} = \Tr e^{iH\zeta}AB \Rightarrow \Tr e^{-\beta H}Ae^{\beta H} e^{iH\zeta} B = \Tr e^{iH \zeta} AB \Rightarrow \\
& \zeta = - i \beta
\end{split}
\end{equation}
The normalization condition finally impose $\rho = \frac{e^{-\beta H}}{\Tr e^{-\beta H}}$.

     \subsection{Interplay Between the Vacuum and KMS States} \label{ssec:interplay-vacuum-KMS}
     
As a first application of the thermodynamics properties of the states, we want to show  explicitly that thermal states with respect to some automorphism can be vacuum states with respect to another. This example wants to show the computation at the heart of the Hawking temperature and the Unruh effect (see sections \ref{sec:Hawking-temperature} and \ref{sec:Unruh-effect}), which are related but distinct phenomena exhibited by two different quantum theories, applied in two different spacetimes. Therefore, in order to keep the relevant computation clear, here we will introduce the minimum amount of technology to discuss the example, and we will discuss the quantization schemes of a scalar field in Schwarzschild and in Minkowski spacetime in the respective sections.

Consider a globally hyperbolic spacetime with spherical symmetry $\mathscr M$, on which one finds the space of compactly supported solutions of the K-G equation, $\mathsf{Sol}(\mathscr M) $. In order to define a symplectic structure, we need to identify a suitable hypersurface. For this hypersurface $\Sigma$, we can take the null limit of a Cauchy hypersurface, as we briefly explained in subsection \ref{ssec:classical-field-theory}. An example of this null limit of a Cauchy hypersurface is the past null infinity in Minkowski spacetime. We suppose that the null hypersurface $ \Sigma $ has topology $\mathbb R \otimes \mathbb S_2 $, where $\mathbb S_2$ is the topology of the 2-sphere, on which some affine parameter $V$ is defined. As we saw, the unit normal vector $n$ of $\Sigma$ is tangent to the geodesic generators, $n = \partial_\lambda$, and therefore we define the symplectic form
\begin{equation}
w_\Sigma(\psi_1, \psi_2) = \int_\Sigma \psi_2 \partial_V \psi_1 - \psi_1 \partial_V \psi_2 \dd V \dd \mathbb S_2 \ .
\end{equation} 
Here, $\dd \mathbb S_2 $ is the volume element of the 2-sphere, $\dd \mathbb S_2 = \sqrt \sigma \dd \theta \dd \varphi$. For simplicity we can assume that the radius is constant along $\Sigma$, so that we can safely leave implicit the determinant in the volume element.

Now, suppose that we are interested in the quantization of a scalar field not in the full spacetime $\mathscr M$, but on a subregion $\mathscr R$, say, the causal development of the portion of $\Sigma$ where the affine parameter assumes positive values, $ V > 0 $. $\mathscr R$ can be considered a globally hyperbolic spacetime, and the restriction to $\mathscr R$ of the solutions of the K-G equation in $\mathscr M$ gives the space of solutions of the K-G equation in $\mathscr R$. In the same way, the restriction of the symplectic form to the positive values of $V$ gives a symplectic form in $\mathscr R$; denoting $\mathscr H$ the portion of $\Sigma$ with $V > 0$, we have the symplectic space $(\mathsf{Sol}(\mathscr R), w_{\mathscr H})$, with the symplectic form simply given by
\begin{equation}
w_{\mathscr H}(\psi_1,\psi_2) = \int_{\mathscr H} \psi_2 \partial_V \psi_1 - \psi_1 \partial_V \psi_2 \ \dd V \dd \mathbb S_2
\end{equation}

Although such a construction can look somewhat artificial, this is exactly what we will consider discussing the Hawking temperature in a Schwarzschild spacetime. There, anticipating section \ref{sec:Hawking-temperature}, we consider a Schwarzschild black hole in Kruskal coordinates, and we take the null hypersurface as the past horizon of the black hole, which is a null hypersurface with topology $\mathbb R \otimes \mathbb S_2 $. The portion of past horizon with negative affine parameter is the null limit of a Cauchy hypersurface for the black hole exterior, and we consider the relation between a state defined on the whole spacetime and one in the exterior of the black hole only. Therefore, apart a couple of signs and the name of the affine parameter, the construction is exactly the same. In the Unruh effect instead we will see that the 2-point function coincides with the thermal 2-point function we computed here. So, although the artificial setting we consider here is obscure for now, it is actually a way to compute once the central quantities (the 2-point functions of the vacuum and of a KMS state) of both the Hawking and Unruh effects.

Now, starting from the symplectic structure $(\mathsf{Sol}, w) $, it is possible to construct the Weyl algebra of the observables for the free, scalar quantum field, both in $\mathscr M$ and in its subregion $\mathscr R$. In particular, from the symplectic form it is possible to determine the causal propagator, which in turn defines the antisymmetric part of the 2-point function. From there, one can defines any state of choice; but if we restrict our attention to quasifree, KMS states (including the limiting case $T \to 0$, i.e. the vacuum), we select a particular class of states, identified by their inverse temperature.

Now, the goal of this section is to study the relation between a quasifree, vacuum state in $\mathscr M$ and its restriction to $\mathscr R$. Such a restriction defines a state in $\mathscr R$, which \textit{a priori} is not a vacuum (and in fact it will not be): we want to see what state arises from the restriction of the state to a subregion, or, in other words, from the restriction of the algebra to a subalgebra.

As we saw in (\ref{eq:symplectic-form-causal-propagator-relation}), the integral kernel of the symplectic form gives the causal propagator. We show this in the case at hand (we suppress the angular variables for compactness):
\begin{equation} \label{eq:symplectic-form-V}
\begin{split}
&w(\psi_1,\psi_2) = \int_{\Sigma} \psi_2 \partial_V \psi_1 - \psi_1 \partial_V \psi_2 \dd V = \\
 &= \int_{\Sigma} \int_{-\infty}^\infty \delta(V_1-V_2) \psi_2(V_1) \ \partial_{V_2} \psi_1(V_1) - \delta(V_1 - V_2) \psi_1(V_2) \ \partial_{V_1}\psi_2(V_1) \ \dd V_1 \dd V_2 = \\
 &= 2 \int_{\Sigma}\int_{-\infty}^\infty \psi_1(V_1)\psi_2(V_2)\partial_{V_1}\delta(V_1-V_2) \ \dd V_1 \dd V_2 \ .
 \end{split}
\end{equation}
In the first step we used the properties of the Dirac delta, and in the second step we integrated by parts. As the last integral is over the whole domain of dependence of $\Sigma$, that is, on $\mathscr M$, we see that
\begin{equation}
E(V_1,V_2) = 2 \partial_{V_1}\delta(V_1-V_2) \ .
\end{equation}
The causal propagator in turn determines the antisymmetric part of the 2-point function via the canonical commutation relations:
\begin{equation}
W_2(V_1,V_2) - W_2(V_2,V_1) = i E(V_1,V_2) = 2i \partial_{V_1}\delta(V_1 - V_2) \ .
\end{equation}
On the other hand, the KMS condition $ W_2(V_2,V_1) = W_2(V_1,V_2 + i \beta) $ gives
\begin{equation}
W_2(V_1,V_2) - W_2(V_1, V_2 + i \beta) = 2i \partial_{V_1}\delta(V_1 - V_2) \ .
\end{equation}
Now, as the null hypersurface $\Sigma$ is translationally invariant in the $V$ direction, we assume that the state is invariant too, and in particular the 2-point function is a function of the difference $ V = V_2-V_1$ only (since we can subtract to both variables $V_1$, we denote $W_2(V_2 - V_1) := W_2(0, V_2 - V_1) $). We can therefore take the Fourier transform with respect to the difference $ V = V_2 - V_1 $, of both sides of the above condition. We denote with $k$ the Fourier variable, and we use the convention
\begin{equation}
\hat f(k) = \int_{\mathbb R} e^{- i k x} f(x) \dd x \ .
\end{equation}
The hat denotes the Fourier transform; there should be no confusion with the GNS-representation of the smeared field, $\hat \phi(f) = a(Ef) + a^*(Ef)$ of subsection \ref{ssec:GNS}.

The right-hand side transform is easily computed with an integration by parts:
\begin{equation}
2 i \int_{-\infty}^\infty e^{- i k V} \partial_{V} \delta(V) \dd V = - 2 k \int_{-\infty}^\infty e^{-i k V} \delta(V) \dd V = - 2k \ .
\end{equation}
while the left-hand side is computed using the known property of Fourier transform, $\hat f(x+t) = e^{ikt} f(x) $, which gives
\begin{equation} \label{eq:state-Fourier}
(1 - e^{-\beta k}) \hat W_2(k) = - 2k \Rightarrow \hat W_2(k) = - \frac{2 k}{1 - e^{-\beta k}} \ .
\end{equation} 
This condition uniquely defines the 2-point function, apart from the free parameter $\beta$; now we need to antitransform to find the 2-point function in $\mathscr M$ of a state in real space. Assuming that the state is quasifree, this condition in turn uniquely defines the state, again given its inverse temperatura $\beta$.

The antitransformation can be done in two different cases, at zero temperature ($\beta \to \infty$), to find the vacuum, or at finite temperature, to find a KMS state.

\paragraph{Vacuum with Respect to V}

First we consider the limit $\beta \to \infty$, that is, we look for a quasifree state at zero temperature. In this limit, expression (\ref{eq:state-Fourier}) reduces to
\begin{equation}
    \hat W_2(k) = - 2 \theta(k) k \ .
\end{equation}

As the 2-point function selects the positive frequencies only, it is the vacuum with respect to translations along $V$. Now, we compute the two-point function for the vacuum in real space, with an anti-Fourier transformation. As the Fourier integral is divergent, we need to introduce a regularisation $e^{-\epsilon k} $:
\begin{equation} \label{eq:vacumm2PF-computation}
    \begin{split}
        W_2(V_1 - V_2) &= \lim_{\epsilon \to 0} -2 \int_{-\infty}^\infty e^{i(V_1 - V_2) k} e^{- \epsilon k} k \theta(k) \frac{\dd k}{2 \pi} = \lim_{\epsilon \to 0} -2 \int_0^{\infty} e^{i(V_1-V_2 + i \epsilon) k} k \frac{\dd k}{2 \pi} = \\
        &= \lim_{\epsilon \to 0} \ 2 i \pdv{V_1} \int_0^{\infty} e^{i(V_1-V_2+i \epsilon) k} \frac{\dd k}{2 \pi} = \\
        &= \lim_{\epsilon \to 0} - \frac{1}{ \pi} \pdv{V_1} \frac{1}{V_1 - V_2 + i \epsilon}  = \lim_{\epsilon \to 0} \frac{1}{\pi} \frac{1}{(V_1-V_2 + i \epsilon)^2}
    \end{split}
\end{equation}

This gives the 2-point function in $\mathscr M$. However, we can find the restriction of the two-point function to $\mathscr R $, that is, the corresponding 2-point function for a quasifree state in $\mathscr R$. As $V > 0$ on $\mathscr H$, we can consider test functions with positive $V$ only. Restoring the angular variables we get 
\begin{equation} \label{eq:2PF-vacuum}
    W_2(\psi_1,\psi_2) = \lim_{\epsilon \to 0^+} \frac{1}{\pi} \int_{\mathbb R_+ \times \mathbb R_+ \times \mathbb S_{2}} \frac{\psi_1(V_1) \psi_2(V_2)}{(V_1-V_2 +i \epsilon)^2} \dd V_1 \dd V_2 \dd \mathbb S_2
\end{equation}

This is the 2-point function for a quasifree state with respect to translations in $V$ restricted to $\mathscr R$. The point is that the restriction of the vacuum state to a subregion is no longer a vacuum. We will show this key fact computing the 2-point function of a KMS state at finite temperature in $\mathscr R$, and showing that it coincides with \eqref{eq:2PF-vacuum}. In order to do so, we first make a change of variables. As $\mathscr H$ is restricted to $V > 0$, we can perform the substitution $V = e^{\kappa v} $, for any $\kappa \in \mathbb R$ (but the choice of the symbol should be rightfully suspicious), to rewrite the 2-point function in an equivalent way that will be easily connected with the 2-point function of a thermal state:
\begin{equation} \label{eq:2PF-vacuum-v}
\begin{split}
    W_2(\psi_1,\psi_2) &= \lim_{\epsilon \to 0^+} \frac{\kappa^2}{\pi} \int_{\mathbb R_+ \times \mathbb R_+ \times S^{2}} 
    \frac{\psi_1(v_1) \psi_2(v_2)}{(e^{\kappa v_1}-e^{\kappa v_2} +i \epsilon)^2} e^{\kappa(v_1+v_2)} \dd v_1 \dd v_2 \dd \mathbb S_2 = \\
    &= \lim_{\epsilon \to 0^+} \frac{\kappa^2}{\pi} \int_{\mathbb R_+ \times \mathbb R_+ \times S^{2}} \frac{\psi_1(v_1) \psi_2(v_2)}{\big(e^{\frac{\kappa}{2}(v_1-v_2)}-e^{-\frac{\kappa}{2}(v_1- v_2)} +i \epsilon \big)^2} \dd v_1 \dd v_2 \dd \mathbb S_2 \\
    &=  \lim_{\epsilon \to 0^+} \frac{\kappa^2}{4 \pi} \int_{\mathbb R_+ \times \mathbb R_+ \times S^{2}} \frac{\psi_1(v_1) \psi_2(v_2)}{\bigg( \sinh{\frac{\kappa}{2}(v_1-v_2)} + i\epsilon \bigg)^2} \dd v_1 \dd v_2 \dd \mathbb S_2
\end{split}
\end{equation}

We want to show that this is indeed a thermal state with inverse temperature $\beta = \frac{2 \pi}{\kappa}$ with respect to translations in the $v$ variable. To do this, we take a step back to \eqref{eq:state-Fourier}.

\paragraph{Thermal States with Respect to v}

Now, we compute the 2-point function for a KMS state at inverse temperature $\beta$ by anti-transforming the expression (\ref{eq:state-Fourier}). However, to understand the connection with the vacuum 2-point function in $\mathscr M$, we note that we can change variable in the integral defining the symplectic form in \eqref{eq:symplectic-form-V} from $V$ to $v$, leading to a causal propagator directly defined in $\mathscr R$:
\begin{equation} \label{eq:causal-propagator-thermal}
E(v_1,v_2) = 2 \partial_{v_1}\delta(v_1 - v_2) \ .
\end{equation}
This causal propagator in $\mathscr R$ is equivalent to the one we wrote in terms of $V$ in $\mathscr M$. With the exact same computation we did before, we find
\begin{equation}
W_2(v_1,v_2) - W_2(v_1,v_2 + i \beta) = i E(v_1,v_2) \ .
\end{equation}
Now, assuming as before the translational invariance of the hypersurface, we can take the Fourier transform with respect to the variable $v = v_2 - v_1$. As the Fourier transform depends on the coordinates, here comes the fact that the 2-point function defined this way is \textit{not} equivalent to \eqref{eq:2PF-vacuum}. However, the computation is the same and the Fourier transform is again
\begin{equation} \label{eq:state-fourier-v}
\hat W_2(k) = - \frac{2 k}{1-e^{-\beta k}} \ .
\end{equation}
It is important to remark however that this is \textit{not} the same as \eqref{eq:state-Fourier}, although they formally coincide (this is actually our fault: for laziness we use the same letter $k$, to denote the Fourier transform with respect of different variables, $V$ and $v$; to be rigorous we should call the new Fourier variable with a different letter, but the alphabet is short, and physics is long, so we will stick to $k$ and we will remark the difference with the above computation to the point of being pedantic).

The point is that \eqref{eq:state-fourier-v} gives the (Fourier transform of) the 2-point function directly in $\mathscr R$; to compute the 2-point function of a thermal state with respect to $v$, we just need to antitransform the equation keeping $\beta$ finite. To do so, we note that, for $k > 0$,
\begin{equation}
    \frac{1}{1 - e^{-\beta k}} - \theta(k) = \frac{1}{1 - e^{-\beta k}} - 1 = \frac{e^{- \beta k}}{1- e^{- \beta k }} \ ,
\end{equation}
while for $k < 0$
\begin{equation}
    \frac{1}{1 - e^{-\beta k}} - \theta(k) = - \frac{e^{\beta k}}{1 - e^{\beta k}} \ .
\end{equation}
Putting these two expressions together we have
\begin{equation} \label{eq:aF-2pt-v-intermediate}
    k \bigg( \frac{1}{1 - e^{-\beta k}} - \theta(k) \bigg) = \abs{k} \frac{e^{- \beta \abs{k}}}{1 - e^{- \beta \abs{k}}} \ .
\end{equation}

As we computed the Fourier anti-transform of $k \theta(k)$ in the last paragraph, we can compute the right-hand side to arrive at the formula we need. We first rewrite the integral to get rid of the absolute value, using the same regularisation as before, with a  limit $\epsilon \to 0^+$ intended: 
\begin{equation}
    \begin{split}
   & \int_{-\infty}^\infty \abs{k} \frac{e^{- \beta \abs{k}}}{1 - e^{- \beta \abs{k}}} e^{iv k - \epsilon \abs{k}} \frac{\dd k}{2 \pi} =
     \int_0^\infty k \frac{e^{- \beta k}}{1 - e^{- \beta k}} e^{(iv - \epsilon) k} \frac{\dd k}{2 \pi} - 
     \int_{- \infty}^0 k \frac{e^{\beta k}}{1 - e^{\beta k}} e^{(iv + \epsilon) k} \frac{\dd k}{2 \pi} = \\
   &= \int_0^\infty k \frac{e^{- \beta k}}{1 - e^{- \beta k}} e^{(iv - \epsilon) k} \ \frac{\dd k}{2 \pi} 
   + \int^\infty_0 k \frac{e^{- \beta k}}{1 - e^{- \beta k}} e^{-(iv + \epsilon) k} \ \frac{\dd k}{2 \pi} = \\
   &= 2 \Re{\int_0^\infty k \frac{e^{- \beta k}}{1 - e^{- \beta k}} e^{(iv - \epsilon) k} \ \frac{\dd k}{2 \pi}} \ .
    \end{split}
\end{equation}
In the first step we separate the integrals over positive and negative frequencies, in the second step we make the change $k \to -k$ in the second integral, and in the last step we note that the integral is in the form $(z + \bar z)/2$.
 
Now, as the integral is restricted to $k >0$, we can take the limit $\epsilon \to 0^+$, thanks to the decaying real exponential. Moreover, as $e^{-\beta k} < 1$, we can use the geometric series
\begin{equation}
\sum_{n=0}^\infty \frac{1}{x^n} = \frac{1}{1-x}
\end{equation}
to rewrite the integral, $x = e^{-\beta k} $:
\begin{equation}
    2\Re{\sum_{n=0}^\infty \int_0^\infty \ k \ e^{ivk} e^{-(n+1) \beta k } \frac{\dd k}{2 \pi}} =
    - 2 \Re{\sum_{n=1}^\infty i \pdv{v} \int_0^\infty e^{(iv - n\beta)k} \ \frac{\dd k}{2 \pi}} \ .
\end{equation}
We can now integrate term by term, arriving at
\begin{equation}
 \frac{1}{\pi} \Re{ i \pdv{v} \sum_{n=1}^\infty \frac{1}{iv - n \beta} } \ .
\end{equation}
Taking the real part, we get
\begin{equation}
    \frac{1}{2\pi} \pdv{v} \sum_{n=1}^\infty \bigg( \frac{i}{iv - n\beta} - \frac{i}{- iv - n\beta}\bigg) =  \frac{1}{2\pi} \pdv{v} \sum_{n=1}^\infty \frac{2 v}{v^2 + n^2 \beta^2} \ .
\end{equation}
The series has a known, exact form:
\begin{equation} \label{eq:anti-fourier-computation}
    \frac{1}{\pi} \pdv{v}\frac{1}{2v}\bigg( \frac{\pi v}{\beta} \coth{\frac{\pi v}{\beta}} - 1\bigg) \ .
\end{equation}
We recognise the Fourier anti-transform of $\theta(k) k$; in fact, from (\ref{eq:vacumm2PF-computation}) we see that
\begin{equation}
   \widehat{\theta(k)k} = \frac{1}{2 \pi} \pdv{v} \frac{1}{v} \ .
\end{equation}
To get the 2-point function we go back to \eqref{eq:aF-2pt-v-intermediate}, from which
\begin{equation}
\hat W_2(k) = -2\bigg ( \abs{k} \frac{e^{-\beta \abs{k}}}{1-e^{-\beta \abs{k}}} + k \theta(k) \bigg) \Rightarrow
\end{equation}

\begin{equation}
    W_2(v_1, v_2) = - \frac{1}{\beta} \pdv{v_1} \coth{\frac{\pi (v_1 - v_2)}{\beta}} = \frac{\pi}{\beta^2} \frac{1}{\sinh^2{\big [\frac{\pi}{\beta}(v_1 - v_2) \big ]}} \ .
\end{equation}
Restoring again the angular variables, the two-point function then is
\begin{equation} \label{eq:TPF-thermal}
    W_2(\psi_1, \psi_2) = \lim_{\epsilon \to 0^+} \frac{\pi}{\beta^2} \int_{\mathbb R \times \mathbb R \times S^{2}} \frac{\psi_1(v_1) \psi_2(v_2)}{\bigg( \sinh{\frac{\pi}{\beta}(v_1-v_2)} + i\epsilon \bigg)^2} \dd v_1 \dd v_2 \dd \mathbb S^2 \ .
\end{equation}
So the causal propagator \eqref{eq:causal-propagator-thermal} give rise to this 2-point function, which corresponds to a thermal state at inverse temperature $\beta$. Now, the connection with the previous computation should be clear: in fact, this 2-point function in $\mathscr R$ coincides with the restriction of the vacuum state in $\mathscr R$. This conclude the reasoning: the vacuum with respect to $V$ in $\mathscr M$ is a thermal state with respect to $v$ in $\mathscr R$ with inverse temperature $\beta = \frac{2 \pi}{\kappa}$.
    
\section{Entropy and relative entropy} \label{sec:E-EE}
Almost at the same time when the thermal properties of QFT, described by the theory of KMS states, was developed, the mathematical theory of modular automorphisms of algebras was discovered by mathematicians (in fact, both the two theories were announced during the Baton Rouge conference in 1967. While Tomita was distributing a preprint of his work on modular theory in the audience, Haag announced the results he obtained with Hugenholtz and Winnick on KMS states from the stage). Between the two, a "prestabilized harmony", as Haag called it \cite{Haag92} is highlighted, since the modular theory perfectly describes the induced von Neumann algebra of the GNS representation of KMS states. Here we review the main results of the modular theory. Our goal is to find the correct generalisation of entropy to continuum theories, and in the meantime to show how this beautifully connects with the KMS condition.

In non-relativistic QM the principal notion of entropy is that of entanglement entropy. Given a pure state $\ket \Phi$ in a Hilbert space $\mathcal H$, its associated density matrix is simply $\rho_\Phi = \ket \Phi \bra  \Phi$. If we now want to consider a subsystem localized in a subregion $\mathscr R$ of the full space, we can trace over the degrees of freedom outside $\mathscr R$, that is, in the complement $\mathscr R'$. This way, we introduce the reduced density matrix $\rho_\Phi^{\mathscr R} = \Tr_{\mathscr R'} \rho $. The quantity which measures the entanglement between different subregions of the system is the \textit{entanglement entropy}, defined as $S_{\mathscr R} = - \Tr\rho_\Phi^{\mathscr R} \log \rho_\Phi^{\mathscr R} $. This quantity was first introduced by Bombelli et al. (1986) \cite{Bombelli86} in an attempt to characterize Bekenstein-Hawking entropy as the entanglement entropy between the degrees of freedom inside and outside the event horizon, and was subsequently studied in statistical and condensed matter physics. It was early shown that the entanglement entropy is divergent in the continuum limit, and thus it is generally used introducing a cutoff, as a lattice, assuming that physical quantities are meaningful in the continuum limit as long as they do not depend on the cutoff. However, one can introduce a better-behaved quantity subtracting the vacuum entropy. More generally, given two states $\ket \Omega, \ket \Phi$ and their associated densitiy matrices $\rho_\Omega$ and $\rho_\Phi$, one can define the \textit{relative entropy} as $S(\Omega|\Phi) = - \Tr \rho_\Omega (\log\rho_\Phi - \log \rho_\Omega) $. This quantity, rather than measuring the entanglement between two localized regions,  measures the indistinguishability between states. The relative entropy is a non-local quantity, as opposed to the entanglement entropy, but in turn admits a generalisation in the continuum via the Araki formula \cite{Araki76}, constructed from the Tomita-Takesaki theory of modular automorphisms. As usual, we require that the relative entropy reduces to the known notion of entropy in the case of QM, therefore providing the correct generalisation for QFT algebras. The generalisation will be given in equation \eqref{eq:araki-formula}.  

The main result of this section is the computational tool of the relative entropy for coherent states. Indeed, relative entropy for coherent states can be computed using the symplectic form only, without referencing to the quantum theory and the expectation value of some observable. This way, from very abstract algebraic notions we are able to actually compute the entropy and to use it in connection with the energy conservation in the following sections.

    \subsection{Modular Hamiltonian and Modular Flow} \label{ssec:modular-hamiltonian}
We start very generally, giving the definitions of Tomita-Takesaki theory for general algebras, and we slowly reduce to the case of coherent states for a QFT. We will prove the properties of Tomita-Takesaki theory which are most useful for the characterization of the relative entropy of coherent states, with no attempt in being general or complete. In this subsection we mainly follows \cite{CasiniGrilloPontiello19} and \cite{Witten18}.

So, let $ \mathfrak U(\mathcal H) $ be a von Neumann algebra on a Hilbert space $ \mathcal H $, and let $\ket \Omega \in \mathcal H$ be a cyclic and separating vector. Then, there exists a unique antilinear operator $ S_\Omega$, called the \textit{modular involution} or \textit{Tomita operator}, such that
\begin{equation} \label{def:tomita-operator}
   S_{\Omega} A \ket{\Omega} = A^* \ket{\Omega} \ .
\end{equation}
From the definition, it is clear that $S^2_\Omega = 1 $, and therefore it is invertible. Moreover, $ S_\Omega \ket \Omega = \ket \Omega $.

An invertible, closed\footnote{The Tomita operator as defined in \eqref{def:tomita-operator} is not closed, but is closable. We denote the closure of $S_\Omega$ with the same symbol.} operator always has a unique polar decomposition, $S_\Omega = J_\Omega \Delta_\Omega^{1/2}$, where the \textit{modular conjugation} $J_\Omega$ is an anti-linear, unitary operator and the \textit{modular operator} $\Delta_\Omega$ is self-adjoint and non-negative. From these basic properties one can show that
\begin{equation}
S_\Omega^*S_\Omega = \Delta_\Omega^{1/2}J_\Omega^*J_\Omega\Delta_\Omega^{1/2} = \Delta_\Omega \ .
\end{equation}
As $S_\Omega \ket \Omega = S_\Omega^* \ket \Omega = \ket \Omega $, it is immediate to see that $ \Delta_\Omega \ket \Omega = \ket \Omega $ and therefore $ J_\Omega \ket \Omega = \ket \Omega $. The first important relation follows, that is,
\begin{equation}
\Delta_\Omega^{is} \ket \Omega = \ket \Omega
\end{equation}
for any $s \in \mathbb R$, which can be found by Taylor expanding in $s$ and using for each term the invariance of $\ket \Omega$ with respect to $\Delta_\Omega$. Moreover, from $S^2_\Omega = 1$, we have that
\begin{equation}
1 = J_\Omega \Delta_\Omega^{1/2} J_\Omega \Delta_\Omega^{1/2} \Rightarrow J_\Omega \Delta_\Omega^{1/2} J_\Omega = \Delta_\Omega^{-1/2} \ ,
\end{equation}
and as before $J_\Omega \Delta_\Omega^{is} J_\Omega = \Delta_\Omega^{is} $ also holds (without the minus sign in the exponential, because the modular conjugation act as conjugation on the complex coefficients of the Taylor expansion). Using twice the above property, we have
\begin{equation}
S_\Omega = J_\Omega \Delta^{1/2}_\Omega = J_\Omega^2 \Delta^{-1/2}_\Omega J_\Omega = J^2_\Omega J_\Omega \Delta^{1/2} \ .
\end{equation}
By uniqueness of the polar decomposition it follows that $J^2_\Omega = 1$.

The \textit{modular Hamiltonian} is then defined as
\begin{equation}
    K_\Omega = - \log \Delta_\Omega \ .
\end{equation}
$K_\Omega$ is a self-adjoint operator with generally unbounded spectrum. By Stone Theorem, it defines a 1-parameter group of unitary operators on the von Neumann algebra,
\begin{equation}
\alpha_s(A) = e^{-iK_\Omega s}A e^{iK_\Omega s} = \Delta_\Omega^{is} A \Delta_\Omega^{-is} \ .
\end{equation}
The one-parameter group of unitaries $ \Delta^{is} $ is called \textit{modular flow}. Note the minus sign of difference with respect to the definition of the action of the time translation group \eqref{eq:time-translation-group} in the subsection on KMS states \ref{ssec:KMS} (could we define it in a coherent way? Yes. But this way the associated temperature is positive, as we will show in a moment, and as physicists that's the best we can hope for).

We can as well define the Tomita operator for the commutant of $ \mathfrak U(\mathcal H) $, that is, the set of bounded operators $\mathfrak U'(\mathcal H) = \{ \ U' \ | \ [U,U'] = 0 \  \forall U \in \mathfrak U(\mathcal H) \  \} $, and perform the same polar decomposition with the same properties as above. The relations between the von Neumann algebra, the modular flow, and its commutant are summed up in the following
\begin{theorem}[Tomita-Takesaki]
\begin{align}
    & J_\omega \mathfrak U(\mathcal H) J_\omega = \mathfrak U(\mathcal H)' \\
    & \Delta_{\omega}^{is} \mathfrak U(\mathcal H) \Delta_{\omega}^{-is} = \mathfrak U(\mathcal H) \quad \Delta_{\omega}^{is} \mathfrak U(\mathcal H)' \Delta_{\omega}^{-is} = \mathfrak U(\mathcal H)'
\end{align} 
\end{theorem}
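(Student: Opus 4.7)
My plan is to split the theorem into two auxiliary steps followed by the main analytic argument. First, I would introduce the \emph{dual Tomita operator} $F_\Omega$, defined on the dense subspace $\mathfrak U(\mathcal H)' \ket{\Omega}$ by $F_\Omega A' \ket\Omega = A'^* \ket\Omega$ (that $\ket\Omega$ is cyclic and separating for $\mathfrak U(\mathcal H)'$ follows from it being cyclic and separating for $\mathfrak U(\mathcal H)$). A direct computation of the matrix element $(S_\Omega^* A' \ket\Omega, A\ket\Omega)$ for $A\in\mathfrak U(\mathcal H)$, $A' \in \mathfrak U(\mathcal H)'$, using only $[A,A']=0$ and the antilinearity convention $(S_\Omega^* \xi,\eta) = \overline{(\xi,S_\Omega \eta)}$, shows that $F_\Omega = S_\Omega^*$. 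Then $S_\Omega^* = (J_\Omega \Delta_\Omega^{1/2})^* = \Delta_\Omega^{1/2} J_\Omega$, which, combined with the identity $J_\Omega \Delta_\Omega^{1/2} J_\Omega = \Delta_\Omega^{-1/2}$ already derived in the excerpt, gives $F_\Omega = J_\Omega \Delta_\Omega^{-1/2}$. By uniqueness of the polar decomposition, $F_\Omega$ has the same modular conjugation $J_\Omega$ and its modular operator is $\Delta_\Omega^{-1}$.

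Next I would establish the KMS identity for the modular flow at inverse temperature $\beta = 1$. For $A,B \in \mathfrak U(\mathcal H)$, using $S_\Omega A\ket\Omega = A^*\ket\Omega$ and the selfadjointness of $\Delta_\Omega$, one computes
\begin{equation*}
(A\ket\Omega,\Delta_\Omega B\ket\Omega) = (S_\Omega^* S_\Omega A\ket\Omega, B\ket\Omega) = \overline{(A^*\ket\Omega, B^*\ket\Omega)} = \omega(BA^*),
\end{equation*}
while $(A\ket\Omega,B\ket\Omega) = \omega(A^*B)$. Defining $\sigma_z(B) = \Delta_\Omega^{iz} B \Delta_\Omega^{-iz}$ and exploiting $\Delta_\Omega^{-iz}\ket\Omega = \ket\Omega$, these two identities give the boundary values at $z=0$ and $z=-i$ of the function $F_{A,B}(z) = \omega(A\, \sigma_z(B))$. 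Spectral theory of $\Delta_\Omega$, applied to vectors in the dense subspace of analytic vectors (for instance, images of spectral projections $\chi_{[1/n,n]}(\Delta_\Omega)$), shows that $F_{A,B}$ extends holomorphically to the strip $-1 < \Im z < 0$ with continuous bounded boundary values satisfying $F_{A,B}(t-i) = \omega(\sigma_t(B)\,A)$; this is precisely the modular KMS condition.

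The main step is then to deduce $\sigma_t(\mathfrak U(\mathcal H)) = \mathfrak U(\mathcal H)$. For $A \in \mathfrak U(\mathcal H)$ and $A' \in \mathfrak U(\mathcal H)'$, I would consider the vector-valued function
\begin{equation*}
G(z) = \bigl(\sigma_z(A) A' - A' \sigma_z(A)\bigr) \ket\Omega,
\end{equation*}
which is holomorphic in the strip $0 < \Im z < 1/2$ (with appropriate approximations by analytic vectors). Using $S_\Omega A\ket\Omega = A^*\ket\Omega$ and $F_\Omega A'\ket\Omega = A'^*\ket\Omega$, together with $\Delta_\Omega^{1/2} = J_\Omega S_\Omega$ and the commutant relation, one shows that the boundary value at $z = i/2$ vanishes; the boundary value at $z=0$ vanishes by $[A,A']=0$. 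An edge-of-the-wedge / Phragmén--Lindelöf argument applied to the bounded analytic extension forces $G(t) \equiv 0$ for all real $t$, which, by cyclicity of $\ket\Omega$, yields $[\sigma_t(A), A'] = 0$. Hence $\sigma_t(A) \in \mathfrak U(\mathcal H)'' = \mathfrak U(\mathcal H)$ by the bicommutant theorem, giving one inclusion; the reverse follows by applying the same argument to $\sigma_{-t}$. The statement $\Delta_\Omega^{is} \mathfrak U(\mathcal H)' \Delta_\Omega^{-is} = \mathfrak U(\mathcal H)'$ follows either symmetrically (replacing $S_\Omega$ with $F_\Omega$) or by taking commutants.

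Finally, the identity $J_\Omega \mathfrak U(\mathcal H) J_\Omega = \mathfrak U(\mathcal H)'$ is obtained along the same lines: for $A \in \mathfrak U(\mathcal H)$, to show $J_\Omega A J_\Omega \in \mathfrak U(\mathcal H)'$ it suffices to verify $[J_\Omega A J_\Omega, B] = 0$ for every $B \in \mathfrak U(\mathcal H)$, which reduces to a computation using the established commutation $J_\Omega \Delta_\Omega^{is} J_\Omega = \Delta_\Omega^{is}$, the invariance $\sigma_t(\mathfrak U(\mathcal H)) = \mathfrak U(\mathcal H)$ just proved, and the boundary behaviour of $\sigma_z$ at $z = i/2$, where $\sigma_{i/2}$ formally implements $J_\Omega \,\cdot\, J_\Omega$ on polar-decomposed elements. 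The hard part will undoubtedly be the analytic continuation in the strip: $\Delta_\Omega$ and $K_\Omega = -\log \Delta_\Omega$ are generally unbounded, so all the complex-analytic manipulations must be first performed on a core of analytic vectors obtained by spectral truncation $\chi_{[1/n,n]}(\Delta_\Omega)\ket\Omega$, and the conclusions then extended by a density and dominated-convergence argument; this technical bookkeeping, rather than any conceptual subtlety, is where the real work of the Tomita--Takesaki theorem lies.
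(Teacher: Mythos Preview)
The paper does not prove the Tomita--Takesaki theorem: it is stated without proof and immediately used as a black box to derive the KMS property of $\omega$ with respect to the modular flow (the computation following the theorem statement). So there is no proof in the paper to compare your proposal against; you are supplying an argument where the paper supplies none.

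As an outline of a standard proof, your sketch is broadly on the right track --- the identification $F_\Omega = S_\Omega^*$, the polar decomposition $F_\Omega = J_\Omega \Delta_\Omega^{-1/2}$, and the reduction to an analytic-continuation argument in a strip are all correct ingredients. Two caveats. First, your ordering is slightly circular: you derive the KMS condition before establishing that $\sigma_t$ preserves $\mathfrak U(\mathcal H)$, but the very statement that $\sigma_t(B) \in \mathfrak U(\mathcal H)$ is part of what you are trying to prove, so the KMS identity should be understood at this stage only as an analytic relation between correlation functions, not yet as a statement about an automorphism group of the algebra. Second, the claim that the boundary value of $G(z)$ vanishes at $z = i/2$ is the genuinely hard step and does not follow from the ingredients you list; in the standard proofs (e.g.\ van Daele's or the one in Bratteli--Robinson) this requires a more delicate argument involving either the resolvent of $\Delta_\Omega$ or a careful use of the closedness of $S_\Omega$ and $F_\Omega$ on suitable cores, not merely ``$\sigma_{i/2}$ formally implements $J_\Omega \cdot J_\Omega$''. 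Your final paragraph correctly identifies the unboundedness of $\Delta_\Omega$ as the source of the technical difficulty, but the analytic vectors and spectral truncations alone do not close the gap at $z=i/2$; that is where the real content of the theorem lives.
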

The first property says that the modular conjugation maps the algebra into its commutant. The second one states that the modular flow defines an automorphism of the algebra. 
We can ask, now, how the state $\omega$ defined by the cyclic and separating vector $\ket \Omega$ behaves with respect to this automorphism, and the answer is the connection between the modular theory and the KMS theory. In fact, one can show that the state $\omega(A) = \ev{A}{\Omega} $ is KMS with respect to the modular automorphism $\alpha_s(A) $ with parameter $ \beta = 1 $:
\begin{multline}
\omega(\alpha_s (A)B) = \ev{\Delta_\Omega^{is}A \Delta_\Omega^{-is}B}{\Omega} =
 \ev{A \Delta_\Omega^{-is}B}{\Omega} =
 \ev{A \Delta^{-is}_\Omega J_\Omega \Delta^{1/2}_\Omega B^*}{\Omega} = \\
= \ev{A J_\Omega \Delta^{-is + 1/2}_\Omega B^* }{\Omega} =
 \ev{A \Delta_\Omega^{-is + 1} B}{\Omega} = \ev{A \alpha_{-(s+i)}B}{\Omega} \ .
\end{multline}
In the first step we used the invariance of $ \ket \Omega$ under the modular operator, in the second the definition of the Tomita action, in the third the invariance of the modular flow under modular conjugation and the property $J_\Omega^2 = 1 $, in the last we acted again with the Tomita operator and used again the invariance of the modular operator under modular conjugation. The unfortunate minus sign is the same unfortunate minus sign in the definition of the modular flow, but now our choice is understandable: the alternative is a negative temperature.

The consequences of this simple computation are deep: in fact, this shows that a KMS state with respect to the time translation automorphism can be represented as a cyclic and separating vector for an algebra of observables whose modular automorphism $\alpha_s$ is related to the time translations as $t = -\beta s$.

The last issue to consider is how to interpret the action of the modular flow over the observables. This is very difficult in general, but in the case of a free theory in flat spacetime there is a result, the \textit{Bisognano-Wichmann Theorem}, which links the modular flow to a geometric action. To understand its statement, let's introduce some notions.

In Minkowski $\mathscr M$, we consider the region $\mathscr W = \{ \mathbf x \in \mathbb R^{1,3} \ | \ x > \abs{t}\} $, called \textit{Rindler wedge}. This is the domain of dependence of the surface $\Sigma = \{ \mathbf x \in \mathbb R^{1,3} \ | \ t=0, \ x > 0 \} $. Consider the vacuum vector $\ket \Omega$ for the theory defined on the whole Minkowski spacetime $\mathfrak U(\mathscr M) $. By the Reeh-Schlieder theorem, this is a cyclic and separating vector for the algebra of observables restricted to the Rindler wedge $\mathfrak U(\mathscr W)$. We want to understand the action of $J_\Omega$ and $\Delta_\Omega$ on $\mathfrak U(\mathscr W) $. Then we have
\begin{theorem}[Bisognano-Wichmann] \label{th:bisognano-wichmann-theorem}
\begin{equation}
J_\Omega = \Theta U(R_1(\pi)) \quad \Delta_\Omega = e^{-2 \pi K_1} \ ,
\end{equation}
\end{theorem}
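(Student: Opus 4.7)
The plan is to identify the one-parameter unitary group implementing boosts along the $x^1$-axis, $U(\Lambda_1(2\pi s)) = e^{-2\pi i s K_1}$, as the modular flow of $\mathfrak{U}(\mathscr{W})$ with respect to $\ket{\Omega}$, and then to use CPT together with a $\pi$-rotation to identify the modular conjugation. The uniqueness of the polar decomposition of $S_\Omega$ guarantees that if we exhibit a candidate $J$ and a candidate $\Delta^{1/2}$ with the correct defining properties, they must coincide with the true modular objects.

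First, I would verify the three structural properties that uniquely characterise the modular data: $(i)$ the vacuum is invariant, $U(\Lambda_1(2\pi s))\ket{\Omega} = \ket{\Omega}$ by Poincaré invariance of $\ket{\Omega}$; $(ii)$ the boost $\Lambda_1(2\pi s)$ preserves the Rindler wedge $\mathscr{W}$, so conjugation by $U(\Lambda_1(2\pi s))$ defines an automorphism of $\mathfrak{U}(\mathscr{W})$; $(iii)$ the state $\omega(A) = \ev{A}{\Omega}$ satisfies the KMS condition at inverse temperature $\beta = 1$ with respect to this automorphism. Step $(iii)$ is the heart of the theorem, and I would establish it by analytic continuation: for $A,B \in \mathfrak{U}(\mathscr{W})$, the function
\begin{equation*}
F_{A,B}(s) = \ev{B\, U(\Lambda_1(2\pi s))\, A\, U(\Lambda_1(-2\pi s))}{\Omega}
\end{equation*}
admits a bounded holomorphic extension to the strip $0 < \Im s < 1$, continuous up to the boundary, with the twisted periodicity $F_{A,B}(s+i) = G_{A,B}(s)$ built from the flipped ordering. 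Once this is done, the characterisation of the modular flow as the unique KMS automorphism of the algebra (a consequence of the Tomita–Takesaki theorem combined with uniqueness of KMS states for a given modular automorphism) forces $\Delta_\Omega^{is} = U(\Lambda_1(-2\pi s))$, and by Stone's theorem $\Delta_\Omega = e^{-2\pi K_1}$.

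For the modular conjugation, I would then use that $S_\Omega = J_\Omega \Delta_\Omega^{1/2}$ must map $\mathfrak{U}(\mathscr{W})\ket{\Omega}$ to $\mathfrak{U}(\mathscr{W})^*\ket{\Omega}$, and by the Tomita theorem $J_\Omega$ must send $\mathfrak{U}(\mathscr{W})$ to its commutant $\mathfrak{U}(\mathscr{W})'$. By Haag duality in Minkowski, the commutant is the algebra of the opposite wedge $\mathscr{W}' = \{x : -x^1 > \abs{t}\}$. The natural antilinear map intertwining the two wedges is obtained by composing the CPT operator $\Theta$ (which sends $\mathscr{W}$ to $\mathscr{W}'$ and is antiunitary with $\Theta\ket{\Omega} = \ket{\Omega}$) with a spatial rotation $U(R_1(\pi))$ by $\pi$ around the $1$-axis, which sends $\mathscr{W}'$ back to $\mathscr{W}$. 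The composition $\Theta\, U(R_1(\pi))$ is antiunitary, fixes $\ket{\Omega}$, maps $\mathfrak{U}(\mathscr{W})$ onto $\mathfrak{U}(\mathscr{W})'$, and commutes with the boost generator $K_1$; these are precisely the properties that, together with the already identified $\Delta_\Omega^{1/2}$, make $\Theta\, U(R_1(\pi)) \cdot \Delta_\Omega^{1/2}$ satisfy the defining equation $S_\Omega A\ket{\Omega} = A^*\ket{\Omega}$, so by uniqueness of the polar decomposition $J_\Omega = \Theta\, U(R_1(\pi))$.

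The main obstacle is the KMS step: verifying analyticity of $F_{A,B}$ in the strip requires the spectrum condition (positivity of energy) and the Reeh–Schlieder property, because the analytic continuation in the boost parameter must be controlled using the spectral support of the translation generators in the forward light cone. Without the spectrum condition the boost action would not produce a well-defined holomorphic function on $0 < \Im s < 1$, so the argument genuinely uses the full content of Wightman-type QFT axioms beyond what the abstract modular theory provides.
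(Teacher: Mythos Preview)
The paper does not prove the Bisognano--Wichmann theorem. It is stated without proof, immediately after the introduction of the Rindler wedge and the Reeh--Schlieder property, and is then used as a black box in the computation of the relative entropy for coherent states and in the discussion of the Unruh effect. There is therefore no argument in the paper to compare your proposal against.

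That said, your outline is a reasonable sketch of the standard strategy (establish the KMS property of the vacuum with respect to the boost flow via analytic continuation driven by the spectrum condition, then invoke Takesaki's uniqueness of the modular automorphism group, and finally identify $J_\Omega$ by matching the defining relation of $S_\Omega$). One small geometric slip: you write that $R_1(\pi)$ ``sends $\mathscr W'$ back to $\mathscr W$'', but a rotation about the $x^1$-axis fixes both $t$ and $x^1$ and hence preserves each wedge separately; the composition $\Theta\, U(R_1(\pi))$ still maps $\mathscr W$ to $\mathscr W'$ because $\Theta$ already does, so your conclusion about $J_\Omega$ mapping $\mathfrak U(\mathscr W)$ to its commutant survives, but the intermediate sentence is garbled. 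Also, the passage from ``$\omega$ is KMS at $\beta=1$ for the boost flow'' to ``the boost flow is the modular flow'' is not a consequence of uniqueness of the polar decomposition of $S_\Omega$ alone; it requires the separate theorem (Takesaki) that the modular automorphism group is the \emph{only} automorphism group for which a given faithful normal state is KMS at $\beta=1$.
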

where $\Theta$ is the CPT operator, $U(R_1(\pi)) $ is the unitary operator representing a space rotation of $ \pi $ degrees around the $x$ axis and $K_1$ is the generator of the one-parameter group of boosts in the plane $(t,x)$. This means that at least in this particular case, we can compute the action of the modular flow as a geometric transformation on the fields. For a free QFT at finite temperature, this is connected to the KMS properties, and therefore the time translations generated by the Hamiltonian, the temperature, and the translations generated by boosts are all intertwined.

    \subsection{Relative Modular Hamiltonian and Relative Modular Flow} \label{ssec:relative-modular-hamiltonian}
Before proceeding, we introduce a technical notion which will be useful in discussing the properties of relative entropy. Given the induced von Neumann algebra $\mathfrak U(\mathcal A)$ and a cyclic and separating vector $\ket \Omega$ representative of some state $\omega$, another state $\omega_\Phi$ has a unique representative vector $\ket \Phi$ in the so-called \textit{natural cone}, that is, the set of vectors
\begin{equation}
\mathcal P_\Omega = \overline{\{ A j(A) \ket{\Omega} | A \in \mathfrak U \}} \ ,
\end{equation}
where the bar means the closure and $j_\Omega(A) = J_\Omega A J_\Omega$. Then the state is realized in the usual way, $\omega_\Phi(A) = \ev{A}{\Phi} $. The one-to-one correspondence between vectors and states comes at hand when we want to discuss the properties of states under some automorphism, and in turn does not introduce technical difficulties.

Now, given two cyclic and separating vectors, $\ket{\Omega} \in \mathcal H, \ \ket{\Phi} \in \mathcal P_\Omega$, we can generalise the construction of the entropy to the notion of relative entropy. We define the \textit{relative Tomita operator} (or \textit{relative modular involution})  as
\begin{equation} \label{eq:relative-tomita-operator}
    S_{\Omega, \Phi} A \ket{\Phi} = A^* \ket{\Omega} \ ,
\end{equation}
which admits the unique polar decomposition $S_{\Omega, \Phi} = J_{\Omega, \Phi} \Delta_{\Omega, \Phi}^{1/2}$. Since $\ket \Phi$ is in the natural cone, one can show that $ J_{\Omega, \Phi} = J_\Omega$. The relative modular Hamiltonian is defined as
\begin{equation}
    K_{\Omega, \Phi} = - \log \Delta_{\Omega, \Phi} \ .
\end{equation}
Finally, the Araki formula gives the relative entropy:
\begin{equation} \label{eq:araki-formula}
    S(\omega_\Omega | \omega_\Phi) = \ev{\log \Delta_{\Omega, \Phi}}{\Omega} = - \ev{K_{\Omega, \Phi}}{\Omega} \ .
\end{equation}
We can check that this reduces to the usual notion of relative entropy for non-relativistic quantum mechanics. In fact, in QM states are realised with density matrices, $\omega(A) = \Tr \rho_\Omega A $, while the relative modular operator reduces to
\begin{equation}
\Delta_{\Omega, \Phi} = \rho_\Omega \otimes \rho_\Phi^{-1} \ .
\end{equation}
Then, the relative entropy takes the familiar form $S(\Omega|\Phi) = - \Tr \rho_\Omega (\log\rho_\Phi - \log \rho_\Omega) $.

\subsection{Relative entropy for coherent states} \label{ssec:EE-coherent-states}

We start proving some useful properties of relative entropy in a slightly more general setting than what we will eventually consider. The main references for this subsection are \cite{CasiniGrilloPontiello19}, \cite{CiolliLongoRuzzi19}, and \cite{HollandsIshibashi19}. For now, we take two cyclic and separating vectors $\ket \Omega$ and $\ket \Phi$ and some unitary operator $U \in \mathfrak U(\mathcal H) $ which is an automorphism of the algebra, $ U^* \mathfrak U U = \mathfrak U $. 

In this case, for $A \in \mathfrak U$, we have
\begin{equation}
    US_{\Omega, \Phi}U^* A U \ket{\Phi} =  US_{\Omega, \Phi}(U^* A U) \ket{\Phi} = U (U^* A U)^* \ket{\Omega} = A^* U \ket{\Omega} \ ,
\end{equation}
which implies 
\begin{equation} \label{eq:U1}
    US_{\Omega,\Phi} U^* = S_{U\Omega, U\Phi}
\end{equation}
By taking the polar decomposition,
\begin{equation}
UJ_\Omega \Delta^{1/2}_{\Omega, \Phi} U^*= U J_\Omega U^* U \Delta^{1/2}_{\Omega, \Phi} U^* = J_\Omega \Delta^{1/2}_{U \Omega,U \Phi}  \ ,
\end{equation}
one gets $ \Delta_{U \Omega, U \Phi} = U \Delta_{\Omega, \Phi} U^*$, proving that $U J_\Omega U^* =J_\Omega j_\Omega(U) J j_\Omega( U^*) J_\Omega = J_\Omega $, using $ J_\Omega J_\Omega = 1$ and $ j_\Omega(U) \in \mathfrak U'$. This becomes a property of relative entropy, expanding the logarithm in Taylor series:
\begin{equation} \label{eq:K}
\begin{split}
    & - K_{U \Omega, U \Phi} = \log (\Delta_{U \Omega , U\Phi}) = \log (U\Delta_{\Omega \Phi}U^*) = \sum \frac{(-1)^n}{n!}(U\Delta_{\Omega \Phi}U^*)^n = \\
    &= \sum \frac{(-1)^n}{n!} U\Delta_{\Omega \Phi}U^* \underbrace{\cdots\,}_\text{$n$ times} U\Delta_{\Omega \Phi}U^* = \\
    &=U \sum \bigg( \frac{(-1)^{n}}{n!} \Delta_{\Omega \Phi}^n \bigg) U^* = U\log \Delta_{\Omega \Phi}U^* \\
    &= - U K_{\Omega, \Phi} U^* \ .
\end{split}
\end{equation}
In the first line we use the property we just proved and expanded the logarithm, in the second we expanded the $n$-th power of the $n$-th coefficient of the series, in the third we noted that $U^*U= \mathbf 1$.

Now, we write explicitly $\ket \Phi = j_\Omega(U) U \ket \Omega$, for some unitary operator $U \in \mathfrak U(\mathcal H)$. The operator $\tilde U = j_\Omega(U) U$ is still a unitary operator, and the corresponding state functional is $\omega_\Phi(A) = \braket{\Omega U}{A U \Omega} = \omega(U^*AU)$. The above property in this special case becomes
\begin{equation}
    S_{\Omega,\Phi} = S_{\Omega, \tilde U \Omega} = S_{\tilde U \tilde U^* \Omega, \tilde U \Omega} = \tilde U S_{\tilde U^* \Omega, \Omega} \tilde U^* = U j_\Omega(U) S_{\tilde U^*\Omega,\Omega} (Uj_\Omega(U))^* \ . 
\end{equation}
This again becomes a property of the relative modular operator, $ \Delta^{1/2}_{\Omega, U \Omega} = U \Delta^{1/2}_{U^* \Omega, \Omega}U^* $. These properties let one express the relative entropy between two coherent states in terms of the relative entropy of a coherent state with respect to the vacuum. In fact, suppose we have \textit{two} unitary operators, $\tilde U, \tilde V$, defining two different vectors, $\ket \Phi = \tilde U \ket \Omega $ and $\ket \Psi = \tilde V \ket \Omega$ in the natural cone of $\ket \Omega$. Suppose we want to compute the relative entropy between the vectors $\ket \Phi$ and $\ket \Psi$. Then we have
\begin{equation}
S(\omega_\Psi | \omega_\Phi) = - \ev{\tilde U^* K_{\tilde V\Omega, \tilde U\Omega} \tilde U}{\Omega} = - \ev{K_{\tilde U^* \tilde V \Omega,\Omega}}{\Omega} = S(\omega_{ \tilde U^* \tilde V \Omega}|\omega_\Omega) \ .
\end{equation}
In particular, if $U = W(f)$, $V = W(g)$, for some test functions $f,g$, the vectors $\ket \Phi$ and $\ket \Psi$ are coherent vectors, and the vector $\tilde U^* \tilde V \ket \Omega$ can be written as
\begin{equation}
\tilde U^* \tilde V \ket \Omega = W(-f) j_\Omega(W(-f)) \  W(g) j_\Omega(W(g)) = W(g-f)j_\Omega(W(g-f)) \ket \Omega
\end{equation}
using the Weyl relations and the fact that $J_\Omega$ conjugates the phase in the Weyl relation, so the two phases cancel out. The result is still a coherent vector, and therefore we can restrict our attention to the relative entropy of a coherent vector with respect to $\ket \Omega$.

We can now prove the first important result, that is, that the relative Tomita operator between $\ket \Omega$ and $\ket \Phi$ can be computed using the Tomita operator only.
\begin{claim} \label{claim-relative-tomita-and-tomita-relation}
    \begin{equation*}
        S_{\Omega, \Phi} = U S_{\Omega} j_{\Omega}(U^*) 
    \end{equation*}
And by polar decomposition
\begin{equation}
\Delta^{1/2}_{\Omega, \Phi} = j_{\Omega}(U) \Delta^{1/2}_{\Omega} j_{\Omega}(U^*) \ \ K_{\Omega, \Phi} = j_\Omega(U) K_\Omega j_\Omega(U^*) \ .
\end{equation}

\end{claim}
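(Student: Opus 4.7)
The plan is to verify the formula by direct action on a dense domain and then invoke uniqueness of the relative Tomita operator, after which the polar decomposition follows by an algebraic manipulation using $J_{\Omega,\Phi}=J_\Omega$.

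First I would set $T := U\,S_{\Omega}\,j_\Omega(U^*)$ and show that $T$ satisfies the defining equation \eqref{eq:relative-tomita-operator} of $S_{\Omega,\Phi}$, namely $T A \ket{\Phi} = A^*\ket{\Omega}$ for every $A\in\mathfrak U$. Writing $\ket\Phi = U j_\Omega(U)\ket\Omega$ (where $U$ and $j_\Omega(U)$ commute because $j_\Omega(U)\in\mathfrak U'$), one has
\begin{equation*}
T A \ket\Phi = U\, S_\Omega\, j_\Omega(U^*)\, (AU)\, j_\Omega(U)\ket\Omega .
\end{equation*}
Using $[\mathfrak U,\mathfrak U']=0$ to pull $j_\Omega(U^*)$ through $AU$, and the fact that $j_\Omega$ is a $*$-anti-homomorphism so $j_\Omega(U^*)j_\Omega(U)=j_\Omega(U^*U)=1$, this collapses to $U\, S_\Omega (AU)\ket\Omega$. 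The defining property $S_\Omega B\ket\Omega = B^*\ket\Omega$ for $B=AU\in\mathfrak U$ then yields $U\, U^*A^*\ket\Omega = A^*\ket\Omega$, as required.

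Next I would invoke uniqueness: $\ket\Phi$ is cyclic and separating (unitary images of a cyclic and separating vector remain such), so the set $\mathfrak U\ket\Phi$ is a dense core on which $T$ agrees with the closable antilinear map defining $S_{\Omega,\Phi}$. Since $T$ is the composition of bounded unitaries with the closed operator $S_\Omega$, taking closures gives $T=S_{\Omega,\Phi}$.

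For the polar decomposition, I would use the identity $U J_\Omega = J_\Omega\,(J_\Omega U J_\Omega) = J_\Omega\, j_\Omega(U)$ to rewrite
\begin{equation*}
S_{\Omega,\Phi} = U J_\Omega \Delta_\Omega^{1/2} j_\Omega(U^*) = J_\Omega\, j_\Omega(U)\,\Delta_\Omega^{1/2}\, j_\Omega(U^*).
\end{equation*}
Because $\ket\Phi\in\mathcal P_\Omega$, we already know $J_{\Omega,\Phi}=J_\Omega$, and $j_\Omega(U)\Delta_\Omega^{1/2} j_\Omega(U^*)$ is manifestly positive self-adjoint (unitary conjugation of a positive operator, since $j_\Omega(U^*)=j_\Omega(U)^*$ and $j_\Omega(U)$ is unitary). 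By uniqueness of the polar decomposition, this is $\Delta_{\Omega,\Phi}^{1/2}$, and taking $-\log$ of the square gives $K_{\Omega,\Phi}=j_\Omega(U) K_\Omega j_\Omega(U^*)$.

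I do not expect a serious obstacle: the work is essentially bookkeeping with the commutation $[\mathfrak U,\mathfrak U']=0$ and the anti-homomorphism property of $j_\Omega$. The most delicate point is the domain issue for $S_\Omega$ versus $S_{\Omega,\Phi}$, which is handled by observing that both closable operators agree on a common dense core and by using that $U$ and $j_\Omega(U^*)$ are bounded, so they preserve closures without enlarging the domain problem.
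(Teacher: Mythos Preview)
Your argument is correct and follows essentially the same route as the paper: verify the defining relation $S_{\Omega,\Phi}A\ket\Phi = A^*\ket\Omega$ by commuting $j_\Omega(U^*)\in\mathfrak U'$ past $AU\in\mathfrak U$ and cancelling it against $j_\Omega(U)$, then read off the polar decomposition using $J_{\Omega,\Phi}=J_\Omega$. You supply more detail on domains and uniqueness than the paper does (which simply writes the chain of equalities), but the logic is identical; one small terminological slip is that $j_\Omega$ is an antilinear $*$-homomorphism rather than an anti-homomorphism (it does not reverse products, since $J_\Omega^2=1$), though your computation $j_\Omega(U^*)j_\Omega(U)=1$ remains valid.
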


\begin{proof}
    \begin{equation*}
        \begin{split}
        & ( U S_{\Omega} j_{\Omega}(U^*) ) \ A \ket{\Phi} =
         U S_{\Omega} (AU) \ket{\Omega} = A^* \ket{\Omega} = S_{\Omega, \Phi} A \ket{\Phi} \Rightarrow \\
        & J_{\Omega} \Delta_{\Omega, \Phi}^{1/2} = U J_{\Omega} \Delta_{\Omega}^{1/2} j_{\Omega}(U^*) \Rightarrow \\
        &\Delta_{\Omega, \Phi}^{1/2} = j_{\Omega}(U) \Delta_{\Omega}^{1/2} j_{\Omega}(U^*) \ .
        \end{split}
    \end{equation*}
\end{proof}
This means that the relative entropy between $ \ket \Omega$ and $\ket \Phi $ can be computed from the modular operator only. This is shown by direct computation. We start from the expression of the relative entropy, using the claim (\ref{claim-relative-tomita-and-tomita-relation}):
\begin{equation}
    S(\omega_{\Omega} | \omega_{\Phi} ) = \ev{\log\Delta_{{\Omega}, U \Omega}}{\Omega} = 2  \ev{\log\Delta_{\Omega,\Phi}^{1/2}}{\Omega} = 2 \ev{\log(j_{\Omega}(U) \Delta^{1/2}_{\Omega} j_{\Omega}(U^*))}{\Omega} \ .
\end{equation}
By the same computation of (\ref{eq:K}), one has $  \log (j_{\Omega}(U)\Delta_{\Omega}j_{\Omega}(U^*) ) =j_{\Omega}(U) \log \Delta_{\Omega} j_{\Omega}(U^*) $. Exploiting this, one can rewrite the above expression making use of the invariance of $\ket \Omega$ under $J_\Omega$:
\begin{equation*}
   \mel{\Omega}{J_{\Omega}U J_{\Omega} \log \Delta_{\Omega} J_{\Omega}U^*J_{\Omega}}{\Omega} = \mel{\Omega U^*}{J_{\Omega} \log \Delta_{\Omega} J_{\Omega}}{U^* \Omega} \ .
\end{equation*}
Now, as $J_\Omega \Delta_\Omega J_\Omega = \Delta_\Omega^{-1}$, one can conclude:
\begin{equation*}
= - \mel{\Omega U^*}{\log \Delta_\Omega}{U^* \Omega} \\
= i \eval{ \dv{t} \mel{\Omega U^*}{\Delta_\Omega^{it}}{U^* \Omega}}_{t=0} \Rightarrow
\end{equation*}
\begin{equation} \label{eq:relative-entropy-modular-hamiltonian}
S(\omega_\Omega |\omega_{U\Omega} ) = - \ev{K_\Omega}{U^*\Omega} = i \eval{\dv{t}\ev{e^{-iKt}}{U^* \Omega}}_{t=0} \ .
\end{equation}

Finally, we specialise to the case in which $\ket \Omega$ is the vacuum and $U = W(f)$ is a Weyl operator for some smearing test function $f$. Then, $\ket \Phi = j_\Omega(W(f))W(f) \ket \Omega$ is a coherent vector with respect to $\ket \Omega$ in its natural cone, with state representative $\omega_f(A) = \ev{A}{\Phi} = \omega(W^*(f) A W(f))$. We use $f$ as a label to remember that is a coherent state, and can be considered a classical perturbation $\psi_f= Ef$ of the vacuum. The entropy for coherent states is indeed computed using the classical structure only, that is, from the symplectic form of $\mathsf{Sol}$. In fact, substituting the definitions of coherent state in \eqref{eq:relative-entropy-modular-hamiltonian}, one has
\begin{equation}
S(\omega | \omega_f) = i \eval{ \dv{t} \ev{W(f)^* \Delta^{it}W(f)}{\Omega}}_{t=0} \ .
\end{equation}
Using the invariance of the vacuum with respect to $\Delta_\Omega$ it is possible to rewrite this expression as the action of the automorphism $\alpha_t$ over the Weyl operator:
\begin{equation}
i \eval{\dv{t} \ev{W^*(f)\Delta^{it} W(f) \Delta^{-it}}{\Omega}}_{t=0} = i \eval{\dv{t} \ev{W^*(f) \alpha_tW(f)}{\Omega}}_{t=0} \ .
\end{equation}
Defining $\alpha_tW(f) = W(\alpha_t(f)) $, one can use the Weyl relations,
\begin{equation}
W^*(f)W(\alpha_t(f)) = e^{-\frac{i}{2}w(-\psi_f, \alpha_t(\psi_f))}W(\alpha_t(f)-f)
\end{equation}
and
\begin{equation}
 \ev{W(f)}{\Omega} = e^{-\frac{\norm{K\psi_f}^2}{2}} \ ,
\end{equation}
from which one can write the expectation value as
\begin{multline}
\ev{W^*(f)W(\alpha_t(f))}{\Omega} = e^{-\frac{i}{2}w(-\psi_f, \alpha_t(\psi_f))}\ev{W(\alpha_t(f)-f)}{\Omega} = \\
\exp{-\frac{i w(-\psi_f, \alpha_t(\psi_f))}{2} - \frac{\norm{K(\alpha_t(f)-f)}^2}{2} } \ .
\end{multline}

Now, we can take the derivative with respect to $ t $:
\begin{equation}
\begin{split}
&i \dv{t} \eval{\ev{W^*(f)W(f)}{\Omega}}_{t=0} = \\
&= \Bigg[ \bigg( \frac{1}{2} \dv{t}\bigg(w(-\psi_f, \alpha_t(\psi_f)\bigg) -i \norm{K(\alpha_t(\psi_f) - \psi_f)} \dv{t} \norm{K(\alpha_t(\psi_f) - \psi_f)} \bigg) \times \\
&e^{-(iw(-\psi_f, \alpha_t(\psi_f))+\norm{K(\alpha_t(\psi_f) - \psi_f)}^2)/2} \Bigg]_{t=0} =\\
&= \frac{1}{2} w \bigg( \eval{\dv{\alpha_t(\psi_f)}{t}}_{t=0}, \psi_f \bigg)  \ .
\end{split}
\end{equation}
This conclude our derivation: the relative entropy between a coherent state and the vacuum is given by the derivative of the symplectic form of the associated classical solution:
\begin{equation} \label{eq:EE-foundamental-formula}
    S(\omega | \omega_f) = \frac{1}{2} w(\eval{\dv{t} \alpha_t(\psi_f)}_{t=0}, \psi_f) \ .
\end{equation}

What remains to compute is the action of the automorphism $\alpha_t$ over the classical field $\psi_f$, which can be computed in some cases (as ours) as a Poincaré transformation thanks to the Bisognano-Wichmann theorem.
    
\section{Unruh Effect} \label{sec:Unruh-effect}
In this section we want to show an application in flat spacetime of the formalism we introduced, namely, the Unruh effect. The Unruh effect can be considered as the physical interpretation of the Bisognano-Wichmann theorem, another example of "prestabilized harmony" between mathematics and physics; in fact, the two were independently discovered in the same years (the original papers were both published in 1976). Historically, the Unruh effect has been the first known phenomenon in which the particle content of a theory was shown to be manifestly observer-dependent. The Unruh effect consists in the fact that, if a system is in Minkowski vacuum, an observer in constant acceleration detects a thermal bath of particles, where an inertial observer would observe none. In the algebraic framework this can be explained as the fact that to different observers are associated different states, which are invariant under different algebra automorphisms. In particular, an accelerated observer sees a restricted algebra of observables, because the causal development of its trajectory does not cover the whole Minkowski spacetime but only a portion of it, called \textit{Rindler wedge}, while the causal development of an inertial observer is the whole Minkowski spacetime, and therefore one naturally associates to an inertial observer the Minkowski vacuum. Crucially, in the Rindler wedge the orbits of boosts isometries are globally time-like and complete, so the Rindler wedge can be seen as a globally hyperbolic space on its own, with the "time translations" defined by Lorentz boosts. An accelerated observer in Minkowski is therefore associated to a quasifree state invariant under boosts symmetries, which are related to the action of the modular operator through the Bisognano-Wichmann theorem, and therefore such a state is KMS.

Let's see the actual computation. This is similar to what we did in subsection \ref{ssec:interplay-vacuum-KMS}. The Rindler spacetime is defined, as we said in the introduction to the Bisognano-Wichmann theorem \ref{th:bisognano-wichmann-theorem}, as $\mathscr W = \{ \mathbf x \in \mathbb R^{1,3} \ | \ x > \abs{t} \} $ equipped with Minkowski metric $\dd s^2 = - \dd t^2 + \dd x^2 + \dd y^2 + \dd z^2 $. The boost
\begin{align} \label{eq:boosts}
t \mapsto t \cosh \eta + x \sinh \eta \\
x \mapsto t \sinh \eta + x \sinh \eta
\end{align}
defines the orbits of accelerated observers with constant acceleration in the $x-t$ plane. The transformation \eqref{eq:boosts} is generated by the Killing vector $k = x \partial_t + t \partial_x$. This justifies a change of coordinates,
\begin{align}
t = \rho \sinh \eta \ \Rightarrow \dd t = \sinh \eta \dd \rho + \rho \cosh \eta \dd \eta \\
x = \rho \cosh \eta \ \Rightarrow  \dd x = \cosh \eta \dd \rho + \rho \sinh \eta \dd \eta
\end{align}
so that the Killing vector becomes $k = \partial_\eta$, while the metric takes the form
\begin{equation} \label{eq:rindler-metric}
\dd s^2 = - \rho^2 \dd \eta^2 + \dd \rho^2 + \dd y^2 + \dd z^2 \ .
\end{equation}
The advantage of this form is that the worldlines along $\eta$ are the trajectories of the boost transformation. If we make another coordinate change, $\rho =a^{-1} e^{a \xi} , \eta = a \tau$, which is related to the original Minkowski coordinates as
\begin{equation} \label{eq:change-of-variables-Lass}
\begin{split}
t &= \frac{e^{a \xi}}{a} \sinh a \tau \\
x &= \frac{e^{a\xi}}{a} \cosh a \tau
\end{split}
\end{equation}
for some $a \in \mathbb R_+$, the metric becomes
\begin{equation} \label{eq:lass-metric}
\dd s^2 = e^{2a \xi} (- \dd \tau^2 + \dd \xi^2) + \dd y^2 + \dd z^2 \ .
\end{equation}

The hypersurfaces $\tau = const.$ are Cauchy surfaces for the Rindler spacetime. Observers moving along $\tau$ are accelerated observers in Minkowski, as can be seen writing \eqref{eq:rindler-metric} with $\eta = a \tau$. As the metric does not depend on $\eta$, the transformation $\alpha_s : \tau \mapsto \tau + s$ is an isometry, corresponding to a boost in Minkowski, and as promised the geodesics tangent to $k = \partial_\eta$ are everywhere time-like, $\norm{k}^2 = - e^{2 a \xi}  $. In fact, this vector becomes null at infinity, which corresponds to a Rindler horizon for accelerated observers in Minkowski.

The algebraic quantization in Minkowski proceeds as we outlined in the previous sections. In particular, for a free, scalar, massless field there exists a quasifree state whose 2-point function is
\begin{equation}
W_2(x_1, x_2) = \frac{1}{2\pi^2}\frac{1}{( x_1 - x_2 + i\epsilon)^2} \ \forall x_1,x_2 \in \mathbb R^{1,3} \ ,
\end{equation}
where $\epsilon$ is any future-directed time-like vector.

We want to prove that this indeed is a KMS 2-point function with respect to translations in $\tau$, and we want to determine the temperature. For simplicity, we choose a coordinate system such that $y = z = 0$. We perform the change of variables \eqref{eq:change-of-variables-Lass}, obtaining
\begin{multline}
    W_2(x_1,x_2) = \frac{1}{2 \pi^2}  \frac{1}{- (t_1-t_2)^2 + (x_1-x_2)^2 - \epsilon^2} = \\
    = \frac{a^2}{2 \pi^2} \frac{e^{- 2 a \xi}}{- (\sinh a\tau_1 - \sinh a \tau_2)^2 + (\cosh a \tau_1 - \cosh a \tau_2)^2 - \epsilon^2} \ .
\end{multline}

Now we use the hyperbolic identities $\sinh x - \sinh y = 2 \cosh(\frac{x+y}{2}) \sinh(\frac{x-y}{2}) $, $ \cosh x - \cosh y = 2 \sinh(\frac{x+y}{2}) \sinh(\frac{x-y}{2}) $, and $\cosh^2 x - \sinh^2 x = 1 $, in order to rewrite the 2-point function as
\begin{equation}
    \begin{split}
&W_2(x_1,x_2) = \\
&=  \frac{a^2 e^{-2 a \xi}}{8 \pi^2} \bigg[ - \cosh^2 \frac{a}{2}(\tau_1+\tau_2) \sinh^2 \frac{a}{2}(\tau_1 - \tau_2) + \sinh^2 \frac{a}{2}(\tau_1+\tau_2) \sinh^2 \frac{a}{2}(\tau_1-\tau_2) - \epsilon^2 \bigg ]^{-1} \\
&= - \frac{a^2}{8 \pi^2} \frac{e^{-2 a \xi}}{\sinh^2 \frac{a}{2}(\tau_1 - \tau_2) - \epsilon^2} \ .
    \end{split}
\end{equation} 
On the other hand, the Jacobian of the transformation \eqref{eq:change-of-variables-Lass} is $J = e^{2a\xi} $ (no identification with the modular conjugation is intended), so the 2-point function becomes (again omitting the orthogonal variables $y$ and $z$)
\begin{equation}
W_2(\psi_1(\tau_1, \xi_1), \psi_2(\tau_2,\xi_2)) = \lim_{\norm{\epsilon} \to 0} - \frac{a^2}{8 \pi^2} \int \frac{\psi_1(\tau_1, \xi_1) \psi_2(\tau_2,\xi_2)}{\sinh^2 \frac{a}{2}(\tau_1 - \tau_2) - \epsilon^2} \dd \tau_1 \dd \tau_2 \dd \xi_1 \dd \xi_2 \ .
\end{equation}
But we have already met a close relative of this formula in subsection (\ref{ssec:interplay-vacuum-KMS}): it was the 2-point function for a KMS state, (\ref{eq:TPF-thermal}). The coefficient in the hyperbolic sine therefore gives the Unruh temperature,
\begin{equation}
\frac{\pi}{\beta} = \frac{a}{2} \Rightarrow \beta = \frac{2 \pi}{a} \ .
\end{equation}

This is the temperature Unruh introduced in his original work \cite{Unruh76}. As we said, the Unruh effect exhibits in concrete form some crucial facts we have already mentioned in different passages: i) The restriction of a vacuum state to a subregion is a thermal state with respect to the natural automorphism in that region; in this context it is applied because an accelerated observer sees a Rindler horizon, and therefore is causally disjoint from some part of the Minkowski space; ii) The Unruh effect is just a manifestation of the Bisognano-Wichmann theorem, in the fact that the modular operator acts as a boost; this mathematical theorem becomes an observable phenomenon if one sees that the boosts are the time-like isometry of an accelerated observers, and therefore the time translations and the modular flow coincides: the connection between temperature, modular flow and time evolution we saw before becomes an identification for the accelerated observer. The identification of time with modular time has profound implications for the nature of the arrow of time and its relation with the second law of thermodynamics, the so-called thermodynamic arrow of time. For a fascinating discussion, see \cite{Longo19b}. 

\chapter{Entropy from coherent states for black hole and cosmological horizons} \label{ch:new-results}

\section{Schwarzschild Geometry} \label{sec:Schw-geometry}

We start reviewing the structure of Schwarzschild spacetime and its maximal extension, the Kruskal spacetime, mainly to set the notation and remark on some facts which will prove useful later.

We start from the Schwarzschild metric in Schwarzschild coordinates:
\begin{equation} \label{eq:schwarzschild-metric}
\dd s^2 = - \bigg ( 1 - \frac{2m}{r} \bigg) \dd t^2 + \bigg ( 1 - \frac{2m}{r} \bigg)^{-1} \dd r^2 + r^2 \dd \Omega^2 \ ,
\end{equation}
where $m$ is a positive parameter and $\dd \Omega$ is the unit angular metric of a 2-sphere, $\dd \Omega^2 = \dd \theta^2 + \sin^2 \theta \dd \varphi^2 $. We will often denote $f(r) =  \bigg ( 1 - \frac{2m}{r} \bigg) $.

Schwarzschild solution has been the first exact solution of Einstein's equations, found in 1916 \cite{Schwarzschild16}; it describes the spacetime outside a spherical distribution of mass $m$. Karl Schwarzschild was a German physicist, and at the outbreak of World War I he served in the Prussian Army as a lieutenant. While at the Russian front, he wrote a letter to Einstein exposing for the first time his solution of Einstein's equations, adding, "As you see, the war treated me kindly enough, in spite of the heavy gunfire, to allow me to get away from it all and take this walk in the land of your ideas" \cite{Schwarzschild15}.

Although the first derivation assumed a static (independent on time) and spherically symmetric spacetime, it actually is the unique spherically symmetric solution of Einstein's equations in the vacuum, a fact known as the \textit{Birkhoff Theorem} (see e.g. \cite{Wald84}, chapter 6). As is known, Schwarzschild solution describes the exterior spacetime of a non-rotating, neutral spherical mass, which can be a black hole as well as a planet or a star, with an event horizon $\mathscr H$ located at $r_S = 2m$. If the mass distribution extends to a radius $R > r_S$, then the Schwarzschild solution is valid for $r > R$, since it holds in vacuum. If the mass is enclosed by the event horizon, we have a Schwarzschild black hole.

We denote the \textit{Schwarzschild exterior} $\mathscr E $ (that is \textit{not} the same symbol for the space of classical observables, $\mathcal E$; although they can be confused, context should make clear what we are talking about) the set of points with $t \in \mathbb R, \ \theta, \varphi \in \mathbb S^2, \ r > r_S $, and the \textit{black hole interior} $\mathscr B$ the set of points $t \in \mathbb R, \ \theta, \varphi \in \mathbb S^2, \ 0 < r < r_S  $. Schwarzschild coordinates have two singularities, at $r = r_S$ and $r = 0$. We will see, however, that it is possible to define a different set of coordinates extending in the black hole interior: therefore, the singularity at $r = r_S$ is not a physical singularity, but it is a \textit{coordinate singularity}, an artifact of the choice of coordinates, similar to the singularity at $ r = 0$ in polar coordinates. In the Schwarzschild metric, however, the singularity at $r = 0$ is a \textit{spacetime singularity}, since it cannot be removed by any change of coordinates. One can check that the scalars constructed from the Riemann tensor, invariant quantities by definition, diverge at $r = 0$, implying that the spacetime curvature becomes infinite at this point. 

We now show how to construct a Penrose diagram for a Schwarzschild black hole. The metric \eqref{eq:schwarzschild-metric} covers only the black hole exterior, $r > r_S$. In order to draw the Penrose diagram, we need to introduce a new set of coordinates, regular at the event horizon, so that they can cover both the interior and the exterior of the black hole. In other words, these coordinates should extend in the interval $r \in (0, \infty)$.

First, we introduce the \textit{tortoise coordinate}, $\dd r_*=  \bigg ( 1 - \frac{2m}{r} \bigg)^{-1} \dd r $, in terms of which the metric becomes
\begin{equation}
\dd s^2 = - \bigg( 1 - \frac{2m}{r} \bigg) ( \dd t - \dd r_*)(\dd t+ \dd r_*) + r^2 \dd \Omega^2 \ .
\end{equation}

It is natural to define the \textit{Eddington-Finkelstein coordinates},
\begin{align}
r_* = r + 2m \log \abs{\frac{r}{2m} - 1} \\
u = t - r_* \\
v = t + r_*
\end{align}
so that the metric becomes
\begin{equation} \label{eq:BH-lightlike-coordinates}
\dd s^2 = -  \bigg ( 1 - \frac{2m}{r} \bigg) \dd u \dd v + r^2 \dd \Omega^2 \ .
\end{equation}
Eddington-Finkelstein coordinates are adapted to light-rays; we call \textit{ingoing light-rays} those with $v = const.$, which move along $u$, and \textit{outgoing light-rays} those with $u = const$.

To find a global chart we finally introduce the Kruskal-Szekeres light-cone coordinates,
\begin{align} \label{Kruskal-light-like-coordinates}
U = - e^{-u/4m} \ , \ V = e^{v/4m} \ \text{in} \ \mathscr E \\
U = e^{-u/4m} \ , \ V = e^{v/4m} \ \text{in} \ \mathscr B
\end{align}
Clearly, $U,V$ are still light-like coordinates, in the sense that light-rays are described by trajectories with $U= const.$ or $V = const.$

From their definitions $\dd U = \frac{1}{4m} e^{-u/4m} \dd u $ and $ \dd V = \frac{1}{4m} e^{v/4m} $, which substituted in the metric give
\begin{equation}
\dd s^2 = - 16 m^2 \bigg(1 - \frac{2m}{r} \bigg) e^{(u-v)/4m} \dd U \dd V + r^2 \dd \Omega^2
\end{equation} 
but $u-v = - 2 r_*$, and using the definition of tortoise coordinate we find
\begin{equation} \label{eq:BH-Kruskal}
\dd s^2 = - \frac{32 m^3}{r} e^{- r/2m} \dd U \dd V + r^2 \dd \Omega^2 \ .
\end{equation} 

We see then that the metric in these coordinates is regular on the horizon, and therefore can cover both the interior and the exterior of the black hole. Therefore, as we've said, we see that the singularity at the event horizon was of coordinate type, and in fact it has been removed by a change of coordinates. On the other hand, the singularity at $r = 0$ remains even in these coordinates; clearly, this does not imply that it is a spacetime singularity, since in principle we could find an even better set of coordinates regular at $r = 0$. It has been proven, however, that the singularity at $r = 0$ is a true spacetime singularity, one which cannot be removed by a change of coordinates \cite{HawkingEllis73}.

As we discussed dealing with the Penrose diagram of Minkowski space, it is important to keep track of the interval of definition of the variables; so far, we have $t \in \mathbb R$ and $r \in \mathbb R_+$, and so $u,v \in \mathbb R$. In the black hole exterior, $U < 0$ and $V > 0$, while in the interior $U,V > 0$. The singularity at $r = 0$ gets mapped into the set $r_* = 0 \Rightarrow u = v \Rightarrow UV = 1$. The event horizon corresponds to $r_* \to - \infty$, which implies $u \to \infty \Rightarrow U = 0$, but the metric in Kruskal-Szekeres coordinates is regular on the horizon. Therefore we can extend the interval of definition of $U$ to the whole real line, to treat the exterior and the interior of the black hole on equal footing. 

The original Schwarzschild solution is covered by $U \in \mathbb R_+, \ V \in \mathbb R_+$, that is, the black hole exterior. The interior and the exterior of the black hole are covered by $U \in \mathbb R, \ V \in \mathbb R_+$, but we can see that the metric \eqref{eq:BH-Kruskal} is actually richer, admitting an extension to $V \in \mathbb R$ as well. The spacetime described by $U,V \in \mathbb R$ is called \textit{Kruskal spacetime} or Kruskal extension.

Now, to construct the Penrose diagram we introduce the squishified variables
\begin{align}
U' = \arctan U \\
V' = \arctan V
\end{align} 
Now, $U',V' \in (-\frac{\pi}{2}, \frac{\pi}{2}) $. Therefore, we have $\dd U = \cos^{-2} U' \dd U' $ and $\dd V = \cos^{-2} V' \dd V' $, so the metric becomes
\begin{equation}
\dd s^2 = \frac{1}{\cos^2 U' \cos^2 V'} \bigg[ - \frac{32m^3}{r} e^{-r/2m} \dd U' \dd V' + r^2 \cos^2 U' \cos^2 V' \dd \Omega^2 \bigg ] \ .
\end{equation}
We managed to find the unphysical metric, given in squared brackets. We can now extend the interval of definition of both $U',V'$ to $\mathbb R$, just as we did in the Minkowski case (section \ref{sec:Penrose-diagrams}), in order to find the full extension of the unphysical spacetime. As in the Minkowski case, the Kruskal extension is mapped in a finite region of the unphysical spacetime, and so we can find its boundaries. First of all, the gravitational singularity at $r = 0 $ is now mapped to
    \begin{multline}
UV = 1 \Rightarrow \tan U' \tan V' = 1 \Rightarrow \sin U' \sin V' - \cos U' \cos V' = 0 \Rightarrow \cos( U' + V' ) = 0 \\ \Rightarrow U' + V' = \pm \frac{\pi}{2} \ ,
    \end{multline}
which, if we draw a diagram with the axis $U' = 0, \ V' = 0$ at $45^{\circ}$ degrees, are two horizontal lines. The event horizon is at $U = 0, \ V > 0$, which means the half-positive axis $U' = 0$. Infinite values of $U,V$ are mapped at $U' = - \frac{\pi}{2}$ and $ V' = \frac{\pi}{2}$; however, they are not extended for all values of $V'$ respectively $U'$, but are limited by the two singularity segments. Moreover, if we take into account negative values of $V$ as well, we have two more segments which represent infinity, at $U' = \frac{\pi}{2}$ and $V' = \frac{\pi}{2}$. The result is a tilted square, as in Minkowski case, but cut at the gravitational singularity, obtaining an hexagon.

\begin{figure}
\centering
\includegraphics[scale=0.5]{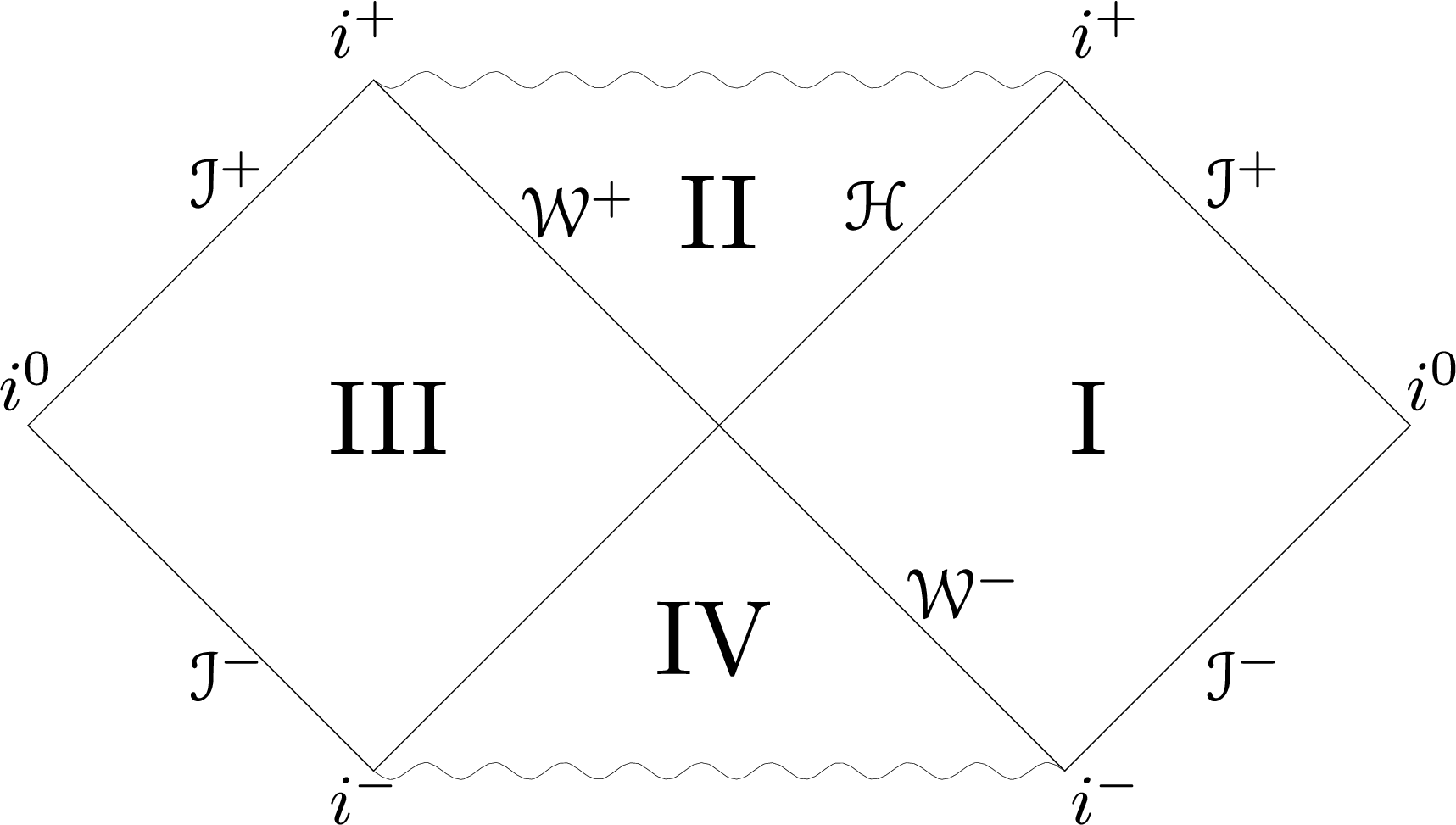}
     \caption{Penrose diagram for the Kruskal extension.  The diagonal should be read with the axis $U = 0$ on the diagonal from bottom-left to top-right, and the axis $V = 0$ from top-left to bottom-right. The original Schwarzschild solution corresponds to region I. The two wavy lines correspond to the spacetime singularity, $r = 0$. The spacetime we consider is the one realisable in a gravitational collapse, the union of region I and II.}
\end{figure}

Now, in the conformal embedding of Kruskal spacetime we can identify the past and future null infinity, $\mathscr I^\pm$, given by the segments $U',V' = \pm \frac{\pi}{2}$ the spatial infinity, at $i^0 = (U' = -\frac{\pi}{2}, \ V' = \frac{\pi}{2})$ and $ ( U' = \frac{\pi}{2}, \ V' = - \frac{\pi}{2}) $, time-like future and past infinity, at $i^\pm = (U' = 0, \ V' = \frac{\pi}{2}) $ and $ (U' = \frac{\pi}{2}, \ V' = 0) $, and the two singularity segments at $U' + V' = \pm \frac{\pi}{2}$. However, Kruskal spacetime exhibits the additional structure of horizons: we have the event horizon, defined as $\mathscr H = \{ x \in \mathbb R^{1,3} \ | \ U' = 0, \ V' > 0 \} = \{  x \in \mathbb R^{1,3} \ | \ U = 0, \ V > 0 \}$, and the past horizon, $\mathscr W$, defined as $ V' = V = 0, \ U' \in [- \frac{\pi}{2}, \ \frac{\pi}{2}] \Rightarrow U \in \mathbb R $; they both correspond to $r = 2m$. For future reference we split the past horizon in two disjoint segments, $\mathscr W^+ = \{ x \in \mathscr W \ | \ U \geq 0 \} $ and $\mathscr W^- = \{ x \in \mathscr W \ | \ U < 0 \} $; $\mathscr W^-$ is called the \textit{white hole horizon}. The values of the different variables on the boundaries are summed in table \ref{table:boundaries}.

Of the four different regions of Kruskal spacetime, we are interested in the two regions I, II only, the interior and the exterior of the black hole. They are bounded by the future singularity, the boundaries at infinity on the right half of the Penrose diagram, and the past horizon. These are the regions which can be realised in the gravitational collapse of a compact object: the Schwarzschild solution is valid only outside the matter in collapse, so that region I can be realised in the outside of a spherical compact object, and region II is obtained when the matter falls beyond the Schwarzschild radius. There is no known gravitational process which can realise region III and IV. They are, rather, the maximal extension of a Schwarzschild black hole which always existed and always will, an object known as an \textit{eternal black hole}. We will denote the union of region I, II and their boundaries $ \mathscr I^\pm$, $i^0$, $i^\pm$ and $\mathscr W$ with $\mathscr M$, since it will be the spacetime we will interested in in the following sections, and we will call it the Schwarzschild spacetime or Schwarzschild black hole; although this term is usually referred to region I only, we will consider both the interior and the exterior of the black hole.

\begin{table} \label{table:boundaries}
\centering
\begin{tabular}{|| c c c c c c c ||}
\hline Boundary & $U$ & $V$ & $U'$ & $V'$ & u & v \\
\hline Event Horizon $\mathscr H$ & 0 & $\mathbb R_+ $ & 0 & $[0, \frac{\pi}{2}]$ & & \\
\hline Past Horizon  $\mathscr W$ & $\mathbb R$ & 0 & $ [- \frac{\pi}{2}, \frac{\pi}{2}] $ & 0 & & \\
\hline Past Null Infinity $\mathscr I^-$ & $-\infty $ & $ \mathbb R_+ $ & $ - \frac{\pi}{2} $ & $[0, \frac{\pi}{2}]$ & $-\infty$ & $\mathbb R$ \\
\hline Future Null Infinity $\mathscr I^+$ & $\mathbb R_- $ & $\infty$ & $[- \frac{\pi}{2}, 0] $ & $ \frac{\pi}{2} $ & $\mathbb R$ & $\infty$ \\
\hline Future Time-like Infinity $i^+$ & 0, $\infty$ & $\infty$, 0 & 0, $\frac{\pi}{2}$ & $\frac{\pi}{2}$, 0 & $t = \infty$ & $r \in \mathbb R$ \\
\hline Past Time-Like Infinity $i^-$ & 0, $- \infty$ & $- \infty$, 0 & 0, $- \frac{\pi}{2}$ & $ - \frac{\pi}{2}$, 0 & $t = - \infty$ & $r \in \mathbb R$ \\
\hline
\end{tabular}
\caption{Relevant boundaries in the Schwarzschild spacetime and their location in the different variables}
\end{table}

\subsection{Event Horizon and Symmetries}

In Kruskal coordinates, the event horizon $r = r_S = 2m$ is a hypersurface at $U=0$; in the Penrose diagram it is a segment at $45^{\circ}$. This means that ingoing light-rays toward the black hole can cross it (and, in fact, cannot avoid to) but outgoing light-rays from the black hole interior toward the exterior cannot escape. This one-way property of the horizon means that an observer in the interior of the black hole cannot send signals to the outside world: the interior has no way to influence the exterior. This is what is called a \textit{causal boundary}. Moreover, any future cone inside the event horizon terminates at the singularity: observers inside the event horizon are doomed to encounter the singularity in a finite amount of proper time.

The event horizon is the defining property of a black hole, according to a large segment of literature. It is, as we said, the boundary between the causal paths which can escape to infinity, and those which will remain confined in the black hole interior. The rigorous definition of a black hole is therefore \textit{the boundary of the chronological past of the future}:
\begin{equation}
\mathscr B = \mathscr M - I^-(\mathscr I^+) \ .
\end{equation}
(We saw the definition of chronological past $I^-$ in section \ref{sec:globally-hyperbolic-spaces}). This definition is the most used to study the mathematical structure of black hole, but for real black holes it can be controversial: for once, it requires the spacetime to be asymptotically flat in order to locate an event horizon. Since our Universe may be not asymptotically flat, and nonetheless black holes have been observed, an alternative definition for realistic black hole is needed. In subsection \ref{ssec:AH-first-law} we will discuss other problems related to the event horizon and a different definition for a black hole horizon.

It is possible to identify the symmetries of a Schwarzschild black hole by simple inspection of the metric. For example, the metric is independent on time $t$; this means that it is invariant under translation along the $t$ direction:
\begin{equation}
\mathcal L_\xi g = 0 \ .
\end{equation}
$\mathcal L_\xi$ is the Lie derivative along $t$, that is, $\xi = \partial_t$. In general, for any isometry of the metric, that is, any vanishing Lie derivative along some parameter $t$, it is possible to find an equation for the generator of the $t$ translations, that is, the vector field $\xi = \partial_t$, if the connection is metric (which is always our case). The Lie derivative of the metric along $t$ is
\begin{equation}
\xi^\rho \partial_\rho g_{\mu \nu} + g_{\mu \rho} \partial_\nu \xi^\rho + g_{\rho \nu} \partial_\mu \xi^\rho = 0 \ .
\end{equation}
On the other hand, the covariant derivative of the metric vanishes, $\nabla_\rho g_{\mu \nu} = \partial_\rho g_{\mu \nu} - g_{\lambda \mu} \Gamma^\lambda_{\rho \nu} - g_{\lambda \nu} \Gamma^\lambda_{\rho \mu} = 0 $, while the covariant derivative of the Killing field is $\nabla_\mu \xi^\nu = \partial_\mu \xi^\nu + \Gamma^\nu_{\mu \rho} \xi^\rho$, so
\begin{equation}
0 = \xi^\rho \partial_\rho g_{\mu \nu} + g_{\mu \rho} \nabla_\nu \xi^\rho - g_{\mu \rho} \Gamma^\rho_{\mu \alpha} \xi^\alpha + g_{\rho \nu} \nabla_\mu \xi^\rho - g_{\rho \nu} \Gamma^\rho_{\mu \alpha} \xi^\alpha = \nabla_\nu \xi_\mu + \nabla_\mu \xi_\nu \ .
\end{equation}
This is called \textit{Killing equation}; a vector satisfying the Killing equation is a Killing field, and it is the generator of an isometry.

The Schwarzschild spacetime admits a Killing field
\begin{equation}
\xi_t = \partial_t = \partial_u + \partial_v = - \frac{U}{2 r_S}\partial_U + \frac{V}{2r_S}\partial_V \ .
\end{equation}
This vector is time-like in the Schwarzschild exterior and thus defines a preferred flow of time, but becomes space-like in the black hole interior. The set of points where $\norm{k} = 0$ is called \textit{Killing horizon}; in the Schwarzschild case, the event and Killing horizons coincide. The existence of a Killing vector field, such that $\norm{\xi}^2 \to -1 $ at infinity, means that the spacetime is \textit{stationary}. Moreover, the Schwarzschild spacetime is \textit{static}, because the Killing field is hypersurface orthogonal; this means that it is possible to foliate the spacetime in 3-hypersurfaces defined by $t = const.$

The Schwarzschild black hole has also spherical symmetry, and therefore there are additional Killing fields related to rotations:
\begin{enumerate}
\item $\xi_\varphi = \partial_\varphi$;
\item $\xi_x = - \sin \varphi \partial_\theta - \cot \theta \cos \varphi \partial_\phi $;
\item $ \xi_y = \cos \varphi \partial_\theta - \cot \theta \sin \varphi \partial_\phi $ 
\end{enumerate}
Any isometry of the metric is associated to a conserved current. In fact, for any Killing vector $\xi$, one has
\begin{equation}
\nabla_\mu ( \xi_\nu T^{\mu \nu} ) = \xi_\nu \nabla_\mu T^{\mu \nu} + T^{\mu \nu} \nabla_\mu \xi_\nu = 0 \ .
\end{equation}
The first term on the right-hand side vanishes because the stress-energy tensor is conserved by construction, the second term vanishes because it is a symmetric tensor times an antisymmetric tensor, since the Killing equation imposes$\nabla_{(\mu} \xi_{\nu)} = 0$ , that is, $\nabla_\mu \xi_\nu$ is a tensor with no symmetric part. Therefore, $J^\mu = \xi_\nu T^{\mu \nu} $ is a conserved current. We can associate the conserved current generated by the time isometry with the energy density, and that associated with the angular isometries with the different components of the angular momentum.

\section{Black Hole Mechanics} \label{sec:BH-mechanics}

The story of Black Hole Thermodynamics started in the mid-'70s, when the interest in General Relativity renewed and researchers in both American and English universities began studying the simplest solutions of Einstein's equations, black holes. The new researches were mainly driven by Dennis Sciama, in Cambridge, and John A. Wheeler, at the Princeton University, who popularised the very term "black hole" and defended it against the scandalised protests of American academia \cite{Herdeiro18}. Few years later, when the controversial name for black holes settled, he and his student Jacob Bekenstein proposed that black holes can be described by three parameters only, their mass, electrical charge and angular momentum, being featureless objects for the rest: in their words, "black holes have no hair".

However, a tension between the no hair theorem and the laws of thermodynamics grew, eventually leading to the breakthrough paper by Hawking on black hole radiation. In fact, Hawking himself, together with Bardeen and Carter \cite{Bardeen73} published a paper presenting a close analogy between the laws of black hole mechanics and the four laws of thermodynamics. Moreover, the gedanken experiments of Bekenstein, of which we talked in the introduction, showed that black holes must carry some amount of entropy, in contradiction with the no-hair theorem. On the other hand, there seemed to be no way in which the classical theory could explain a non-zero temperature for black holes, since for their very definition black holes cannot radiate while they can always absorb radiation from any source at finite temperature. In order to find the origin of black hole radiation, and therefore of their temperature, one needs to take into account the propagation of quantum fields on a black hole background.

Before discussing the Hawking temperature of a Schwarzschild black hole, we briefly review the classical laws of black hole mechanics. It is known that the laws of thermodynamics were discovered in a very different order from that in which they are usually presented: the first established became the second law, the first law was the second to be discovered, and the zeroth law has been the last. Maybe to continue the analogy, Bardeen, Carter, and Hawking \cite{Bardeen73}, enunciated the four laws in an idiosyncratic order. We will follow here the order of their presentation.

\paragraph{Second law}

The second law is arguably the most important one for the thermodynamic analogy. It was first proved \cite{HawkingEllis73} by Hawking and Ellis in 1973 and then disproved \cite{Hawking74} (or, rather, it was found that its failure was even more interesting than its statement) by Hawking himself in 1974.

The second law, also called \textit{Area Theorem}, states that the area of a black hole in a globally hyperbolic spacetime with classical matter present cannot decrease with time. The important adjective here is \textit{classical}: in fact, the whole point is that quantum matter can violate any local energy condition, and in the presence of quantum fields the area of a black hole can in fact decrease, leading to the idea of \textit{black hole evaporation}.

The two key ingredients of the area theorem are the Focussing theorem, which we discussed in section \ref{sec:geodesic-congruences}, and an observation by Penrose on the fact that the null generators of a black hole event horizon can never form caustics, that are, points in which the generators meet. At such points, the expansion of the generators' congruence diverge. If such points cannot exist, then the expansion cannot be negative, or the expansion would diverge in a finite amount of the affine parameter, as stated by the Focussing theorem. Since the expansion is the logarithmic derivative of the event horizon area, this in turn implies that the horizon area cannot decrease with time. The full technical proof of the theorem can be found in \cite{HawkingEllis73}.

Although the area theorem is violated by Hawking radiation, it has been the first hint that black holes can carry entropy. In fact, two things never decrease in Nature: entropy, and the area of black holes (and \textit{time}, which points to some profound, mysterious connection - but this is way outside our scope). Bekenstein's gedanken experiments proved that a decrease in the entropy outside a black hole is always compensated by an increase in the area of the horizon. Therefore, to save the second law of thermodynamics he proposed to identify the entropy of a black hole with some multiple of its area. This way, any entropy decrease in the black exterior would be compensated by an increase in the black hole entropy, and the entropy of the Universe would still be increasing.

Interestingly, it was the violation of the area theorem which eventually justified the identification of the horizon area with a measure of black hole entropy. It is clear, that a classical black hole is rigorously at zero temperature, and therefore, from the Planck-Nerst version of the third law of thermodynamics, it should also have zero entropy. Indeed, the no-hair theorem states that black holes in GR are objects with no internal structure, similar in this regard to elementary particles. There seems to be no way in which a black hole can radiate, but in fact, when Hawking took into account the effects of quantum fields on a black hole background, he discovered that black holes emit a steady flux of thermal particles toward future infinity at a finite temperature, the so-called \textit{Hawking temperature}. Therefore, it is indeed possible to associate an entropy to the black hole, and via a Legendre transformation Hawking confirmed Bekenstein's intuitions finding that the correct proportionality coefficient between the entropy and the area of a black hole is one-fourth.

\paragraph{Zeroth law}

We have seen that, in the case of a Schwarzschild black hole, the event horizon is also a Killing horizon. Actually, there are two results, generally known as \textit{rigidity theorems}, stating that an event horizon \textit{must be} a Killing horizon too. We are interested here in the theorem by Carter \cite{Carter73}. In the case of a static black hole, the theorem states that the vector $\xi_t = \partial_t$ must be normal to the event horizon. If the black hole is not static, the result can be generalised in the following way. Consider a stationary, axisymmetric black hole, that is, a black hole which admits two isometries $\xi_t = \partial_t$ (stationarity) and $\xi_\varphi = \partial_\varphi $ (symmetry under rotations along an axis). Then, there exists a linear combination of the two,
\begin{equation}
\zeta = \partial_t + \Omega_{\mathscr H} \partial_\varphi
\end{equation}
which is normal to the event horizon. However, Hawking showed \cite{Hawking71} that such a combination is also null on the horizon, and therefore (since $\zeta$ is a Killing field) the event horizon of a stationary, axisymmetric black hole coincides with a Killing horizon, and the vector $\zeta$ is therefore tangent to the generators of the horizon.

Now, consider a black hole event horizon, and its normal Killing vector $\zeta$. Since $ \zeta^\beta \zeta_\beta = 0$ along the horizon, its gradient must be normal to the horizon, and therefore it is proportional to $\zeta$:
\begin{equation} \label{eq:surface-gravity-1}
\nabla^\alpha \zeta^\beta \zeta_\beta = - 2 \kappa \zeta^\alpha \ .
\end{equation}
The function $\kappa$ is called \textit{surface gravity}. It can be defined in an equivalent way, since $\zeta$ is tangent to the generators of the horizon, as the inaffinity function in the geodesic equation for $\zeta$. From \eqref{eq:surface-gravity-1} and the Killing equation, we see that
\begin{equation}
\nabla^\alpha \zeta^\beta \zeta_\beta = 2 \zeta^\beta \nabla^\alpha \zeta_\beta = - 2 \zeta^\beta \nabla_\beta \zeta^\alpha = 2 \kappa \zeta^\alpha \ ,
\end{equation} 
from which we conclude that
\begin{equation} \label{eq:surface-gravity-2}
\zeta^\alpha \nabla_\alpha \zeta^\beta = \kappa \zeta^\beta \ .
\end{equation}
We can find an explicit expression for the surface gravity by first noting that, since the $\zeta$ is normal to the event horizon, from the Frobenius theorem (see section \ref{sec:geodesic-congruences}) we know that
\begin{equation}
\omega_{\alpha \beta} = \nabla_{[\alpha} \zeta_{\beta]} = 0 \ ,
\end{equation}
and it immediately follows that $\zeta_{[\gamma} \nabla_{\alpha} \zeta_{\beta]} = 0 $. This is an equation for a sum of six terms; however, using the Killing equation we get
\begin{equation}
\zeta_\gamma \nabla_\alpha \zeta_\beta + \zeta_\alpha \nabla_\beta \zeta_\gamma + \zeta_\beta \nabla_\gamma \zeta_\alpha = 0 \ .
\end{equation}
Contracting with $\nabla^\alpha \zeta^\beta$, we find
\begin{equation}
\begin{split}
\nabla^\alpha \zeta^\beta \nabla_\alpha \zeta_\beta \zeta_\gamma &= - \zeta_\alpha \nabla^\alpha \zeta^\beta \nabla_\beta \zeta_\gamma - \zeta_\beta \nabla^\alpha \zeta^\beta \nabla_\gamma \zeta_\alpha \\
&= - \kappa \zeta^\beta \nabla_\beta \zeta_\gamma + \kappa \zeta^\alpha \nabla_\gamma \zeta_\alpha \\
&= - 2 \kappa^2 \zeta_\gamma
\end{split}
\end{equation}
and so
\begin{equation}
\kappa^2 = -\frac{1}{2} \nabla^\alpha \zeta^\beta \nabla_\alpha \zeta_\beta \ .
\end{equation}

 The zeroth law states that \textit{the surface gravity $\kappa$ is uniform on the event horizon of a stationary black hole}. It follows from proving first that the surface gravity is constant along each generator, and then that it does not change from a generator to another. Details of the proof can be found in \cite{Poisson09}. This law is in analogy with the zeroth law of thermodynamics, and can be considered as a characterisation of gravitational equilibrium, as the constancy of temperature in a system characterises thermal equilibrium in thermodynamics.
 
As a final remark, we compute the surface gravity for a Schwarzschild black hole. In this case, the time-like Killing vector tangent to the horizon is simply $\xi_t = \frac{V}{2 r_S} \partial_V$. On the event horizon, equation \eqref{eq:surface-gravity-1} becomes, for the only non-vanishing component $\beta = V$,
\begin{equation}
\xi^\alpha_t \nabla_\alpha \xi^V_t = \partial_V \xi^V_t + \Gamma^V_{V V} = \frac{1}{2 r_S} = \frac{1}{4m} \ .
\end{equation}
In fact, it can be seen (see, for a completely analogous computation on the past horizon, subsection \ref{ssec:quantization-in-Schwarzschild}) that $\Gamma^V_{V V} = 0$ on the event horizon. Therefore, in the case of a Schwarzschild black hole it is immediate to verify the zeroth law. Moreover, this computation closes a loop we opened in subsection \ref{ssec:interplay-vacuum-KMS}, in which we performed a change of coordinates $V = e^{\kappa v}$: we now see that the surface gravity is the same parameter of the Kruskal-Szekeres change of coordinates, $V = e^{v/4m}$:
\begin{equation}
\kappa = \frac{1}{4m} \ .
\end{equation}

\paragraph{First law}
Consider a stationary black hole, with Killing vector $\xi_t = \partial_t$. We have seen that a stationary black hole is either static or \textit{axisymmetric}, that is, it is symmetric with respect to rotations along an axis and thus admits a Killing vector $\xi_\varphi = \partial_\varphi$. We consider the latter case for generality. In this case, the linear combination of $\partial_t$ and $\partial_\varphi$,
\begin{equation}
\zeta = \partial_t + \Omega_{\mathscr H} \partial_\varphi
\end{equation} 
is null on the horizon. Therefore, the event horizon is a Killing horizon and $\xi$ is the vector tangent to its null generators. Such a black hole is described by two parameters, its mass $M$ and its angular momentum, $J$. Even without a detailed description of the metric we can find how these two parameters are modified if some matter falls through the horizon. We consider a quasi-stationary process in which the black hole parameters change by an infinitesimal amount before reaching a new stationary state. If $T_{\alpha \beta}$ is the stress-energy tensor of the infalling matter, the change in mass and in angular momentum of the black hole will be
\begin{align}
\delta M = - \int_{\mathscr H} T_{\alpha \beta} \xi_t^\alpha \ \dd \Sigma_{\mathscr H}^\beta \\
\delta J = \int_{\mathscr H} T_{\alpha \beta} \xi_\varphi^\alpha \ \dd \Sigma_{\mathscr H}^\beta \ .
\end{align}
The minus sign in the variation of the mass comes because $\xi_t$ is a time-like vector, and we need the minus to get the mass at a later time minus the mass at an earlier time.

The surface element of the event horizon is given by $\dd \Sigma^\alpha_{\mathscr H} = - \zeta^\alpha \dd t \dd \mathbb S_2 $, where $\dd \mathbb S_2$ is the metric of a 2-sphere. Again, the minus sign comes so that the surface integral is future-directed. Note, however, that we are not using the affine parameter along the horizon. We have
\begin{equation}
    \begin{split}
\delta M - \Omega_{\mathscr H} \delta J &= \int_{\mathscr H} T_{\alpha \beta} ( \xi_t^\alpha + \Omega_{\mathscr H} \xi_\varphi^\alpha ) \zeta^\beta \ \dd t \dd \mathbb S_2 = \\
&=  \int_{\mathscr H} T_{\alpha \beta} \zeta^\alpha \zeta^\beta \dd t \dd \mathbb S_2 \ .
\end{split}
\end{equation}
The integrand can be rewritten using the Raychaudhuri equation along the generators of the horizon. Since we are interested in the first order variation of the parameters, we can neglect the quadratic terms in the equation; however, we do not assume that the generators are affinely parametrised, but rather that there exists a nonvanishing surface gravity. Therefore, we write the Raychaudhuri equation \eqref{eq:Raychaudhuri-inaffine} for the null generators of the horizon as
\begin{equation}
\dv{\theta}{t} = \kappa \theta - 8 \pi T_{\alpha \beta} \zeta^\alpha \zeta^\beta \ .
\end{equation}
Substituting in the integral, we find
\begin{equation}
\begin{split}
\delta M - \Omega_{\mathscr H} \delta J &= - \int_{\mathscr H} \dv{\theta}{t} - \kappa \theta \ \dd t \dd \mathbb S_2 \\
&= - \int_{\mathbb S_2} \eval{\theta}^{+\infty}_{-\infty} + \frac{\kappa}{8 \pi} \int_V \int_{\mathscr H(t)} \dv{\sqrt \sigma}{t} \dd t \dd^2 \theta \ .
\end{split}
\end{equation}
To rewrite the second integral, we used the interpretation of the expansion scalar as the logarithmic derivative of the square root of the determinant of the cross-sectional area, which in this case is a 2-sphere with angular variables $\theta^A$, as we proved in subsection \ref{ssec:geometric-meaning-expansion}. At any fixed $t$, the event horizon is a 2-sphere, which we denoted $\mathscr H(t)$. Now, assuming that the black hole is stationary both before and after the perturbation, $\theta (t = \pm \infty) = 0$ and the first integral vanishes as well. The second integral instead gives
\begin{equation} \label{I-law-EH}
\delta M - \Omega_{\mathscr H} \delta J = \frac{\kappa}{8 \pi} \int_{\mathscr H(t)} \eval{\sqrt{\sigma}}^\infty_{- \infty} \dd^2 \theta = \frac{\kappa}{8 \pi} \delta A \ ,
\end{equation}
where $\delta A$ is the variation in the area before and after the perturbation. Finally, we find the first law of black hole mechanics:
\begin{equation}
\delta M = \frac{\kappa}{8 \pi} \delta A + \Omega_{\mathscr H} \delta J \ .
\end{equation}
We can see the close resemblance of this equation with the first law of thermodynamics for a system with internal energy $E$, at temperature $T$, with pressure and volume $P$ and $V$:
\begin{equation}
\dd E = T \dd S + P \dd V \ .
\end{equation}
In fact, it is natural to identify the mass of the black hole with its internal energy, $M= E$. This again suggests that the area of the black hole is some multiple of its entropy, $S \sim A$, and the surface area is proportional to the temperature of the black hole, $\kappa \sim T$; clearly, from this equation there is no way to identify the correct proportionality coefficients.

Our derivation of the first law of black hole mechanics does not rely on the particular black hole model, but it is based on General Relativity and the first order approximation. Remarkably, a version of the first law holds for any field equation derived from a diffeomorphism covariant Lagrangian (\cite{Iyer94}, and see \cite{Wald99} for a review).

\paragraph{Third law}
The third law of black hole mechanics states that \textit{it is not possible to reduce the surface gravity of a black hole to zero in a finite amount of advanced time in any continuous process in which the stress-eenrgy tensor remains bounded}. A derivation of this law was provided by W. Israel in 1986 \cite{Israel86}, and we will not reproduce it here.
The third law of black hole mechanics is in analogy with the Nernst version of the third law of thermodynamics, which states that it is not possible to reduce the entropy of a system to zero in any continuous process, no matter how idealised. However, the Planck-Nernst version of the thermodynamics third law, which states that the entropy of an object approaches zero as the temperature approaches zero, is violated in black hole mechanics, breaking the analogy between black holes and thermodynamics. In fact, it is possible to choose the parameters of a Kerr black hole (the mass and its angular momentum) so that its surface gravity vanishes while the area remains finite. However, the Planck-Nernst statement is violated by ordinary quantum systems too \cite{Wald97}, and therefore it should be considered not a fundamental law but rather a property of common studied materials.

\hfill

The laws of black hole mechanics shows a remarkable resemblance with the laws of thermodynamics, a resemblance that appears rather mysterious: black hole laws are statements in differential geometry, derived from classical field equations and from the properties of the spacetime; the laws of thermodynamics arise as the statistical limit of the microscopic nature of the constituents of matter. Indeed, in General Relativity this is nothing more than an analogy: in fact, the temperature of any black hole is absolute zero, since, as we said, a black hole is in thermal equilibrium only in vacuum, at zero temperature. If we take into account quantum effects, however, things change dramatically, and a black hole becomes an object with a finite, well-defined temperature (this is the famous \textit{Hawking effect}), which in vacuum radiates and eventually evaporates. From a mathematical point of view, the main reason is that quantum matter violates almost any local energy condition, and in particular the weak energy condition, which is a crucial assumption in many theorems on black holes. From a physical point of view, the curvature in spacetime provides an enhancement of the vacuum energy fluctuations, bringing the virtual particle-antiparticle pairs into life at the expense of black hole energy.

We will not provide a thorough discussion of the quantum effects in the presence of a black hole, since they would easily take an entire book. Rather, we will show that the quantization of a scalar field in the exterior of a Schwarzschild black hole naturally give rise to thermal properties for the modes incoming from the white hole horizon.

\section{Hawking Temperature} \label{sec:Hawking-temperature}

\subsection{Null Quantization in Schwarzschild and Unruh State} \label{ssec:quantization-in-Schwarzschild}

We apply the quantization procedure we discussed in chapter \ref{ch:QFTCS} to the case of a scalar, neutral, free quantum field propagating on a Schwarzschild black hole background, the spacetime $\mathscr M$ formed by the union of regions I and II in the Kruskal extension we introduced in section \ref{sec:Schw-geometry}. As usual, we start with a classical solution $\nabla_a \nabla^a \psi = 0 $ of the massless Klein-Gordon equation with minimal coupling $\xi = 0$, with smooth, compactly supported Cauchy data on some hypersurface $\Sigma $. Now, we want to deform this hypersurface to the surface $\mathscr W \cup \mathscr I^-$, that is, on the union of the past horizon and the conformal past infinity. This is the most physically sensible Cauchy surface since we are giving the initial data at past infinity and we study the evolution of the field as it propagates through the spacetime. However, it might occur that the symplectic form defined on $\Sigma$ in this limit diverges, since we are computing a product of the field and its derivatives at infinity, or that we must take into account some contribution from $i^-$. In \cite{DMP11} it was shown that such a construction is possible. In fact, if we take the restriction of the scalar field to $\mathscr W$, and the restriction of $\tilde \psi = r \psi$ to $\mathscr I^-$, one can prove that it is possible to write the symplectic form given in \eqref{eq:symplectic-form} using data on $\mathscr I^- \cup \mathscr W$:
\begin{equation} \label{eq:symplectic-form-Schwarzschild}
w(\psi_1, \psi_2) = \int_{\mathscr W} ( \psi_2 \nabla_\mu \psi_1 - \psi_1 \nabla_\mu \psi_2 ) n^\mu_{\mathscr W} \ \dd U \dd \mathbb S_2 +
 \int_{\mathscr I^-} ( \tilde \psi_2 \nabla_\mu \tilde \psi_1 - \tilde \psi_1 \nabla_\mu \tilde \psi_2 ) n^\mu_{\mathscr I^-} \ \dd v \dd \mathbb S_2
\end{equation}
for the classical field propagating on the black hole spacetime. The causal development of $\mathscr I^- \cup \mathscr W$ covers the Schwarzschild spacetime $\mathscr M$, the interior and the exterior of the black hole with its boundaries, and so we can define a quantum field theory in this region. As we have seen, the symplectic form is related to the causal propagator $E$, and we can construct the Weyl C*-algebra $\mathcal A(\mathscr M) $ with relations \eqref{eq:Weyl-relations}.

In chapter \ref{ch:QFTCS} we have seen two possible approaches to the quantization of the scalar field, the one based on quotient of the space of test functions, $(\mathcal E, E)$, and the one based on the space of classical solutions, $(\mathsf{Sol}, w)$, and we have seen that they are related via the computation \eqref{eq:symplectic-form-causal-propagator-relation}. Here, we adopt the approach to smear the generators of the Weyl algebra with the set of smooth, compactly supported initial data for the Klein-Gordon equation, given on $\mathscr I^- \cup \mathscr W $. We recall that the Weyl generators smeared with test functions are related to the smeared field via \eqref{eq:Weyl-operators-fields}. If we introduce a classical solution $\psi_f = E f$, where $E$ is the causal propagator and $f$ a test function, \eqref{eq:Weyl-operators-fields} is equivalent to
\begin{equation}
W(\psi) = e^{i w(\phi, \psi_f)} \ .
\end{equation}
We will denote $W(\psi)$ the generators of the Weyl algebra, leaving implicit the restriction of $\psi$ to $\mathscr I^- \cup \mathscr W $.

Now, we need to select a suitable state on the Schwarzschild space. We choose a quasifree, Hadamard state, since as we discussed in \ref{ssec:hadamard-states} this is the class of the physically sensible states in which the fluctuations of the expectation value of normal ordered fields are always finite, as is the case for the Minkowski vacuum. Since the state is quasifree, we can uniquely define it giving its 2-point functions; however, we do not define an explicit expression for the 2-point function in $\mathscr M$, but rather we give its restriction to the initial data surface $\mathscr I^- \cup \mathscr W $; it can be proved that this is sufficient to uniquely select a quasifree state in $\mathscr M$. The observation in this case is that we can see \textit{a posteriori} that the symplectic form \eqref{eq:symplectic-form-Schwarzschild} splits into two symplectic forms for the two surfaces $\mathscr W$ and $\mathscr I^-$, which we can respectively denote $w_{\mathscr W}$ and $w_{\mathscr I^-}$. Therefore, we can quantize the field as two copies of the Weyl algebra based on the two symplectic spaces $\mathsf{Sol}(D^+(\mathscr W), w_{\mathscr W})$ and $\mathsf{Sol}(D^+(\mathscr I^-), w_{\mathscr I^-})$. Their restriction on the respective Cauchy surface, $\mathscr W $ and $\mathscr I^- $, gives a Weyl algebra for a theory localised on the boundary. Such a restriction can be interpreted as a symplectomorphism between algebras. If we repeat the argument in reverse, we can construct a Weyl algebra on the boundaries $\mathscr W$ and $\mathscr I^-$, select a boundary state, and then define a bulk state as the pull-back of the symplectomorphism from the bulk to the boundary algebras. The details for such a construction are given in \cite{DMP11}. For our purposes, it is sufficient to say that the restriction of the 2-point function of a quasifree, Hadamard state on the surface $\mathscr I^- \cup \mathscr W$ is sufficient to define the 2-point function on $\mathscr M$. Moreover, this 2-point function will be written as a sum of two surface integrals, over $\mathscr I^-$ and over $\mathscr W$: thus, we can define a \textit{class} of states giving the 2-point funcion expliclty for one surface only, and leaving the other integral arbitrary.

This being said, we proceed defining the state which we will use throughout the chapter, the \textit{Unruh state}. The Unruh state has been originally proposed by W. Unruh \cite{Unruh76}, via the mode decomposition approach, as the state which is a vacuum state with respect to modes coming from $\mathscr I^-$, and a vacuum state with respect to modes coming from the full past horizon $\mathscr W$. From this prescription, we can fix the form of the 2-point function on the conformal past infinity and on $\mathscr W$. We start from the conformal past infinity, $\mathscr I^-$. From the metric \eqref{eq:BH-lightlike-coordinates} we see that the conformal past infinity is defined as the surface $S: u = u_0 $ for $u_0 \to - \infty$. The gradient of this hypersurface is $\nabla S = \partial_u$, and the normal vector is
\begin{equation}
\tilde n_{\mathscr I^-} = g^{\mu \nu} \partial_\mu S = - 2 \bigg ( 1 - \frac{2m}{r} \bigg )^{-1} \partial_v = - 2 \partial_v \ .
\end{equation}
Since $\mathscr I^-$ is a null hypersurface, there is not a canonical normalisation, but one can check that $v$ is the affine parameter on the hypersurface defining $n_{\mathscr I^-} = \partial_v $ and showing that $n^\mu_{\mathscr I^-} \nabla_\mu n^\nu_{\mathscr I^-} = 0$. This is easily done: since the vector component is constant, the covariant derivative has  only the Christoffel symbol contribution, and the only relevant component is $v$, so
\begin{equation}
n^\mu_{\mathscr I^-} \nabla_\mu  n^v_{\mathscr I^-} = \Gamma^v_{v v} = \frac{1}{2} g^{v k} (2 \partial_v g_{v k} - \partial_k g_{v v} ) \ ,
\end{equation}
but $g_{v v } = 0$, and if $k = u$ $g_{u v} \to -1/2$ at infinity, that is, it is independent on $v$. Since $n_{\mathscr I^-} $ satisfies the geodesic equation it is also the vector tangent to the geodesic generators of $\mathscr I^-$.

The vector $n_{\mathscr I^-}$ is nothing but the restriction of the Killing vector $\partial_t$ to $\mathscr I^-$, and one can see that the conformal past infinity is translationally invariant in the $v$ direction. This is analogous to say that that Minkowski space is translationally invariant in the $t$ direction, and therefore admits a Killing vector $\partial_t$; it makes sense, then, that in order to construct a vacuum state on $\mathscr I^-$ we choose a vacuum with respect to translations in $v$. But we already studied the form of the vacuum 2-point function with respect to some parameter in subsection \ref{ssec:interplay-vacuum-KMS}: it is given by
\begin{equation} \label{eq:Unruh-2PF-I}
\eval{W_2( \psi_1, \psi_2)}_{\mathscr I^-} = \frac{1}{\pi} \int_{\mathbb R \times \mathbb R \times \mathbb S_2} \frac{\psi_1(v_1) \psi_2(v_2)}{(v_2 - v_1 - i \epsilon)^2} \ \dd v_1 \dd v_2 \dd \Omega_2 \ ,
\end{equation}
where $\dd \Omega_2 = \sin \theta \dd \theta \dd \varphi$ is the metric of a unit 2-sphere. This is the component of the 2-point function of the Unruh state restricted to the conformal past infinity.

One could now follow the same reasoning for the 2-point function restriction to the past horizon $\mathscr W$. Since the spacetime admits a globally time-like Killing vector, it would make sense to choose a vacuum state with respect to $\partial_t$. Therefore, the restriction of its 2-point function on $\mathscr I^-$ would be \eqref{eq:Unruh-2PF-I}, while its restriction on the past horizon would originate a 2-point function which is the vacuum with respect to $u$. The state defined this way is called \textit{Boulware vacuum}, and it is the most natural choice at first sight. The Boulware vacuum corresponds to our familiar concept of vacuum at large distances from the black hole, but it shows a pathological behaviour at the event horizon, in the sense the expectation value of the stress-energy tensor, evalutated for a free falling observer, diverges \cite{Candelas80}. This is in contradiction with the Equivalence Principle, for which a free falling observer in a gravitational potential should be indistinguishable from an accelerated observer in the vacuum: if the observer gets annihilated from an infinite amount of energy at the event horizon, it would be hard to ignore the presence of a gravitational field. From a rigorous point of view, we can say that the Boulware vacuum is in Hadamard form in the exterior of the black hole, but not in the interior nor on the event horizon, and therefore cannot be considered as a physically sensible candidate for a state in a black hole background. Therefore, we need to find another candidate, one which is regular enough at the horizon (which does not annihilates any free-falling observer in a wall of fire while crossing the horizon). A possibility is to note that $v$ is the affine parameter on $\mathscr I^-$, and for symmetry to choose the vacuum on $\mathscr W$ with respect to its affine parameter. The past horizon is the surface defined by $ V = 0$. Since the metric in light-like coordinates is ill-defined, we use Kruskal coordinates here. The function defining this surface is $S: V = 0$, and so the gradient of the surface function $S$ is $\nabla S = \partial_V $, so that the normal vector is $\tilde n_{\mathscr W} \propto \partial_U $. We will fix the normalisation in a moment; indeed, one can check that $U$ is the affine parameter on $\mathscr W$, by noting that the only non-trivial contribution to the covariant derivative comes from the Christoffel symbol in the $U$ component:
\begin{equation}
\Gamma^U_{U U} = \frac{1}{2} g^{U k} ( 2 \partial_U g_{U k} - \partial_k g_{U U} ) \ ,
\end{equation}
but $g_{ U U } = 0$, and if $k = V$, one gets
\begin{equation}
\Gamma^U_{U U} = g^{U V} \partial_U g_{U V} = \frac{r}{16 m^3} e^{r/2m} \partial_U \bigg ( \frac{16 m^3}{r} e^{- r/2m} \bigg ) = - \bigg (\frac{1}{r} + \frac{1}{2m} \bigg ) \partial_U r \ .
\end{equation}
From the relation
\begin{equation}
UV = \bigg ( 1 - \frac{r}{2m} \bigg ) e^{r/2m} \ ,
\end{equation}
taking the derivative with respect to $U$ on both sides, we see that
\begin{equation}
V = - \frac{r}{4m^2} e^{r/2m} \partial_U r \ ,
\end{equation}
and, therefore, $\partial_U r \propto V$. But $V = 0$ on the past horizon by definition, and therefore $\Gamma^U_{U U} = 0$ on the past horizon. This in turn implies that $U$ is the affine parameter on $\mathscr W$. It makes sense, then, to choose a state such that its restriction to the conformal past infinity is \eqref{eq:Unruh-2PF-I}, and its restriction on the past horizon is the vacuum with respect to $U$, that is,
\begin{equation} \label{eq:Unruh-2PF-W}
\eval{W_2(\psi_1, \psi_2)}_{\mathscr W} = \frac{r_S^2}{\pi} \int_{\mathbb R \times \mathbb R \times \mathbb S_2} \frac{\psi_1(U_1) \psi_2(U_2)}{(U_2 - U_1 - i \epsilon)^2} \ \dd U_1 \dd U_2 \dd \Omega_2 \ .
\end{equation}
This defines the \textit{Unruh state} in $\mathscr M$. In \cite{DMP11} it has been proved that the Unruh state is Hadamard on $ \mathscr M$, and therefore it is a ground state for a free massless field propagating on the black hole.

We remark that this construction of the Unruh state holds for black hole in formation or evaporation, provided that spherical symmetry is preserved. In fact, in this case the spacetime outside the distribution of matter in accretion or evaporation is still a Schwarzschild solution, thanks to the Birkhoff theorem, and therefore one can still construct the Unruh state. Therefore, we will use the same state also when dealing with dynamical black holes, perturbed by some matter content for a finite amount of time.

\subsection{Hawking Temperature} \label{ssec:hawking-temperature}

The Hawking effect is associated to two conceptually distinct effects: one is the fact that, in an eternal black hole, the restriction of the Unruh state to the exterior of the black hole is a thermal state at Hawking temperature with respect to modes incoming from the white hole horizon. The second effect is the fact that a black hole formed by gravitational collapse radiates and eventually evaporates, emitting a thermal flux of particles which reaches future infinity as a black-body radiation at Hawking temperature.

From the form of the 2-point function of the Unruh state on the conformal past infinity and the past horizon, we can easily understand the thermal nature of the modes incoming from the white hole horizon, $\mathscr W^-$. In fact, the relationship between $U$ and $u$ (and, therefore, $t$) is the same relationship we studied in subsection \ref{ssec:interplay-vacuum-KMS} between $V$ and $v$. For an observer moving along a time-like geodesic, no signal can arrive from the interior of the black hole, and therefore its algebra is not the whole algebra $\mathcal A(\mathscr M)$, but its restriction to the black hole exterior $\mathcal A(\mathscr E)$. For the black hole exterior, the past horizon reduces to the white hole horizon $\mathscr W^-$, where $U < 0$, and we have the change of variables \eqref{Kruskal-light-like-coordinates}, that is, $U = - e^{-u/4m} $. The minus sign is the only difference with the change of variables we considered in subsection \ref{ssec:interplay-vacuum-KMS}, but the 2-point function \eqref{eq:Unruh-2PF-I} written in terms of $u$ coincides with the 2-point function \eqref{eq:2PF-vacuum-v}, where now $\kappa = 1/4m$:
\begin{equation}
\eval{W_2(\psi_1, \psi_2)}_{\mathscr W^-} = \lim_{\epsilon \to 0^+} \frac{\kappa^2}{4 \pi} \int_{\mathbb R_+ \times \mathbb R_+ \times S^{2}} \frac{\psi_1(u_1) \psi_2(u_2)}{\bigg( \sinh{\frac{\kappa}{2}(u_1-u_2)} + i\epsilon \bigg)^2} \dd u_1 \dd u_2 \dd \Omega_2 \ .
\end{equation}
We have seen however that this is the 2-point function for a thermal state with respect to translations in $u$, at inverse temperature $\beta = 2 \pi / \kappa  $; and therefore we can conclude that the Unruh state exhibits a thermal flow of radiation for modes incoming from the white hole horizon, at the \textit{Hawking temperature}
\begin{equation} \label{Hawking-temperature}
T = \frac{\kappa}{2 \pi} = \frac{1}{8 \pi m} \ .
\end{equation}

The fact that the Unruh vacuum is a thermal state for modes from $\mathscr W^-$ in the black hole exterior is similar to the Unruh effect, since both boil down to the fact that some observer has access to a restricted algebra of observables only. However, a black hole formed in gravitational collapse radiates a thermal flux of particles toward future infinity. This effect, which is the genuine Hawking effect that black hole radiate, can be understood considering the propagation of a massless field in the Unruh state in the black hole exterior. On the initial surface $\mathscr W^- \cup \mathscr I^-$, the field modes can be decomposed in those coming from past infinity, the so called \textit{in-modes}, and those coming from the white hole horizon, the \textit{out-modes}. The in-modes start with positive frequencies only, since the Unruh state restricted on past infinity is a vacuum state, while the out-modes are thermal states. Due to the propagation of the field in the black hole exterior, however, the modes are mixed, and, at future infinity, one sees a radial thermal flux of particles coming from the black hole event horizon. The details of this computation can be found in \cite{FredenhagenHaag89}.

The result of Hawking's paper \cite{Hawking74} showed that the relationship between black hole mechanics and thermodynamics is not a useful analogy but a true physical fact: black holes do have a finite temperature, and therefore a finite entropy. Knowing the temperature of a black hole, the entropy is easily found, using the first law of black hole mechanics and its analogy with the first law of thermodynamics: if the temperature is $T = \frac{\kappa}{2 \pi} $, it is immediate to see that a black hole carries an entropy
\begin{equation} \label{BH-entropy}
S_{BH} = \frac{A}{4}
\end{equation}
where $BH$ stand both for black hole or \textit{Bekenstein-Hawking}, according to one's taste. Thus, Hawking's computation fixes the proportionality coefficient between the area and the entropy, the bit of information Bekenstein couldn't derive with its gedanken experiments.

It is important to stress that the original derivation of Hawking radiation did not assume a quantum state, but rather used a semi-classical approximation, showing that the vacua before and after the formation of a black hole do not coincide but rather are related by a Bogoliubov transformation. The relation implies that because of the formation of a black hole we can detect a flux of thermal radiation.

The relevance of the Unruh state for the Hawking radiation is twofold. First, it can be used to compute the expectation value for the regularised stress-energy tensor of a massless scalar field in $\mathscr M$. It was shown in \cite{Candelas80} that the expectation value is regular on the event horizon (in contradistinction to the result obtained with the Boulware vacuum) but shows a steady flux of black-body radiation at future null infinity at Hawking temperature. This result shows that the Unruh vacuum is a suitable candidate quantum state, since it correctly reproduce the Hawking radiation. Second, it was shown in \cite{FredenhagenHaag89} that the existence of Hawking radiation after a spherically symmetric gravitational collapse is always implied whenever the state is of Hadamard form, in particular near the event horizon, and approximate the ground state near past null infinity. In particular, there is no need to assume any notion of particle, nor any assumption is made on the physics near the singularity: rather, the Hawking radiation is a consequence of the propagation of a quantum field in the outside of a black hole, provided we choose a physically sensible state. In terms of its 2-point function, the only assumption we need is that it reduces to \eqref{eq:Unruh-2PF-I} on the conformal past infinity, while it needs to be of Hadamard form on the whole past horizon $\mathscr W$.

We are now in the position to derive a formula for the entropy of a black hole using the same framework needed for the Hawking radiation. In particular, we consider a quantum state with 2-point function \eqref{eq:Unruh-2PF-I} on the past horizon, while we will assume only the Hadamard condition for the 2-point function on the past horizon, which we will call a \textit{generalised Unruh state}, or Unruh state for short. This is the class of states which exhibits Hawking radiation. We consider the propagation of a free, massless field in the outside of a black hole, and we compute the relative entropy between the Unruh state and a coherent perturbation. We will show that a variation of the relative entropy for the quantum field is accompanied by a variation of one fourth of the area of the horizon, not only in the Schwarzschild case, but also for the apparent horizon of a dynamical black hole with spherical symmetry. In fact, the computation of relative entropy relies on the generalised Unruh state, given by the 2-point function at conformal past infinity, which is a viable state whenever the spacetime is asymptotically flat.

\section{Relative Entropy for the Generalised Unruh State and Its Perturbations} \label{sec:RE-Unruh}

In order to discuss the variation of relative entropy we need to introduce one last modification to our model. In fact, instead of considering the whole past conformal infinity, we can give initial data (and the symplectic form) only on a portion of it, given by the condition $v < v_0 $. We will denote this portion with $\mathscr I^-(v_0) $. This implies that the quantum theory we construct is not defined on the whole Schwarzschild spacetime $\mathscr M$, but rather on the causal development of $\mathscr W \cup \mathscr I^-(v_0)$.

Now, we consider the propagation of a free massless scalar field in $D(\mathscr W \cup \mathscr I^-(v_0))$, and we give initial data such that $\eval{\psi}_{\mathscr W} = \eval{\nabla_n \psi}_{\mathscr W} = 0 $, with the covariant derivative in the direction normal to $\mathscr W$, that is, no matter is coming from the white hole.
 The idea is to perturb the Schwarzschild metric with a classical field, $\psi$, propagating over a fixed Schwarzschild background. We will consider this classical wave as a perturbation: we can consider a one-parameter family of metrics, $g(\zeta)$, such that $g(0)$ is the Schwarzschild metric. The stress-energy tensor of the scalar field is not a source of Einstein's equation, but it acts as a perturbation to some order in $\lambda$; in other words, for any function $f$, we will have the expansion
\begin{equation}
f(x) = f + \zeta \delta f + \zeta^2 \delta^2 f + ...
\end{equation}
so $\delta f = \dv{f}{\zeta}$ and $\delta^2 f = \frac{1}{2} \dv{f}{\zeta}$. To avoid notational clutter, we will denote simply with $f$ the value referred to the fixed background, so, for example, the area $A$ is the area of the Schwarzschild black hole $A = 4 \pi m^2$ and $\delta A$ its first order correction due to the scalar perturbation. 
 
The symplectic form \eqref{eq:symplectic-form-Schwarzschild}, over the space of the classical wave solutions, in the region $D(\mathscr W \cup \mathscr I^-(v_0))$, is
\begin{equation} \label{eq:symplectic-form-Schw-subregion}
w(\psi_1 , \psi_2) = \int_{\mathscr I^-(v_0)} ( \tilde \psi_2 \partial_\mu \tilde \psi_1 - \tilde \psi_1 \partial_\mu \tilde \psi_2 ) n^\mu_{\mathscr I^-} \ \dd v \dd \Omega_2 \ .
\end{equation}
The integral on the past horizon vanishes because of our choice of initial data, while we have already seen that $n_{\mathscr I^-} = \partial_v $. As we've said, we use a Weyl C*-algebra approach, and we select the generalised Unruh state, which is the vacuum with respect to $v$ translations at conformal past infinity. Since the Unruh state is quasifree we can construct the one-particle structure and the GNS-representation over a Hilbert space, and therefore the Unruh state is represented as a vector $\ket U \in \mathcal H$. Next, we introduce a coherent perturbation of the Unruh state, that is, a state of the form $W(\psi) \ket U$. We call the coherent state $\omega_\psi$. The stress-energy tensor of a classical solution of the Klein-Gordon equation is
\begin{equation} \label{SET-scalar-field}
T_{\mu \nu} = \partial_\mu \psi \partial_\nu \psi - \frac{1}{2} g_{\mu \nu} \partial_\alpha \psi \partial^\alpha \psi \ .
\end{equation} 

\begin{figure}
\centering
\includegraphics[scale=0.3]{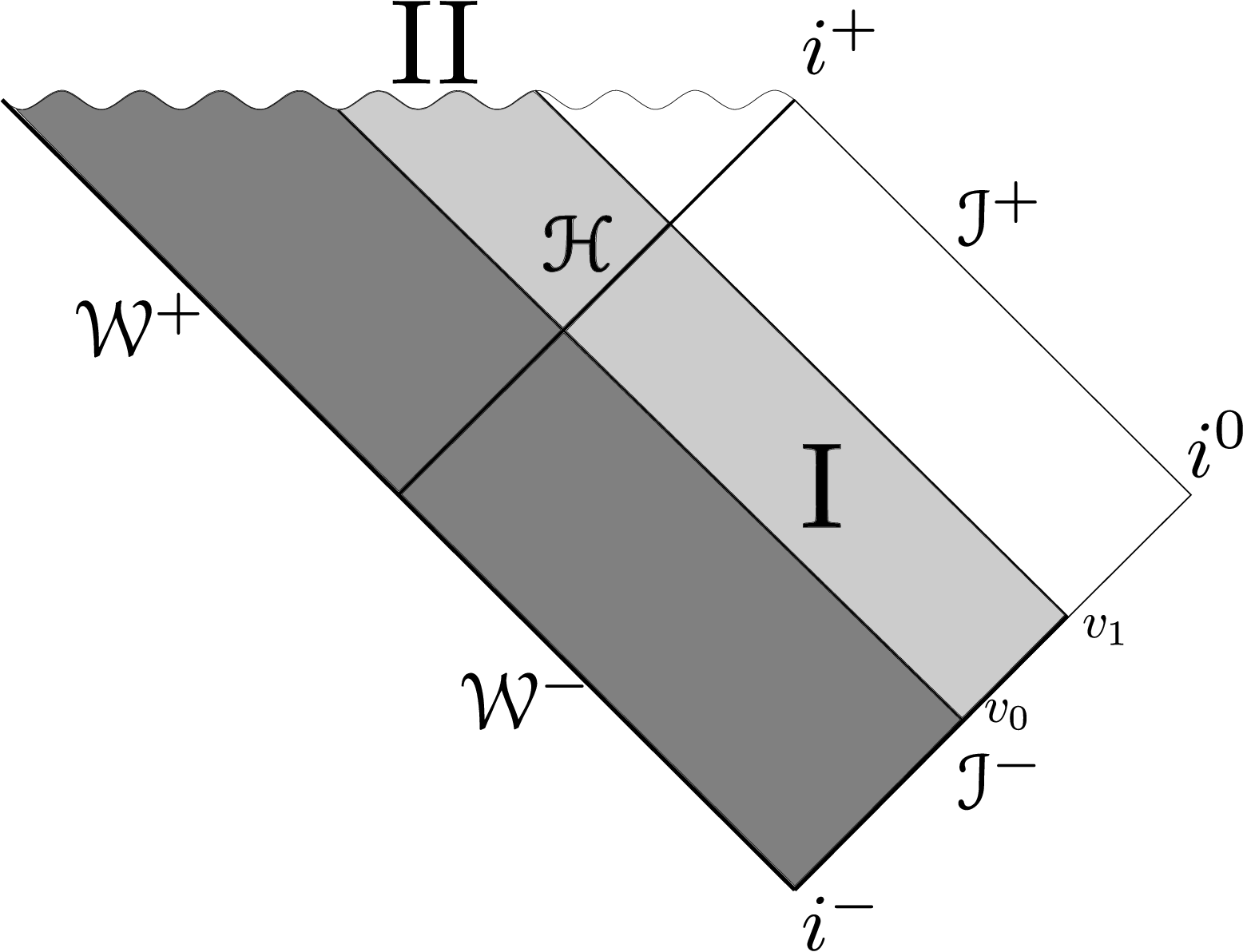}

\caption{The spacetime $\mathscr M$ we are considering, given by the union of regions I and II and their boundaries. The dark grey region is the causal development $D(\mathscr W \cup \mathscr I^-(v_0))$, while the union of the dark grey and light grey regions is the causal development $D(\mathscr W \cup \mathscr I^-(v_1))$.}
\end{figure}

Now, we want to compute the relative entropy between the Unruh state and its coherent perturbation, using \eqref{eq:EE-foundamental-formula}. Since the symplectic form is given at past infinity, where the Schwarzschild spacetime is flat, we can make use of the Bisognano-Wichmann theorem \eqref{th:bisognano-wichmann-theorem}, which states that $\alpha_s$ acts as a boost in the $t-x$ plane. From the change of coordinates \eqref{eq:change-of-variables-Lass}, we know that a boost is a translation $\tau \to \tau + s$, and we see that the coordinate $v$ transforms as
\begin{equation}
v = t + x = \frac{e^{a \xi}}{a} (\cosh a \tau + \sinh a \tau) = \frac{1}{a} e^{a ( \xi + \tau)} \to e^{a s}v \ ,
\end{equation}
and, therefore, the modular flow acts on $\psi$ as a dilation in the $v$ direction:
\begin{equation}
\alpha_s(\psi(u,v,\theta, \varphi) ) = \psi(u, v_0 + e^{- 2 \pi s} (v_0 - v ), \theta, \varphi) \ .
\end{equation}
Therefore, we have
\begin{equation} \label{eq:coherent-state-Unruh-alpha}
\eval{\dv{\alpha_s(\psi)}{s}}_{s = 0} = - 2 \pi (v_0 - v) \partial_v \psi \ .
\end{equation}
Now, we have all the ingredients to compute the relative entropy, in the region $ D(\mathscr W \cup \mathscr I^-(v_0))$, between the Unruh state and a coherent perturbation caused by a classical wave $\psi$. We have seen in section \ref{sec:E-EE}, subsection \ref{ssec:EE-coherent-states}, that the relative entropy between two coherent states is given by \eqref{eq:EE-foundamental-formula}. Using \eqref{eq:coherent-state-Unruh-alpha}, we can compute the relative entropy between the Unruh state and a coherent perturbation, $S_{v_0}(\omega | \omega_\psi) $:
\begin{multline}
S_{v_0}(\omega | \omega_\psi) = \frac{1}{2} w \bigg (  \eval{\dv{\alpha_s(\psi)}{s}}_{s=0}, \psi \bigg ) = \\ 
= - \frac{1}{2} 2 \pi \int_{-\infty}^{v_0} \int_{\mathbb S_2} \psi \partial_v [(v_0 - v) \partial_v \tilde \psi] - (v_0 - v) (\partial_v \tilde \psi)^2  \ \dd v \dd \Omega_2 \ .
\end{multline}
Integrating by parts the first term, one sees that the boundary contribution vanishes, at $v_0$ because the integrand is proportional to $v_0 - v$, and at $-\infty$ because $\eval{\tilde \psi}_{-\infty} = 0$ since it is evaluated on $\mathscr W$, where the initial data are zero. So one gets
\begin{equation} \label{eq:RE-Schw-v0}
S_{v_0}(\omega | \omega_\psi) = 2 \pi\int_{-\infty}^{v_0} \int_{\mathbb S_2} (v_0 - v) (\partial_v \tilde \psi)^2  \ \dd v \dd \Omega_2 \ .
\end{equation}
This is the same form of the result obtained in \cite{HollandsIshibashi19} for the relative entropy between coherent states of a free scalar field.

Now, we compute the derivative of the relative entropy as we translate the boundary by some parameter, $v_0 \to v_0 + t$. If we take the derivative of $S_{v_0 + t}$ with respect to $t$, we find the variation of entropy when we modify the region in which the theory is defined:
\begin{equation}
\dv{t} S_{v_0}(\omega | \omega_\psi) = 2 \pi\int_{-\infty}^{v_0} \int_{\mathbb S_2}(\partial_v \tilde \psi)^2  \ \dd v \dd \Omega_2 \ .
\end{equation}

We can repeat the same computation choosing another portion of conformal past infinity $\mathscr I^-(v_1)$, where $v_0 < v_1$, getting the same result. Now, we can rigidly translate both $v_0$ and $v_1$ of the same parameter $t$.  We take the difference between the derivatives of $S_{v_1}$ and $S_{v_0}$ with respect to $t$, and finally, noting that a derivative with respect to $t$ can be written as a derivative with respect to $v$, we find 
\begin{equation} \label{RE-variation-Unruh-state}
\frac{1}{2 \pi}\dv{v}\big ( S_{v_1}(\omega | \omega_\psi) - S_{v_0}(\omega | \omega_\psi) \big ) = \int_{v_0}^{v_1} \int_{\mathbb S_2} (\partial_v \tilde \psi)^2  \ \dd v \dd \Omega_2
\end{equation}
The notation $\dv{v}S_{v_0}(\omega | \omega_\psi) $ should be read as the derivative of $S_{v}(\omega | \omega_\psi)$, that is, the relative entropy for a region extended up to $v$, with respect to the location of the boundary of that region, and then evaluated for $v = v_0$.

We stress on the fact that this is not the relative entropy associated to the region $v_0 < v < v_1$, since the entropy is not localised but is rather a property of the theory. The fact that the integral takes contributions only on this region is simply a consequence of the computation. We can say that the relative entropy is associated with the region in which the theory is defined, that is, the causal development of the surface on which we give initial data. Therefore, this quantity expresses the variation of the difference between the relative entropies for a theory defined in two different regions, when we rigidly translate their boundaries of the same amount $t$.

If we had another quantum state, how would the entropy change? Suppose we are given a state $\omega'$ in which the modular flow acts not as dilation in the $v$ direction, but rather as dilation of some different parameter which is a smooth function of it, $w(v)$, that is,
\begin{equation}
\alpha_s(\psi(u, w(v), \theta, \varphi)) = \psi(u, w_0 + (w - w_0)e^{- 2 \pi s}, \theta, \varphi) \Rightarrow \eval{\dv{\alpha_s(\psi)}{s}}_{s=0} = - 2 \pi (w - w_0 ) \partial_w \psi \ .
\end{equation}
This can happen if, for example, the state we choose is invariant under translations in $w$. Then, calling $w_0 = w(v_0)$, the relative entropy between $\omega'$ and a coherent perturbation of it, $\omega'_\psi$, is given by expression \eqref{eq:EE-foundamental-formula},
\begin{multline}
S(\omega' | \omega'_\psi) = \frac{1}{2} w \bigg( \eval{\dv{\alpha_s(\psi)}{s}}_{s= 0}, \psi \bigg) = \\ 
= - \frac{1}{2} 2 \pi \int_{-\infty}^{v_0} \int_{\mathbb S_2} \tilde \psi \partial_v [(w_0 - w) \partial_w \tilde \psi] - (w_0 - w) \partial_w \tilde \psi \partial_v \tilde \psi  \ \dd v \dd \Omega_2 \ .
\end{multline}
We can again integrate by parts the first term, noting that the boundary term once again vanishes, to get
\begin{equation}
S(\omega' | \omega'_\psi) = 2 \pi \int_{-\infty}^{v_0} \int_{\mathbb S_2} (w_0 - w) \partial_w \tilde \psi \partial_v \tilde \psi  \ \dd v \dd \Omega_2 \ .
\end{equation}
Now, the derivative with respect to $w$ can be rewritten as $\dv{w} = \dv{v}{w} \dv{v}$. If we expand the relation between $w$ and $v$ around $w_0$,
\begin{equation}
w - w_0 = \dot w(v_0) (v - v_0) + \mathcal O((v-v_0)^2) \ ,
\end{equation}
where the dot denotes the derivative with respect to $v$, then we can rewrite the integral at first order in $v-v_0$, obtaining
\begin{equation}
S(\omega' | \omega'_\psi) \simeq 2 \pi \int_{-\infty}^{v_0} \int_{\mathbb S_2}  \dot w(v_0)(v_0 - v) \frac{1}{\dot w(v)} (\partial_v \tilde \psi)^2  \ \dd v \dd \Omega_2  \ .
\end{equation}
Now, we can further expand the derivative of $w$ in
\begin{equation}
\dot w(v)^{-1} = \frac{1}{\dot w(v_0) + (v-v_0)\ddot w(v_0)} + \mathcal O((v-v_0)^2) \simeq \frac{1}{\dot w(v_0)} \bigg (1 - (v-v_0)\frac{\ddot w(v_0)}{\dot w(v_0)} \bigg ) \ .
\end{equation}
The relative entropy then becomes, at linear order in $v-v_0$,
\begin{equation}
S(\omega' | \omega'_\psi) \simeq 2 \pi \int_{-\infty}^{v_0} \int_{\mathbb S_2} \bigg(1 + (v_0 - v) \frac{\ddot w(v_0)}{\dot w (v_0)} \bigg) (v_0 - v) (\partial_v \tilde \psi)^2 \ \dd v \dd \Omega_2 \ . 
\end{equation}
We see that the first term reproduce the relative entropy in the Unruh state, while the second term is a correction proportional to $v_0 - v$. Therefore, the leading term in the relative entropy between coherent states is independent on the choice of the state, as long as the state is a vacuum with respect to some parameter $w(v)$. Our results on the relative entropy therefore holds also at first order in a perturbation expansion in $v_0 - v$ for states which are perturbations to the Unruh state, in the sense that the parameter $w(v)$ can be expanded in $v$.

\section{Variation of Generalised Entropy in Schwarzschild}
The idea is now to follow the conceptual steps of Bekenstein gedanken experiments \cite{Bekenstein73} in order to identify the correct expression for the entropy of a black hole. In this section we closely follow the work of Hollands and Ishibashi \cite{HollandsIshibashi19}, which showed how the relative entropy between coherent states can be used in the context of black hole thermodynamics, with some simple modifications.

Here, we considered the two initial data surfaces $\mathscr W \cup \mathscr I^-(v_0)$ and $\mathscr W \cup \mathscr I^-(v_1)$ (that are, the union of the past horizon with a portion of past null infinity with $v < v_{0,1}$), from which we let a free, massless scalar field to propagate in their causal developments, $D(\mathscr W \cup \mathscr I^-(v_0))$ respectively $D(\mathscr W \cup \mathscr I^-(v_1))$, which cover a portion of the black hole exterior. As we have seen in the last section, we can compute the relative entropy associated to each region between the Unruh state and a coherent perturbation. We remark that the relative entropy is not localised, but it is associated to a theory, and as such is dependent on the region in which the theory is defined. We then compute the variation of relative entropy as we move the boundary of its associated region by some parameter $t$, $v_{0,1} \to v_{0,1} + t$, and finally we take the difference between the derivatives of the relative entropies associated to the two regions $D(\mathscr W \cup \mathscr I^-(v_0))$ and $D(\mathscr W \cup \mathscr I^-(v_1))$.

We then want to link this quantity to a variation in the horizon area of the black hole. In order to do so, we use a conservation law associated to the field stress-energy tensor. In the case of Schwarzschild, we use the conserved current associated to the Killing vector $\xi_t = \partial_t$. In the case of dynamical horizons, we will see that it is possible to construct a different vector, the Kodama vector \cite{Kodama79} (see section \ref{ssec:kodama-miracle}) which defines a different conserved current, but the conceptual setup is exactly the same. In the black hole exterior, we consider the region enclosed by four null hypersurfaces: two null hypersurfaces at $v = v_0$ and $ v = v_1$, the corresponding portion at past infinity, $\delta \mathscr I^-$, defined by the condition $v_0 < v < v_1$, and the portion of event horizon $\delta \mathscr H$ with $ e^{\kappa v_0} < V < e^{\kappa v_1} $. We then translate the Killing conservation law in a integral of a flux across the four boundaries using the Gauss-Stokes theorem, \eqref{th:gauss-stokes-theorem}. Finally, the flux term across the event horizon is linked to a variation of the horizon area, using the Raychaudhuri equation (equation \eqref{eq:Raychaudhuri}, in subsection \ref{ssec:Raychaudhuri-eq}) for the null generators of the horizon; the flux term at past infinity is written in terms of the variation of the derivative of relative entropy, as we constructed in the last section, and this finally connects the variation of the horizon area with a variation of relative entropy. The result is given in equation \eqref{eq:result-RE-Schw}.

In \cite{HollandsIshibashi19}, Hollands and Ishibashi take the Cauchy surface for the scalar field as the union of a portion of event horizon and of future null infinity. This way, they are able to compute the relative entropy directly as a variation of the horizon area, plus a flux term given by the field stress-energy tensor at future infinity, without the need of a conservation law. We have chosen to modify their setup for a couple for reasons. The first is motivated on physical grounds: we wanted to give initial data at past infinity, and see how the field would propagate in the spacetime, rather than fix the field's data at future infinity. Moreover, giving initial data at past infinity we could choose the Unruh 2-point function, which is the most physically sensible state for black hole spacetimes. The second reason is that, in order to generalise to dynamical black holes, we want to compute the variation in horizon area in an arbitrary, finite interval. This is because in section \ref{sec:RE-vaidya} we will consider a perturbation in a Schwarzschild spacetime caused by the infalling of some matter in a finite interval into the black hole. While the matter falls into the black hole, the distinction between apparent horizons (see subsection \ref{ssec:AH-first-law}) and event horizons become important, and we want to associate the variation in relative entropy to the variation of the apparent horizon area. Therefore, we need to compute the area variation in a finite interval, and not between an instant $v_0$ and infinity, when the apparent and event horizons coincide again. We can then compute the limit in which one of the two extremes of our region, $v_0$ or $v_1$, goes to infinity. 

\begin{figure}
\centering
\includegraphics[scale=0.3]{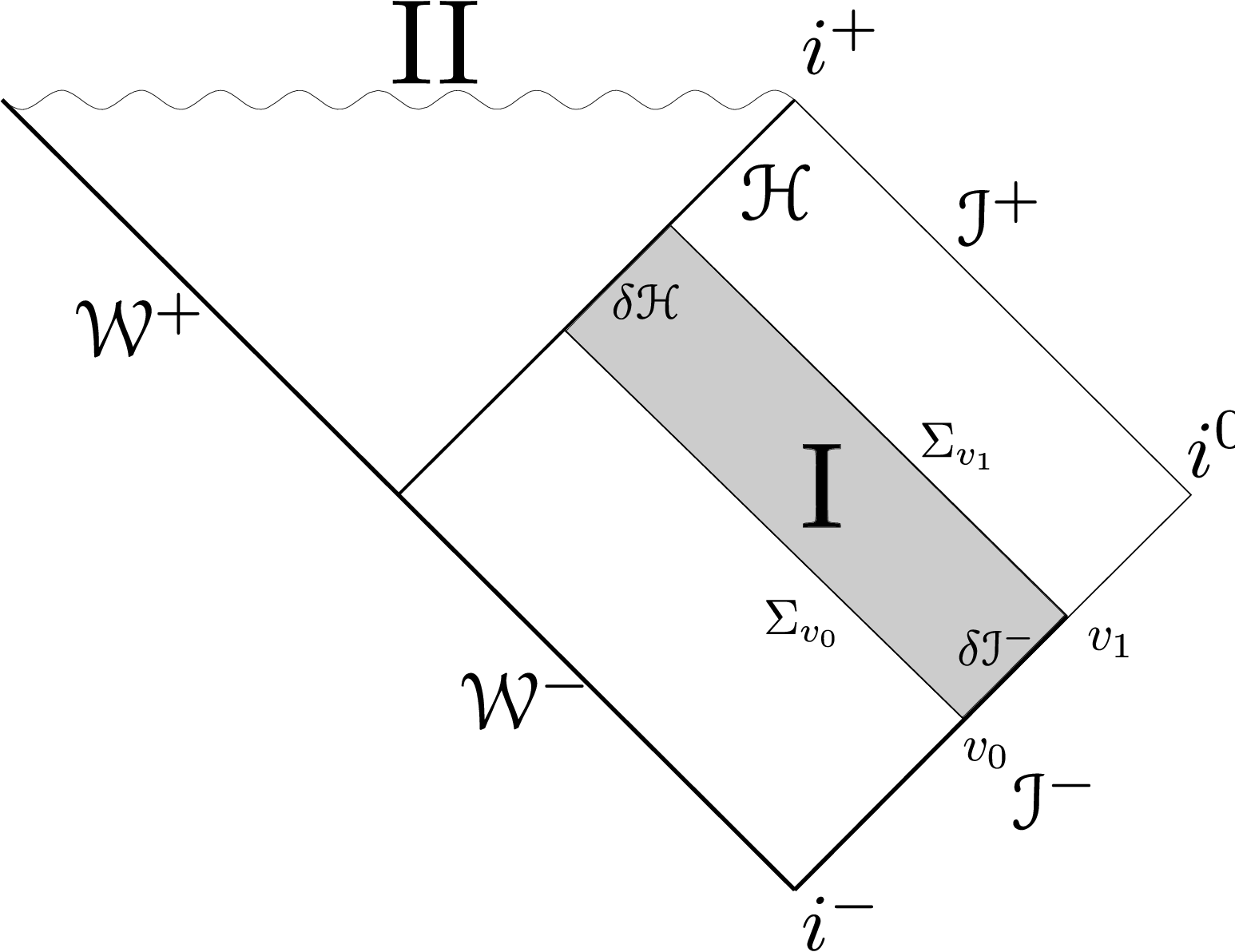}

\caption{In grey, the interior of the region enclosed by the four boundaries over which we compute the flux of the conservation law.}
\end{figure}

    \subsection{Killing Conservation Law} \label{ssec:killing-conservation-law}
    
In order to connect the variation of relative entropy \eqref{RE-variation-Unruh-state} to a variation in the horizon area, we make use of the conservation law for the Killing current associated with $\xi_t = \partial_t$, $\nabla_\mu ( \xi_{t \nu} T^{\mu \nu} ) = 0$. Since $\partial_t$ will be the only Killing vector field we will consider in this and the following sections, we drop the subscript $t$ and we write simply $\xi = \partial_t$. We now rewrite the Killing conservation law as a flux integral, using the Gauss-Stokes theorem, \eqref{th:gauss-stokes-theorem}.  Consider a region in the Schwarzschild exterior defined by four, light-like boundaries: two hypersurfaces at $v = v_0$ and $v = v_1$, denoted $\Sigma_{v_0}$ and $\Sigma_{v_1}$, the portion of conformal past infinity between the two, which we denote $\delta \mathscr I^-$, (that is, the portion of conformal past infinity with $v_0 < v < v_1 $) and the corresponding portion of event horizon, $\delta \mathscr H$, with $e^{\kappa v_0} < V < e^{\kappa v_1}$. These four hypersurfaces define what is called a double cone in the Schwarzschild exterior, and form its boundary. Therefore, we can use the Gauss-Stokes theorem to turn the conservation law for $J^\mu$ into a flux integral over the boundary:
\begin{equation} \label{flux-integral-Schw}
\begin{split}
0 &= \int_{\delta \mathscr H} J_\mu n^\mu_{\mathscr H} \ \dd V \dd \mathbb S_2 + \\
&- \int_{\Sigma_{v_0}} J_\mu n^\mu_{\Sigma_{v_0}} \ \dd u \dd \mathbb S_2 + \\
&+ \int_{\Sigma_{v_1}} J_\mu n^\mu_{\Sigma_{v_1}} \ \dd u \dd \mathbb S_2 + \\
&- \int_{\delta \mathscr I^-} J_\mu n^\mu_{\mathscr I^-} \ \dd v \dd \mathbb S_2
\end{split}
\end{equation}
The signs come from the chosen convention, that the normal vector is always future-directed. From this equation we will rewrite each term in order to find a law for the variation of the area of the horizon and of the relative entropy of the scalar field.

We briefly comment that the first surface integral in the above equation, the one on the event horizon, is slightly different from the one we computed in the context of the first law, \eqref{I-law-EH}. There, we computed the variation in the black hole mass along a Killing vector \textit{normal} to the horizon. In that case too we had a conserved current constructed from the stress-energy tensor and a Killing vector, but the two Killing fields are different. In particular, confusion may arise because the first component of the Killing vector we used in the first law coincides with the Killing vector $\xi = \partial_t$, which we use here and hereafter. However, the parametrisation in the two cases is different: in the case of the first law, we computed the component of the conserved current along the Schwarzschild time, $t$. Here, we compute the integral of the conserved current along the horizon itself, with affine parametrisation. The result of the integration therefore will not be the variation of the area in time, times a multiple of the surface gravity, as we found in the first law, but, as we will see in the next subsection, the variation of the \textit{derivative} of the area.

    \subsection{Area term} \label{ssec:area-term-Schw}

On the event horizon, the Killing vector reduces to $\xi = \frac{V}{2 r_S} \partial_V $. On the other hand, we computed in the previous section the normal vector to the past horizon, $\mathscr W$, finding $n_{\mathscr W} = \partial_U $. The computation for the normal vector to the event horizon goes along the same lines, and in particular one finds that $n_{\mathscr H} = \partial_V$, where $V$ is the affine parameter along the horizon generators. Finally, since the event horizon is a light-like hypersurface, the induced metric is singular, as we have seen in subsection \ref{ssec:hypersurfaces}, equation \eqref{eq:surface-element-null}. The non-vanishing components are the angular part of the metric, so the square root of the determinant of the metric is $\sqrt \sigma = r_S^2 \sin \theta \dd \theta \dd \varphi$, while the induced coordinates are $(V, \theta, \varphi)$. Therefore, the flux term on the horizon becomes
\begin{equation}
\int_{\delta \mathscr H} J_\mu n^\mu_{\mathscr H} \ \dd V \dd \mathbb S_2 = \int_{\delta \mathscr H} \frac{V}{2 r_S} T_{V V} \ \dd V \sqrt{\sigma} \dd \theta \dd \varphi \ .
\end{equation}       
We can rewrite the integrand using the Rayhaudhuri equation \eqref{eq:Raychaudhuri}, that we discussed in subsection \ref{ssec:Raychaudhuri-eq}, for the horizon generators congruence, which has tangent vector $u = \xi$.  Since we are working with an affine parameter, $\kappa \theta = 0$, and since the geodesics are light-like we can use the expression for the Raychaudhuri equation \eqref{eq:Raychaudhuri-SET}. The congruence is hypersurface orthogonal, so the rotation tensor vanishes. In fact, since the perturbation is a scalar, no tensor perturbations are possible, and therefore $\sigma^{ab} = \omega^{ab} = 0$. Finally, we can compute the expansion:
\begin{equation}
\theta = \nabla_\mu \xi^\mu = \partial_\mu \xi^\mu + \Gamma^\mu_{\mu \nu} \xi ^\nu = \frac{V}{2 r_S} \Gamma^\mu_{\mu V} \ .
\end{equation}       
The partial derivatives vanish because $\partial_\mu \xi^\mu = - \frac{1}{2 r_S} + \frac{1}{2 r_S}$, while
\begin{equation}
\Gamma^U_{U V} = \frac{1}{2} g^{UV} ( \partial_U g_{V V} + \partial_V g_{U V} - \partial_V g_{U V}) = 0 \ .
\end{equation} 
On the other hand, $\Gamma^V_{V V}$ vanishes for the same reason why $\Gamma^U_{U U} = 0$ on the past horizon, as we saw in the last subsection; in fact, the computation is exactly the same if we make the exchange $U \leftrightarrow V$. Therefore, $\theta = 0$ on the horizon. The same conclusion follows from our physical assumption of a stationary black hole, which does not grow. Finally, in \cite{HollandsWald12}, it is shown that a scalar perturbation acts only at second order in the perturbation expansion. This can be seen directly in the Raychaudhuri equation, because if we take a derivative with respect to the perturbative parameter $\zeta$, we see that the geometric quantities in the right-hand side of \eqref{eq:Raychaudhuri-SET} (that are, the expansion scalar and the shear and rotation tensors) vanish identically, because each term is quadratic and therefore the derivative is proportional to the zeroth quantity: for example, for the expansion term,
\begin{equation}
\delta \theta^2 \propto \theta \delta \theta = 0 \ .
\end{equation}
The same holds for the second-order variation; in fact, we can see that
\begin{equation}
\delta^2 (\theta^2) = 2 \delta(\theta \delta \theta) = 2 (\delta \theta)^2 + 2 \theta \delta^2 \theta = 0 \ ,
\end{equation}
with analogous computations for the shear and rotation tensors.

On the left-hand side, the first-order variation vanishes because we can exchange the derivative with respect to the affine parameter, $V$, and the derivative with respect to $\zeta$. The second order derivative, however, it is not necessarily zero, and therefore the scalar pertubation acts at second order in the geometric quantities appearing in the Raycaudhuri equation. Thus, the Raychaudhuri equation for the perturbation of the scalar expansion caused by the scalar field is
\begin{equation}
\dv{\delta^2 \theta}{V} = - 8 \pi T_{V V} \ .
\end{equation}

We can then substitute the left-hand side in the flux integral on the horizon, finding
\begin{equation}
 \int_{\delta \mathscr H} \frac{V}{2 r_S} T_{V V} \ \dd V \sqrt{\sigma} \dd \theta \dd \varphi = - \frac{1}{16 \pi r_S } \int_{\delta \mathscr H} V \dv{\delta^2 \theta}{V} \ \dd V \sqrt{\sigma} \dd \theta \dd \varphi \ .
\end{equation}

We integrate by parts to get
\begin{equation} \label{eq:area-integral-S}
   \frac{1}{16 \pi r_S} \int_{S^{d-2}} \eval{V \delta^2 \theta(V)}^{V_1}_{V_0} \sqrt \sigma \dd \theta \dd \varphi - \delta^2 \int_{V_0}^{V_1} \int_{S^{d-2}}  \theta \dd V  \sqrt \sigma \dd \theta \dd \varphi \ .
\end{equation}

The second term is vanishing, as the expansion is zero on the horizon. Now, we use the geometric interpretation of the scalar expansion \eqref{eq:expansion-area-light-like}. From this, we find a relation for the first order perturbation
\begin{equation}
   0 = \delta \theta = \frac{1}{\sqrt \sigma} \bigg( \dv{\delta \sqrt \sigma}{V} - \theta \frac{\delta \sqrt \sigma}{\sqrt \sigma} \bigg) = \frac{1}{\sqrt \sigma} \dv{\delta \sqrt \sigma}{V} \ ,
\end{equation}
and the second order:
\begin{equation}
    \delta^2 \theta = \frac{1}{\sqrt \sigma} \bigg ( \dv{\delta^2 \sqrt \sigma}{V} - \delta \theta \frac{\delta \sqrt \sigma}{\sqrt \sigma} \bigg)  = \frac{1}{\sqrt \sigma} \dv{\delta^2 \sqrt \sigma}{V} \ .
\end{equation}
We can substitute it in the integral, finding

\begin{equation} \label{eq:area-term-Schw-intermediate}
\begin{split}
    \frac{1}{16 \pi r_S} \int_{S^{d-2}} \eval{V \delta^2 \theta(V)}^{V_1}_{V_0} \sqrt \sigma \dd \theta \dd \varphi  &=
    \frac{1}{16 \pi r_S} \int_{S^{d-2}} \eval{V \dv{\delta^2 \sigma}{V}}^{V_1}_{V_0} \dd \theta \dd \varphi = \\
    = & \frac{1}{16 \pi r_S} \bigg( V_1 \dv{\delta^2 A(V_1)}{V} - V_0 \dv{\delta^2 A(V_0)}{V} \bigg) \ .
\end{split}
\end{equation}

Using $\dv{ v} = \frac{ V}{2 r_S}\dv{ V}$ we finally have
\begin{equation} \label{eq:flux-term-area-Schw}
    \int_{\delta \mathscr H} J_\mu n^\mu_{\mathscr H} \ \dd V \dd \mathbb S_2 = - \frac{1}{8 \pi} \bigg( \dv{\delta^2 A(v_1)}{v} - \dv{\delta^2 A(v_0)}{v} \bigg) \ .
\end{equation}

    \subsection{Variation of Generalised Entropy} \label{ssec:generalised-entropy-Schw}

We now consider the flux term at conformal past infinity. This is given by
\begin{equation}
\Phi_{\delta \mathscr I^-} = - \int_{\delta \mathscr I^-} J_\mu n^\mu_{\mathscr I^-} \ \dd v \dd \mathbb S_2 \ .
\end{equation}
It has been shown in \cite{DMP11} that the scalar field decays sufficiently fast at infinity, so that this integral is in fact convergent.

The integrand is $n^\mu_{\mathscr I^-} J_\mu = T_{uv} + T_{v v}$; however, the off-diagonal term is
\begin{equation}
T_{u v} = \partial_u \psi \partial_v \psi - \frac{1}{2}g_{uv} ( 2 g^{uv} \partial_u \psi \partial_v \psi + \sigma^{AB} \partial_A \psi \partial_B \psi) = -\frac{1}{2} g_{uv} \sigma^{AB} \partial_A \psi \partial_B \psi \ .
\end{equation}
But the angular part of the metric has an asymptotic behaviour $\sigma^{AB} \sim \frac{1}{r^2}$, and therefore this term does not contribute at infinity. So we are left with
\begin{equation}
\Phi_{\delta \mathscr I^-} = - \int_{\delta \mathscr I^-} (\partial_v \psi)^2 \ r^2 \sin \theta \dd v \dd \theta \dd \varphi \ ,
\end{equation}
which can be rewritten in terms of $\tilde \psi = r \psi$,
\begin{equation}
r \partial_v \psi = \partial_v (\tilde \psi) - \frac{\partial_v r}{r} \tilde \psi \ .
\end{equation}
The second term vanishes at infinity, so the flux term is
\begin{equation}
\Phi_{\delta \mathscr I^-} = - \int_{\delta \mathscr I^-} J_\mu n^\mu_{\mathscr I^-} \ \dd v \dd \mathbb S_2 = - \int_{\delta \mathscr I^-} (\partial_v \tilde \psi)^2 \ \sin \theta \dd v \dd \theta \dd \varphi \ .
\end{equation}
This is nothing but the expression we computed for the variation of relative entropy in \eqref{RE-variation-Unruh-state}:
\begin{equation}\label{eq:flux-term-infinity-RE-Schw}
- \int_{\delta \mathscr I^-} J_\mu n^\mu_{\mathscr I^-} \ \dd v \dd \mathbb S_2 = - \frac{1}{2 \pi}\dv{v}\big ( S_{v_1}(\omega | \omega_\psi) - S_{v_0}(\omega | \omega_\psi) \big ) \ ,
\end{equation}
where $\dv{v}$ is the derivative with respect to the translation in the $v$ direction of the two null hypersurfaces at fixed $v$, $\Sigma_{v_{0,1}} \to \Sigma_{v_{0,1} + t} $, as we discussed in section \ref{sec:RE-Unruh}.

We can therefore rewrite the flux integral of the conserved current, \eqref{flux-integral-Schw}, substituting \eqref{eq:flux-term-area-Schw} and \eqref{eq:flux-term-infinity-RE-Schw}, reorganizing terms and noting that the derivative with respect to $v$ and $t$ coincide, in order to find
\begin{equation}
\frac{1}{2 \pi} \dv{v} \bigg [ S_{v_1} + \frac{1}{4} \delta^2 A(v_1) - \bigg( S_{v_0} + \frac{1}{4} \delta^2 A(v_0) \bigg )  \bigg] = - \int_{\Sigma_{v_0}} J_\mu n^\mu_{\Sigma_{v_0}} \ \dd u \dd \mathbb S_2 + \int_{\Sigma_{v_1}} J_\mu n^\mu_{\Sigma_{v_1}} \ \dd u \dd \mathbb S_2 \ .
\end{equation}
It is natural, then, to introduce a \textit{generalised entropy} 
\begin{equation} \label{eq:generalised-entropy-Schw}
S_{gen} = S(\omega | \omega_\psi) + \frac{1}{4}A \ ,
\end{equation}
which takes into account the contribution to the entropy given by the relative entropy between the vacuum and a coherent perturbation, and one-fourth the area of the black hole, which arises as the entropy contribution of the black hole. Since at zeroth and first order the area of the black hole does not change, the equation can be rewritten as
\begin{equation} \label{eq:result-RE-Schw}
\frac{1}{2 \pi} \dv{v}S_{gen}(v_1) - \dv{v} S_{gen}(v_0) = - \int_{\Sigma_{v_0}} J_\mu n^\mu_{\Sigma_{v_0}} \ \dd u \dd \mathbb S_2 + \int_{\Sigma_{v_1}} J_\mu n^\mu_{\Sigma_{v_1}} \ \dd u \dd \mathbb S_2 \ ,
\end{equation}
which holds up to second order in the perturbation expansion in $\zeta$.

To make the expression cleaner, we can consider the limit in which $v_0 \to - \infty$, in which $\Sigma_{v_0} \to \mathscr W^-$. In this limit, if we assume that the black hole is stationary, $\delta^2 A(v_0) = 0$, simply because, in \eqref{eq:area-term-Schw-intermediate}, $\theta(V_0) = 0$. On the other hand, the choice $v_0 = -\infty$ shrinks the portion $\delta \mathscr I^-(v_0)$ to a 2-sphere. Then, the integral which gives the relative entropy, \eqref{eq:RE-Schw-v0}, becomes zero. In other terms, we have $S_{gen}(v_0) \to 0$ if $v_0 \to - \infty$. Finally, since we have chosen initial data such that the field and its derivatives vanish on the white hole horizon, the flux integral on $\Sigma_{v_0}$ vanishes, since it becomes a surface integral of the field on $\mathscr W^-$. Therefore,  equation \eqref{eq:result-RE-Schw} reduces to two terms: the generalised entropy associated to the region $D(\mathscr W^- \cup \mathscr I^-(v_1)$, and the integral of a flux of particles crossing $\Sigma_{v_1}$:
\begin{equation} \label{eq:result-RE-Schw-limit}
\dv{v} S_{gen}(v_1) = 2 \pi \int_{\Sigma_{v_1}} J_\mu n^\mu_{\Sigma_{v_1}} \ \dd u \dd \mathbb S_2 \ .
\end{equation}

\section{Variation of Generalised Entropy in Vaidya spacetimes} \label{sec:RE-vaidya}

\subsection{Vaidya Spacetime} \label{ssec:vaidya-geometry}
In this section we want to apply the same framework we used in the Schwarzschild spacetime to the case of a Vaidya black hole.

A Vaidya black hole is a toy model for a black hole in formation or in evaporation. It is the simplest generalisation of a Schwarzschild black hole, since it replaces the constant mass parameter $m$ with a function of the advanced time only, $m(v)$. The metric in Eddington-Finkelstein coordinates then is
\begin{equation} \label{eq:Vaidya-metric}
\dd s^2 = - \bigg ( 1 - \frac{2m(v)}{r} \bigg) \dd v^2 + 2 \dd v \dd r + r^2 \dd \Omega_2^2 \ .
\end{equation}
We will often denote $- g_{v v} = f(v,r) = 1- 2m(v)/r $.

The stress-energy tensor for the matter source of the Vaidya metric can be computed from Einstein equations, and is given by
\begin{equation} \label{eq:SET-source-Vaidya}
T^{source}_{\mu \nu} = \frac{\dot m(v)}{4 \pi r^2} \delta_{v \mu} \delta_{v \nu} \ .
\end{equation}
This is the stress-energy tensor of shells of null dust, or radiation.

If the weak energy condition holds, that is, if the source matter is classical, one can show using the Einstein equation that $\dot m(v) > 0$: the mass function is increasing in time, and the Vaidya metric \eqref{eq:Vaidya-metric} describes a black hole in accretion. For our purposes, the weak energy condition is not needed: that is, our results hold also if $\dot m(v) < 0$, in which case the black hole evaporates emitting radiation. However, to fix the ideas we will talk of the source matter as infalling matter, considering the classical case.

We choose a mass function so that our black hole is just a perturbation in a finite time of the Schwarzschild black hole. In other words, we consider a black hole in which some matter falls during an interval $v \in [v_0, v_1]$, and that is stationary otherwise. The spacetime is a static black hole for $v< v_0$, while at $v = v_0$ a perturbation is switched on, due to the infalling of shells of null radiation. After $v_1$, the perturbation stops, and the black hole settles to a stationary state again.

We report here for later use the relevant Christoffel symbols
\begin{align*}
    \Gamma^k_{v v} &= \frac{1}{2}g^{v k}\partial_v g_{v v} - \frac{1}{2}g^{k r} \partial_r g_{v v} \\
    \Gamma^k_{r r} &= 0 \\
    \Gamma^k_{r v} &= \frac{1}{2}g^{k v}\partial_r g_{v v} \\
    \Gamma^\theta_{\theta r} &= \Gamma^\varphi_{\varphi r} = \frac{1}{r} \\
    \Gamma^\theta_{\theta v} &= \Gamma^\varphi_{\varphi v} = 0
\end{align*}
and the $t-r$ part of the inverse metric
\begin{equation}
    g^{-1} =
    \begin{pmatrix}
    0& 1 \\
    1 & f \\
    \end{pmatrix} \ .
\end{equation}

Since the perturbation acts only in a finite amount of time, the spacetime is asymptotically flat, and therefore we can construct a Unruh state just as we did in Schwarzschild. Therefore, the expression for the variation of relative entropy between the Unruh state and a coherent perturbation, in Vaidya spacetime, will be the same as that we computed in Schwarzschild, in section \ref{sec:RE-Unruh}.

We consider a scalar perturbation over the Vaidya spacetime, which is not a source for the metric but acts as a perturbation of the quantities at second order, just as the scalar field in Schwarzschild. Now, the area of the horizon grows both because of the infalling matter (the source of Einstein equations) and because of the perturbation of the scalar field. This classical perturbation is a coherent state of the Unruh state, and so we can compute the relative entropy between the two states in the same way as we did in Schwarzschild. We choose as initial data surface  the union $\mathscr W \cup \mathscr I^-(v_0)$, where, as before, $\mathscr I^-(v_0)$ denotes the portion of past null infinity with $v < v_0$. We compute the relative entropy for the scalar field defined in the causal development of the initial data surface, $D(\mathscr W \cup \mathscr I^-(v_0))$, and we compute its derivative with respect to $v$ as we translate $v_0 \to v_0 + t$. If we then take another initial data surface $\mathscr W \cup \mathscr I^-(v_1)$, we compute the derivative of the relative entropy, and we finally take the difference of the derivatives of the two relative entropies, we find \eqref{RE-variation-Unruh-state}. This is the same procedure we applied in Schwarzschild, and since the Vaidya spacetime shares the same structure at infinity, where the relative entropy is computed, the result is the same.

\subsection{Apparent Horizons and the First Law of Apparent Horizons Mechanics} \label{ssec:AH-first-law}
The main reason to study the Vaidya case, is that it is the simplest example in which the event horizon and the apparent horizon are distinct objects. The event horizon is a causal boundary: the light-rays in its interior cannot reach future infinity, because of its very definition. For this reason, to locate an event horizon one needs to wait the end of the Universe: the event horizon has a \textit{teleological} nature, meaning that one needs to know the entire future history of the spacetime in order to locate the horizon.

For the same reason, the event horizon does not interact locally with the environment. This means that, for example, if a shell of dust falls into a black hole, the event horizon grows \textit{before} the shell crosses it, and while the matter passes into the interior of the black hole, the event horizon settles again to a stationary state.

For these reasons, it is useful to introduce a local notion of horizon, which causally interacts with the environment and that can be located using local observables. Such a horizon is called an \textit{apparent horizon}. It has been shown that apparent horizons obey laws similar to the laws of black hole mechanics, and therefore it is possible to assign to apparent horizon a definite temperature and entropy based on geometric reasoning \cite{Hayward97}. Therefore, it is natural to ask if the quantum approach we described in the last section applies also to apparent horizons. 

There are many different notions of apparent horizons in the literature. Here we adopted the notion introduced by Hayward \cite{Hayward93} of future outer, trapping surface. First, we need to define a \textit{trapped surface}. A trapped surface is a closed, 2-dimensional surface such that the expansion of both the future-directed, null geodesic congruences orthogonal to the surface is negative everywhere on the surface. Consider, for example, a spherical surface inside the event horizon of a Schwarzschild black hole, and consider the light-rays outgoing from the surface. In both the two normal directions, the congruence is converging, meaning the cross-sectional area of the congruence is decreasing for geodesics both pointing inward and outward the surface. Now, consider a 3-hypersurface $\Sigma$; the portion of $\Sigma$ formed by trapped surfaces is called \textit{trapped region}, and the boundary of this region is the apparent horizon of $\Sigma$. Now, if the apparent horizon can be extended to the past and future of $\Sigma$, the union of the apparent horizons on each hypersurface is called a \textit{trapping horizon}. In the following we will call apparent horizon both the 3-hypersurface and its 2-surface cross-section, just as we did for the event horizon.

Note that, by definition, the expansion of the future-directed null geodesics normal to the apparent horizon is vanishing. This condition helps in finding the location of the apparent horizon. If we distinguish the two normal directions to be ingoing and outgoing, the trapping horizon is \textit{outer} if $\eval{\mathcal L_-\theta_+}_{AH} < 0$, where the minus denotes the ingoing direction and the plus the outgoing one, and the Lie derivative is evaluated on the apparent horizon. Finally, the trapping horizon is said to be \textit{future} if $\eval{\theta_- }_{AH} < 0$.

It is possible to formulate the laws of black hole mechanics in terms of apparent horizons, instead of event horizons \cite{Hayward93}, \cite{Hayward97}. In particular, the physicist S. A. Hayward showed \cite{Hayward97} that the laws of black hole mechanics can be reformulated in a fully dynamical context, without assuming asymptotic flatness nor stationarity, referring to geometric quantities associated to apparent horizons. Remarkably, he showed that the first law of thermodynamics and the first law of apparent horizon mechanics can be given in a unified expression, as an equation for the variation of a suitably defined notion of energy, thus reinforcing the analogy between geometric and thermodynamic quantities.

 In the absence of asymptotic flatness one needs to find a quasi-local notion of energy, since the ADM or Bondi energies (the most used definitions for the energy of a gravitational system in an asymptotically flat spacetime) are both defined as surface integrals at infinity. In the case of spherical symmetry one can use the Misner-Sharp energy,
\begin{equation}
E = \frac{1}{2} r ( 1 - \nabla_\mu r \nabla^\mu r ) \ .
\end{equation}
In Vaidya spacetime, for example, the Misner-Sharp energy reduces to $m(v)$ on the horizon, thus being the intuitive notion of black hole mass. Then, one can compute the variation of the Misner-Sharp mass along a vector tangent to the apparent horizon, and relate it to the variation of the geometric properties of the horizon via the Einstein equations. Since in spherical symmetry the area of the apparent horizon is a geometrical invariant, one can use it to define the areal radius and volume,
\begin{equation}
A = 4 \pi r^2 \quad V = \frac{4}{3} \pi r^3  \ .
\end{equation}
Finally, one needs to introduce the function
\begin{equation}
w = - \frac{1}{2} \Tr T \ ,
\end{equation}
where the trace is taken with respect to the 2-dimensional space normal to the spheres of symmetry (in the case of Vaidya, the $v-r$ plane).

Given these definitions, if one denote $\dv{z}$ the derivative along a vector tangent to the apparent horizon, the first law of apparent horizon dynamics is
\begin{equation} \label{eq:unified-first-law}
\dv{E}{z} = \frac{\kappa}{8 \pi} \dv{A}{z} + w \dv{V}{z} \ ,
\end{equation}
where $\kappa$ is the analogue of the surface gravity for an apparent horizon. This is a "differential version" of the first law, in contrast with the "integral form" we gave for the first law of black hole event horizons, \eqref{I-law-EH}. The last term is usually interpreted as a work term done by the matter fields on the black hole. For example, in the context of Einstein-Maxwell equations (that is, in systems in which both the gravitational and the electromagnetic fields are present), this work term coincides with the usual electromagnetic work defined in Special Relativity, and thus it can be interpreted as the work done by the electromagnetic field on the black hole.
  
  \begin{figure}
  \centering
  \includegraphics[scale=0.3]{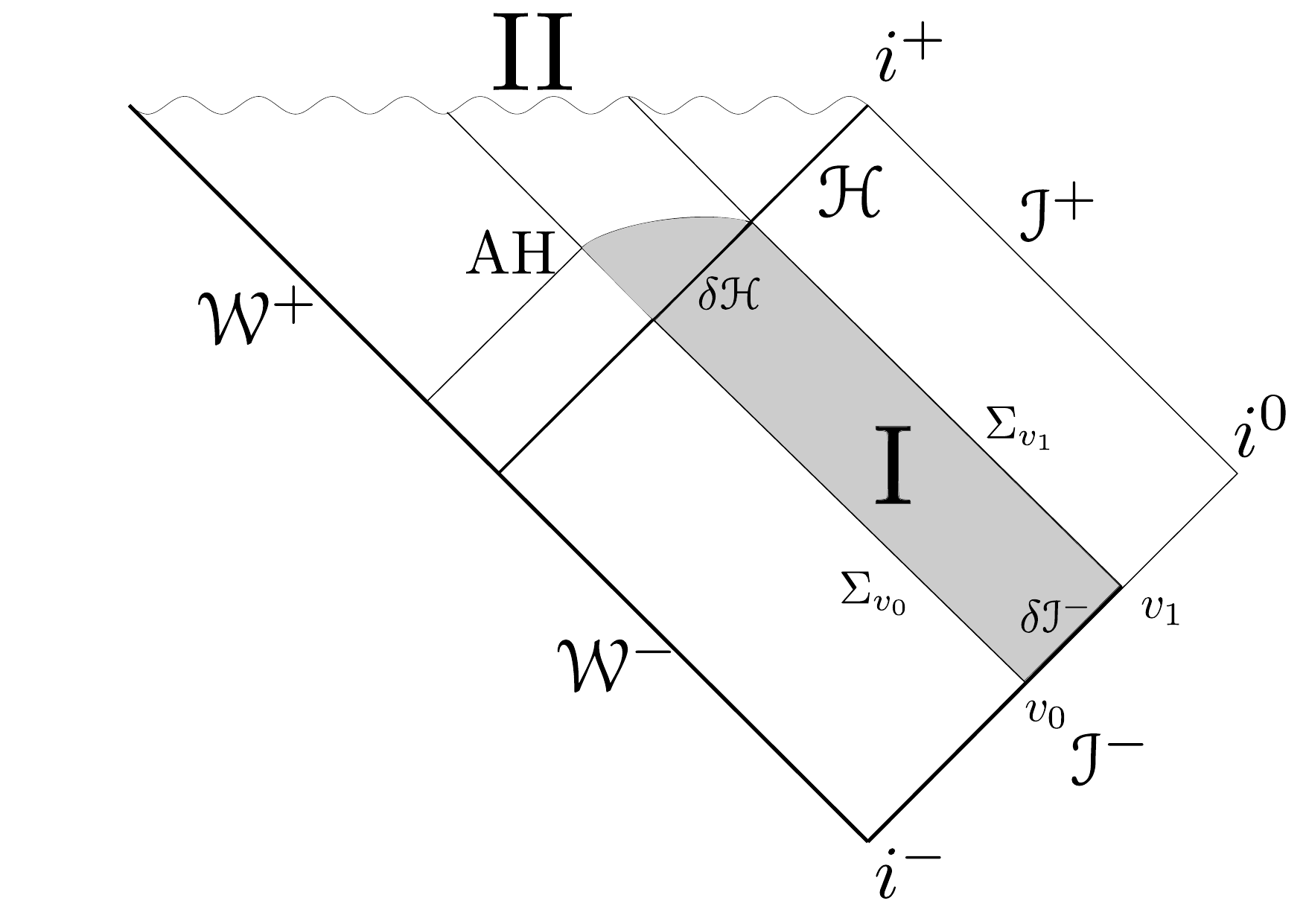}
  
  \caption{Schematic representation of the behaviour of an apparent horizon versus the event horizon, under the infalling of null dust in a time interval $v_0 < v < v_1 $ (a Vaidya model). Since the event horizon is always a null surface, it never changes its slope in a Penrose diagram, while the apparent horizon grows and becomes space-like during the accretion process. At the end of the accretion the apparent horizon becomes stationary again, thus coinciding with the Schwarzschild event horizon. We see that the apparent horizon behaves locally, in contradistinction with the event horizon.}
  \end{figure}
Let's see how the definition of apparent horizon applies to find its location in the Vaidya spacetime. First of all, we need to find the null geodesics congruences. From the metric, we see that a pair of future-oriented, outgoing and ingoing null vectors is given by
\begin{align} \label{eq:null-vectors-Vaidya}
    \tilde{l} &= \partial_v + \frac{1}{2} \bigg( 1 - \frac{2m(v)}{r} \bigg ) \partial_r \\
    N &= \partial_r \ .
\end{align}
Computing the covariant derivative of $N$, we find
\begin{equation}
   ( N^\mu \nabla_\mu N)^\nu  = \Gamma^\nu_{r r} = 0 \  ,
\end{equation}
so it is affinely parameterised. On the other hand, computing the covariant derivative of $\tilde{l}$, we find
\begin{equation}
    \tilde{l}^\mu \nabla_\nu \tilde{l}^\nu = \nabla_v\tilde{l}^nu + \frac{1}{2} f \nabla_r \tilde{l}^\nu \ .
\end{equation}
For $\nu = v$ we have
\begin{equation}
    \Gamma^v_{v k}\tilde{l}^k + \frac{1}{2}f \Gamma^v_{r k} \tilde{l}^k = - \frac{1}{2}\partial_r g_{vv} = \frac{1}{2} \partial_r f \ ,
\end{equation}
while for $\nu = r$
\begin{equation}
    \frac{1}{2}\partial_v f + \Gamma^r_{vv} + \frac{1}{2}f \Gamma^r_{vr} + \frac{1}{2}f (\frac{1}{2}\partial_r f + \Gamma^r_{rv}) = \frac{1}{2}f (\frac{1}{2} \partial_r f) \ .
\end{equation}
From which we see that $\tilde{l}$ is not affinely parameterised, with inaffinity function $\tilde{\kappa} = \frac{1}{2} \partial_r f = \frac{m(v)}{r^2}$. We have seen in section \ref{ssec:hypersurfaces} that we can construct a tangent vector, $l$, with affine parameter $\lambda$, proportional to $\tilde{l}$, with proportionality factor in the form
\begin{equation}
    l = e^{- \gamma(r,v)}\tilde{l} \ .
\end{equation}
Computing its covariant derivative, and imposing that it vanishes, we found that
\begin{equation} \label{eq:gamma-definition}
\dv{\gamma}{v} = \tilde{\kappa}(v) \ .
\end{equation}

We can find the relation between the affine parameter $\lambda$ and $v$ reasoning that, by chain rule,
\begin{equation}
    \tilde{l}(v) = \dv{\lambda}{v}l(\lambda) \ .
\end{equation}
So we have $\dv{\lambda}{v} = e^{ \gamma}$, or
\begin{equation} \label{eq:v-affineparameter}
\dv{\lambda}{v} = e^{\int_{v_0}^v \tilde{\kappa}(v') \dd v'} \ .
\end{equation}

As we are eventually interested in integrating over the null geodesic congruence generated by outgoing null rays, we fix $\lambda = 0$ over a null hypersurface $v = v_0$, so to use the same affine parameter over the congruence.

To locate the apparent horizon, we need to compute the expansion of the outgoing geodesics. The three conditions for the apparent horizon are
\begin{align}
    \theta_{l} = 0 \\
    \theta_N < 0 \\
    \mathscr L_N \theta_l < 0 \ .
\end{align}

The expansion of outgoing geodesics is given by
\begin{equation}
\theta_l = \nabla_\alpha l^\alpha = \nabla_\alpha e^{-\gamma} \tilde l^\alpha = \tilde l^\alpha \partial_\alpha e^{-\gamma} + e^{-\gamma} \nabla_\alpha \tilde l^\alpha = e^{-\gamma} ( \nabla_\alpha \tilde l^\alpha - \tilde l^\alpha \partial_\alpha \gamma ) \ .
\end{equation}

We compute $\nabla_\alpha \tilde l^\alpha = \theta_{\tilde l} $:
\begin{equation}
\theta_{\tilde l} = \nabla_\alpha \tilde l^\alpha = \frac{1}{2}\partial_r f + \Gamma^\mu_{\mu v} + \frac{1}{2} f \Gamma^\mu_{\mu r} = \frac{1}{2}\partial_r f + \frac{1}{r} f =  \frac{1}{r} - \frac{2m}{2r^2} = \frac{2}{r^2}(r-m) \ .
\end{equation}
On the other hand, we have
\begin{equation}
\tilde l^\alpha \partial_\alpha \gamma = \tilde \kappa = \frac{m}{r^2} \ ,
\end{equation}
so that the location of the apparent horizon is 
\begin{equation} \label{eq:AH-Vaidya}
   r_{AH} = 2m(v) \ .
\end{equation}
The defining equation for this hypersurface is $S: r-r_{AH} = 0$, from which we find the gradient of the function which define the hypersurface, $\nabla S = (-2 \dot m(v), 1, 0, 0)$, where the dot denotes a derivative with respect to $v$. From it we compute the contravariant normal vector,
\begin{equation}
    \tilde n_{AH} = g^{\mu \nu} \partial_\nu S \partial_\mu = g^{vr} \partial_v - 2 \dot m(v) g^{vr} \partial_r= \partial_v - 2 \dot m \partial_r \ .
\end{equation}
Since $g^{rr} = f = 0$ on the horizon.

We see that $\norm{n_{AH}}^2 = - 4 \dot m(v)$ on the apparent horizon. This means that the apparent horizon is a space-like hypersurface if the black hole is growing and the matter is classical, it is a time-like hypersurface in the case of an evaporating black hole, and becomes null in the static limit $\dot m \to 0$. In the static limit, we recover the Schwarzschild spacetime, and the apparent and event horizons coincide.

We normalise the vector to get the unit normal vector of the apparent horizon:
\begin{equation} \label{eq:normal-vector-AH}
    n_{AH} =  \frac{1}{\sqrt{4 \dot m}} \partial_v - \sqrt{\dot m} \partial_r \ .
\end{equation}
    
    \subsection{Kodama Miracle} \label{ssec:kodama-miracle}

Having found the analogue of the event horizon for dynamical black holes, we need now to find the analogue of the Killing conservation law. Since the metric is $v$-dependent, there is no time-like Killing vector analogue to the $\partial_t$ vector we used in Schwarzschild. However, it is possible to define a time-like vector field which is not associated to any symmetry of the metric, and yet defines a conserved quantity and a preferred direction for the flow of time in any time-dependent, spherically symmetric spacetime \cite{Kodama79}, \cite{Abreu10}.

The \textit{Kodama vector} is defined as
\begin{equation}
k^\alpha = \epsilon^{\alpha \beta}_2 \nabla_\beta r \ ,
\end{equation}
where $\epsilon^{\alpha \beta}_2 $ is the $(1+1)$-dimensional Levi-Civita symbol in the radial temporal plane, embedded in $(3+1)$ dimensions as
\begin{equation}
\epsilon^{\alpha \beta}_2 =  \begin{pmatrix}
\epsilon^{ij} & 0 \\
0 & 0 \\
\end{pmatrix}
\end{equation}  
where $i,j = 0,1$. In our case of a Vaidya metric, the Kodama vector takes the simple form
\begin{equation}
k = \partial_v \ .
\end{equation}
The Kodama vector is null on the apparent horizon and it is time-like for $r \to \infty$, with $\norm{k}^2 \to -1$. We can say, then, that the Kodama vector in dynamical spacetimes takes the same role as the Killing vector in stationary spacetimes; in particular, it allows us to define a conserved current.

First, we show that the Kodama vector is divergence-free:
\begin{equation}
\nabla_\alpha k^\alpha = \nabla_\alpha ( \epsilon^{\alpha \beta}_2 \nabla_\beta r ) = \epsilon^{\alpha \beta}_2 \nabla_\alpha \nabla_\beta r + \nabla_\alpha \epsilon^{\alpha \beta}_2 \nabla_\beta r = 0 \ .
\end{equation}
The first term vanishes because it is the product of a symmetric tensor and an antisymmetric tensor. The second one vanishes because, since the Levi-Civita tensor is constant, it is the sum of contractions between the Christoffel symbols and the Levi-Civita tensor, which are vanishing again for symmetry. Now, a conserved quantity is given by
\begin{equation}
J_\mu = T_{\mu \nu} k^\nu \ .
\end{equation}
The divergence of $J_\mu$ vanishes because the divergences of the stress-energy tensor and the Kodama vector vanishes separately. We see that the Kodama conservation law is not a consequence of some symmetry, but it comes from its algebraic  properties: this is why it is called \textit{Kodama miracle} \cite{Abreu10}, which holds only in spherically symmetric spacetimes. 

Thanks to this conservation law, we can use the same framework we considered in Schwarzschild: we construct a finite region in the outside of the Vaidya spacetime with the boundary given by four hypersurfaces: two null hypersurfaces at $v = v_0$ and $v = v_1$, the corresponding portion of conformal past infinity $\delta \mathscr I^-$ with $v \in [v_0, v_1]$, and the portion of apparent horizon $\delta \mathscr H_{AH}$ with $v \in [v_0, v_1]$. The difference from the Schwarzschild case is that this is a "deformed" double cone, because the apparent horizon is a space-like hypersurface. We then transform the Kodama conservation law into a flux integral equation, by means of the Gauss-Stokes theorem, finding
\begin{equation}
\begin{split}
    &\int_{\delta \mathscr H_{AH}} T_{\mu \nu} k^\mu n^\nu_{AH} \ \dd v \dd \mathbb S^2 
    + \int_{\Sigma_{v_1}} T_{\mu \nu} k^\mu n^\nu_{v_1} \ \dd r \dd \mathbb S^2 = \\
    &=  \int_{\Sigma_{v_0}}  T_{\mu \nu} k^\mu n^\nu_{v_0} \ \dd r \dd \mathbb S^2  
    + \int_{\delta \mathscr I^-} T_{\mu \nu} k^\mu n^\nu_{\mathscr I^-} \ \dd v \dd \mathbb S^2 \ .
\end{split}
\end{equation}

We can write it more explicitly. In fact, the normal vector to conformal past infinity is $l$, while we already computed the unit normal vector to the apparent horizon, \eqref{eq:normal-vector-AH}. The gradient of the two hypersurfaces at fixed $v$ is $\nabla S = \partial_v$, and the normal vector is $n_{v} = g^{\mu \nu} \partial_\mu S \partial_\nu = \partial_r = N  $. The Kodama current is $J_\mu = T_{\mu v} k^\nu$, and substituting in the conservation law we find
\begin{equation} \label{eq:flux-Vaidya}
\begin{split}
    &\int_{\delta \mathscr H_{AH}} \bigg( \frac{1}{\sqrt{4 \dot m}} T_{vv} - \sqrt{\dot m} T_{vr} \bigg) \ \dd \lambda \dd \mathbb S^2 + \int_{\Sigma_{v_1}} T_{vr} \ \dd r \dd \mathbb S^2 = \\
    &=  \int_{\Sigma_{v_0}} T_{vr} \ \dd r \dd \mathbb S^2  
    + \int_{\delta \mathscr I^-} \bigg( T_{vv}+ \frac{1}{2} f(v,r) \ T_{vr} \bigg) \ \dd v \dd \mathbb S^2 \ .
\end{split}
\end{equation}

We note that the possible confusion between the first integral in the above equation and the integration involved in the first law of apparent horizons \eqref{eq:unified-first-law}, on which we briefly commented at the end of subsection \ref{ssec:killing-conservation-law}, does not arise here. In fact, since the apparent horizon is a space-like surface, the distinction between the normal and tangent directions is clear: the first law of apparent horizons dynamics regards the variations in energy along the horizon, while here we are integrating the component of the conserved current normal to the horizon.

Now, as we did in the Schwarzschild case, we want to evaluate the term on the apparent horizon to connect it to the variation of the horizon area, via the Raychaudhuri equation, and on the conformal past infinity, to connect it to the variation of relative entropy between the Unruh state and its classical perturbation.

    \subsection{Apparent Horizon term} \label{ssec:AH-flux-term}
Once again, we want to transform the surface integral of the stress-energy tensor on the apparent horizon to an integral of a geodesic expansion, in order to find the variation of the horizon area. However, since the apparent horizon is not a null hypersurface, we cannot use the expansion of its null generators. Since it is space-like, there is a congruence of outgoing null geodesics that cross the horizon once, and therefore the idea is to integrate the expansion of the null congruence over the apparent horizon. The expansion of the null congruence takes into account the variation of the degenerate, 2-dimensional metric transverse to the null geodesics, while the surface element on the apparent horizon is a 3-dimensional metric. We can anticipate, then, that the simplification which occurred in the case of Schwarzschild between the logarithmic derivative of the transverse metric and the surface element, which we saw in subsection \ref{ssec:area-term-Schw}, computing equation \eqref{eq:flux-term-area-Schw}, will not be possible, and therefore the area variation will be more complicated. This is to be expected, because while the scalar field perturbs the metric and therefore causes some variation of the apparent horizon, the source matter of the Vaidya metric falls into the black hole, and contributes to the variation of the area as well. Since the induced metric itself depends on the mass function, as we will see in a moment, there is no way \textit{a priori} to split the variation in the horizon area due to the scalar field from that of the matter source, and therefore we will not compute the integral of the second order variation of the expansion, $\delta^2 \theta_l$, but of the full expansion $\theta_l$.
 
As we saw, by definition $\theta_l = 0$ on the horizon, while $\eval{l}_{AH} = e^{-\gamma} \partial_v $, where $\gamma$ is defined in equation \eqref{eq:gamma-definition}. The Raychaudhuri equation then gives the variation of the expansion of the outgoing null geodesics,
\begin{equation}
    \dv{ \theta_l}{\lambda} = - 8 \pi T_{\mu \nu} l^\mu l^\nu = - 8 \pi e^{-2 \gamma} T_{vv} \ .
\end{equation}
The flux term becomes, then,
\begin{equation} \label{eq:flux-element-AH}
   \int_{AH} T_{\mu \nu}n^\mu_{AH}k^\nu \dd \lambda \dd \mathbb S_2 =
    - \frac{1}{8 \pi} \int_{v_0}^{v_1} \frac{1}{\sqrt{4 \dot m}} e^{2 \gamma}\dv{ \theta_l}{\lambda} \ \dd \lambda \dd \mathbb S_2 
    -  \int_{AH}\sqrt{\dot m} T_{v r} \dd \lambda \dd \mathbb S_2 \ .
\end{equation}
We note that it reduces to the static form we computed in the Schwarzschild case in subsection \ref{ssec:area-term-Schw} in the limit $\dot m \to 0$.

Now we have to write explicitly the surface element $\dd \lambda \dd \mathbb S_2$. The parametric equation for the apparent horizon with parameter $\lambda$ is
\begin{align*}
v = v(\lambda) \\
r = 2m(v(\lambda)) \\
\theta = \theta \\
\varphi = \varphi
\end{align*}
From these we can construct the vectors $e^\alpha_a = \pdv{x^\alpha}{y^a}$, where the lower index refers to the coordinates on the surface. The relevant one is 
$$e^\alpha_v = \big( e^{-\gamma}, \ e^{-\gamma} \ 2\dot m(v), \ 0, \ 0 \big ) \ .$$
The induced metric is given by the inner product
\begin{equation}
h_{a b} = e^\mu_a e^\nu_b g_{\mu \nu} \ .
\end{equation}
The only non-trivial product is $e^\alpha_\lambda e^\beta_\lambda g_{\alpha \beta} =e^{-2 \gamma} 4 \dot m(v)$. The induced metric then is
\begin{equation}
h = \begin{pmatrix} e^{-2 \gamma} 4\dot m(v)& 0 & 0 \\ 0 & r_{AH}^2 & 0 \\ 0 & 0 & r^2_{AH} \sin^2 \theta \end{pmatrix} \ .
\end{equation}

The surface element is
\begin{equation} \label{eq:surface-element-AH}
\sqrt h_{AH} = e^{-\gamma} \sqrt{4 \dot m(v)} r^2_{AH}\sin \theta \ ,
\end{equation}
where $r_{AH}$ is a function of $v$.

On the other hand, the expansion of the outgoing geodesics $\theta$ is the logarithmic derivative of the surface element of the area transverse to the null geodesics, and so it is 2-dimensional. The transverse surface has parametric equations $v = v_0$, $r=r$, $\theta = \theta$, $\varphi = \varphi$, so that the induced metric is the two dimensional angular metric:
\begin{equation}
    \sigma = \begin{pmatrix} r_{AH}^2 & 0 \\ 0 & r_{AH}^2 \sin^2 \theta \end{pmatrix} \ ,
\end{equation}
and the surface element is
\begin{equation} \label{eq:surface-element-null-vaidya}
    \sqrt{\sigma} = r_{AH}^2 \sin \theta \ .
\end{equation}
Using the geometric meaning of the expansion \eqref{eq:expansion-area-light-like},
$$
\theta_l = \frac{1}{\sqrt \sigma}\dv{\sqrt \sigma}{\lambda} \ ,
$$
we can rewrite the first integral in the right hand side of the flux element, \eqref{eq:flux-element-AH}. We substitute the expression for the surface element,
\begin{equation}
  - \frac{1}{8 \pi} \int_{v_0}^{v_1} \frac{1}{\sqrt{4 \dot m}} e^{2 \gamma}\dv{ \theta_l}{\lambda} \ \dd \lambda \dd \mathbb S_2 =
   - \frac{1}{8 \pi} \int_{\lambda_0}^{\lambda_1} \int_{S^2} e^\gamma \dv{\theta_l}{\lambda} \ r^2_{AH} \sin \theta \dd \lambda \dd \theta \dd \varphi \ ,
\end{equation}
and of the expansion, to get
\begin{multline}
 - \frac{1}{8 \pi} \int_{\lambda_0}^{\lambda_1} \int_{S^2} e^\gamma \dv{\theta_l}{\lambda} \ r^2_{AH} \sin \theta \dd \lambda \dd \theta \dd \varphi = \\
 = - \frac{1}{8\pi} \int_{\lambda_0}^{\lambda_1} \int_{S^2} e^\gamma \dv{\lambda} \bigg( \frac{1}{\sqrt \sigma} \dv{\sqrt{\sigma}}{\lambda} \bigg) r^2_{AH} \sin \theta \dd \lambda \dd \theta \dd \varphi = 
- \frac{1}{8\pi} \int_{\lambda_0}^{\lambda_1} \int_{S^2} e^\gamma \dv[2]{ \sqrt{\sigma}}{\lambda} \dd \lambda \dd \theta \dd \varphi \ .
\end{multline}
The equation follows since
\begin{equation}
\dv{\lambda} \frac{1}{\sqrt \sigma} \dv{\sqrt{\sigma}}{\lambda} = \frac{1}{\sqrt \sigma} \dv[2]{ \sqrt{\sigma}}{\lambda} - \bigg ( \frac{1}{\sqrt \sigma} \dv{\sqrt{\sigma}}{\lambda} \bigg)^2 = \frac{1}{\sqrt \sigma} \dv[2]{ \sqrt{\sigma}}{\lambda} - \theta_l^2 \ ,
\end{equation}
while $\theta_l = 0$ on the horizon, and using the explicit expression for the determinant of the induced metric, \eqref{eq:surface-element-null-vaidya}. The first integral of the flux on the horizon is therefore in the same form as the integral we computed in Schwarzschild, in \eqref{eq:area-integral-S}. The difference now is that integrating by parts gives a dynamical correction,
\begin{equation}
\begin{split}
&- \frac{1}{8\pi} \int_{\lambda_0}^{\lambda_1} \int_{S^2} e^\gamma \dv[2]{ \sqrt{\sigma}}{\lambda} \dd \lambda \dd \theta \dd \varphi = \\
&= - \frac{1}{8\pi}  \int_{S^2} \bigg[e^\gamma \dv{ \sqrt{\sigma}}{\lambda}\bigg ]^{\lambda_1}_{\lambda_0} \dd \theta \dd \varphi 
+ \frac{1}{8 \pi} \int_{\lambda_0}^{\lambda_1} \int_{S^2} \dot \gamma \dv{ \sqrt{\sigma}}{\lambda} \dd \lambda \dd \theta \dd \varphi = \\
&= - \frac{1}{8\pi} \dv{v}  \int_{S^2} \eval{\sqrt{\sigma}}^{v_1}_{v_0} \dd \theta \dd \varphi 
+ \frac{1}{8\pi}  \int_{\lambda_0}^{\lambda_1} \int_{S^2} \dot \gamma  \dv{\sqrt \sigma}{\lambda} \ \dd \lambda \dd \theta \dd \varphi \ ,
\end{split}
\end{equation}
where we have $\dot \gamma = \dv{\gamma}{v} = e^\gamma \dv{\gamma}{\lambda}$, and we used $e^\gamma \dv{\lambda} = \dv{v}$, passing from $\lambda_{0,1}$ to $v_{0,1}$.
In the static limit the second term is vanishing and we recover the Schwarzschild result.

The first integral, in the last line of the above expression, can be integrated immediately: the integral of the surface element over the 2-sphere gives the difference in area of the horizon between $\lambda_1$ and $\lambda_0$. In the second integral if we multiply by $e^{-\gamma} e^\gamma$ we can use the two facts that $e^\gamma \dv{\lambda} = \dv{v}$ and $e^{-\gamma} \dd \lambda = \dd v$, to switch to an integral in terms of $v$. We get
\begin{multline} \label{eq:area-term-vaidya-area-mass-terms}
- \frac{1}{8\pi} \dv{v}  \int_{S^2} \eval{\sqrt{\sigma}}^{v_1}_{v_0} \dd \theta \dd \varphi 
+ \frac{1}{8\pi}  \int_{\lambda_0}^{\lambda_1} \int_{S^2} \dot \gamma  \dv{\sqrt \sigma}{\lambda} \ \dd \lambda \dd \theta \dd \varphi = \\
- \frac{1}{8 \pi} \bigg( \dv{A(v_1)}{v} - \dv{A(v_0)}{v} \bigg ) + \frac{1}{8 \pi} \int_{v_0}^{v_1} \dot \gamma \dv{\sqrt \sigma}{v} \ \dd v \dd \theta \dd \varphi \ .
\end{multline}
Now, we focus on the remaining integral. We have by definition $\dot \gamma = m/r^2 = (4m)^{-1}$, while the derivative of the determinant of the induced metric is $\dv{\sqrt \sigma}{v} = 8 m \ \dot m \sin \theta$. Therefore, the second integral gives
\begin{equation}
\frac{1}{8 \pi} \int_{v_0}^{v_1} \dot \gamma \dv{\sqrt \sigma}{v} \ \dd v \dd \theta \dd \varphi = 
\frac{1}{4 \pi} \int_{v_0}^{v_1} \int_{S_2} \dot m(v) \ \sin \theta \dd v \dd \theta \dd \varphi \ ,
\end{equation}
which is nothing but the integral of the source stress-energy tensor, \eqref{eq:SET-source-Vaidya}. The integral is now immediate: the integral along $v$ gives the difference in mass of the black hole between $v_1$ and $v_0$, and the angular integral gives a factor of $4 \pi$; the result is
\begin{equation} \label{eq:area-term-vaidya-mass-term}
\frac{1}{8 \pi} \int_{v_0}^{v_1} \dot \gamma \dv{\sqrt \sigma}{v} \ \dd v \dd \theta \dd \varphi = 
\frac{1}{4 \pi} \int_{v_0}^{v_1} \int_{S_2} \dot m(v) \ \sin \theta \dd v \dd \theta \dd \varphi = m(v_1) - m(v_0) \ .
\end{equation} 

Putting all the pieces together, we substitute, in the flux term on the apparent horizon, \eqref{eq:flux-element-AH} the two terms we computed, \eqref{eq:area-term-vaidya-area-mass-terms} and \eqref{eq:area-term-vaidya-mass-term}, keeping the last integral in \eqref{eq:flux-element-AH} unmodified for now, to find that the flux integral on the apparent horizon is
\begin{multline}
 \int_{AH} T_{\mu \nu}n^\mu_{AH}k^\nu \dd v \dd \mathbb S_2 = \\ - \frac{1}{8 \pi} \bigg( \dv{A(v_1)}{v} - \dv{A(v_0)}{v} \bigg ) 
 + m(v_1) - m(v_0) 
 - \int_{AH}\sqrt{\dot m} T_{v r} \dd \lambda \dd \mathbb S_2  \ .
\end{multline}
We can now comment on the last integral in the above equation. If we substitute the surface element of the apparent horizon, \eqref{eq:surface-element-AH}, the integral becomes
\begin{equation}
- \int_{AH}\sqrt{\dot m} T_{v r} \dd \lambda \dd \mathbb S_2 = - \int_{AH} T_{vr} 2 \dot m(v) r^2_{AH} \sin \theta \dd v \dd \theta \dd \varphi \ .
\end{equation}  

The function $w$ that we introduced discussing the first law of apparent horizons, equation \eqref{eq:unified-first-law} in subsection \ref{ssec:AH-first-law}, in the Vaidya case becomes
\begin{equation}
w = -\frac{1}{2} \Tr_{v-r} T = - T_{vr} \ .
\end{equation}
On the other hand, if we consider the volume of the black hole, $ V = \frac{4}{3} \pi r_{AH}^3$, we see that the derivative of the areal volume is
\begin{equation}
\dv{V}{v} = 4 \pi r_{AH}^2 \dot r_{AH} = 8 \pi \dot m(v) r_{AH}^2 \ .
\end{equation}
Therefore, the remaining flux term is the integral of the work term which appears in the first law of apparent horizons, \eqref{eq:unified-first-law}, and we can compute the angular integration to get
\begin{equation} \label{eq:work-term-vaidya}
- \int_{AH} T_{vr} 2 \dot m(v) r^2_{AH} \sin \theta \dd v \dd \theta \dd \varphi = \int_{AH} w \dv{V}{v} \dd v \ .
\end{equation}

Finally, the flux term on the apparent horizon is
\begin{multline}\label{eq:flux-area-Vaidya-result}
 \int_{AH} T_{\mu \nu}n^\mu_{AH}k^\nu \dd v \dd \mathbb S_2 = \\ - \frac{1}{8 \pi} \bigg( \dv{A(v_1)}{v} - \dv{A(v_0)}{v} \bigg ) 
 + m(v_1) - m(v_0) + \int_{AH} w \dv{V}{v} \dd v \ .
\end{multline}

The area term differs from that of Schwarzschild, \eqref{eq:flux-term-area-Schw}, for two contributions. The first comes from the fact that the Vaidya black hole is not in vacuum, but there is matter falling into the black hole in the interval $[v_0, v_1]$, which contributes to the flux term with the mass contribution $m(v_1) - m(v_0)$: the black hole mass changes during the process because of the infalling matter. The second term comes from the nature of apparent horizons, in contradistinction to that of event horizons. In a sense, apparent horizons show some compressibility, since it is possible to do some work on them, while the event horizon acts as an incompressible barrier. This is the last term in the above expression.

    \subsection{Variation of Generalised Entropy}

Having computed the contribution on the apparent horizon, we need to evaluate the flux integral at conformal past infinity, to relate the variation in horizon area to the variation of relative entropy between the Unruh state and its coherent perturbation. This is done just as in the Schwarzschild case. As we said, the relative entropy is computed exactly in the same way, and so we recover the same result
\begin{equation}
\frac{1}{2 \pi} \dv{t}(S_{v_1} - S_{v_0}) = \int_{v_0}^{v_1} \int_{S^2} (\partial_v \tilde \psi)^2 \dd v \dd \Omega_2 \ ,
\end{equation}
with the same symbols as in section \ref{sec:RE-Unruh} on the relative entropy between the Unruh state and a coherent perturbation.

Now, we compute the flux element at past infinity, and see the connection with the relative entropy. As $f \to 1$ at infinity, the flux element is
\begin{equation}
     \int_{\delta \mathscr I^-} ( T_{v v} + \frac{1}{2} \ T_{v r} ) r^2 \dd v \dd \Omega_2 \ .
\end{equation}
The stress-energy tensor is the sum of two contributions, $T_{\mu \nu} = T_{\mu \nu}^\psi + T_{\mu \nu}^{source}$. The flux integral of the source stress-energy tensor is simply $m(v_1) - m(v_0)$. This accounts for the flux at infinity of the source term, which is given by the difference in mass of the black hole: since the source is actually infalling radiation moving along $v$, the energy density which one adds at infinity inevitably falls into the black hole.

Regarding the field stress-energy tensor, we see that, at infinity,
\begin{multline}
    T_{v r} = \partial_v \psi \partial_r \psi 
    - \frac{1}{2} ( g^{r r} (\partial_r \psi)^2 + g^{v v} \partial_v \psi \partial_v \psi + 2 g^{v r} \partial_r \psi \partial_v \psi + \sigma^{a b} \partial_a \psi \partial_b \psi) = \\
    = - \frac{1}{2} (\partial_r \psi)^2 - \frac{1}{2} \sigma^{a b} \partial_a \psi \partial_b \psi = 0 \ .
\end{multline}
Since $g^{v v} = 0$, $\partial_r \psi \sim 1/r^2$, and $\sigma^{a b} \sim 1/r^2$. On the other hand,
\begin{equation}
    T_{v v} = (\partial_v \psi)^2 - \frac{1}{2}g_{v v} (g^{r r} (\partial_r \psi)^2 + 2 g^{v r} \partial_r \psi \partial_v \psi) = 
    (\partial_v \psi)^2 \ ,
\end{equation}
as $g_{v v} \to -1$, and using the same asymptotic behaviour as before. Once again, we can rewrite the integrand in terms of $\tilde \psi = r \psi$ as $r^2 (\partial_v \psi)^2 \simeq (\partial_v \tilde \psi)^2$, with the remaining terms vanishing at infinity. Using the expression for the derivative of the relative entropy we arrive at the result
\begin{equation} \label{eq:flux-RE-Vaidya}
    \int_{\mathscr I^-} ( T_{v v} + \frac{1}{2} \ T_{v r} ) r^2 \dd v \dd \Omega_2 = \frac{1}{2 \pi} \dv{v} (S_{v_1} - S_{v_0}) + m(v_1) - m(v_0) \ . 
\end{equation}

We can now rewrite the conservation law \eqref{eq:flux-Vaidya} substituting all the terms we computed, the flux term on the apparent horizon \eqref{eq:flux-area-Vaidya-result} and the flux term at past infinity \eqref{eq:flux-RE-Vaidya}, arriving at
\begin{multline} \label{eq:result-RE-Vaidya}
\frac{1}{2 \pi} \dv{v} \bigg[ S_{v_1}(\omega | \omega_\psi) + \frac{1}{4} A(v_1) - \bigg( S_{v_0}(\omega | \omega_\psi) + \frac{1}{4} A(v_0) \bigg) \bigg] = \\
 \int_{AH} w \dv{V}{v} \dd v + \int_{\Sigma_{v_1}} T_{vr} \ \dd r \dd \mathbb S^2 - \int_{\Sigma_{v_0}} T_{vr} \ \dd r \dd \mathbb S^2 \ .
\end{multline}
The derivative with respect to $v$ as the derivative of the quantities inside squared brackets, evaluated at $v_1$ (the first two terms) and at $v_0$ (the last two terms, in round brackets).

We see that the contribution $m(v_1) - m(v_0)$ on the apparent horizon is compensated by the flux at infinity of the source stress-energy tensor, and we find again that a variation of the relative entropy term is accompanied by a variation of one-fourth the area of the apparent horizon. Again, then, we can associate an entropy to apparent horizons just as for event horizons, equals to one-fourth its area. The main difference from the Schwarzschild case is the work term, \eqref{eq:work-term-vaidya}, which vanishes in the static limit. We can then introduce the generalised entropy, with the same form as in Schwarzschild, \eqref{eq:generalised-entropy-Schw}, where, now, the area term is related to the apparent horizon. If we consider the limit $v_0 \to - \infty$, the same considerations we did in the Schwarzschild case, in subsection \ref{ssec:generalised-entropy-Schw}, follow: the relative entropy associated to the region $D(\mathscr W \cup \mathscr I^-(v_0)$ goes to zero, since the surface integral in $v$ shrinks to a 2-sphere, $S_{v_0}(\omega|\omega_\psi) \to 0$. At the same time $\dv{A(v_0)}{v} \to 0$ for $v_0 \to -\infty$, since the black hole is stationary at past infinity. Finally, the flux integral on $\Sigma_{v_0}$ vanishes because we chose vanishing initial data on $\mathscr W$. Therefore, we can write \eqref{eq:result-RE-Vaidya}, in the limit $v_0 \to -\infty$, as
\begin{equation}
\frac{1}{2 \pi} \dv{v} S_{gen}(v_1) = \int_{AH} w \dv{V}{v} \dd v + \int_{\Sigma_{v_1}} T_{vr} \ \dd r \dd \mathbb S_2 \ .
\end{equation}
Again, this result coincide with the Schwarzschild case, \eqref{eq:result-RE-Schw-limit}, apart from the work term, the first integral in the right-hand side of the above equation.

\section{Relative Entropy for Cosmological Horizons}

    \subsection{Asymptotically Flat FLRW spacetimes}
    
As a final application we want to show that the same framework can work in the context of cosmological horizons. After the discovery of Hawking radiation, it was shown that it is possible to associate an entropy and a temperature to event horizons which arise in the context of cosmological spacetimes \cite{Gibbons77}. As in the case of black holes, cosmological horizons are causal boundaries which prevent some observer to reach future infinity, and they are realised in certain spacetimes relevant for modern Cosmology, as the de Sitter space. The difference from black hole horizons is that in the case of cosmological horizons we usually think of the observer as trapped \textit{inside} the horizon, while in the case of the black hole we take the point of view of an observer outside the black hole, and this is why the entropy associated to cosmological horizons usually comes with the "wrong" sign.

Cosmological event horizons share the same non-local, teleological nature of their black hole counterparts, and therefore, just as for dynamical black holes, a notion of cosmological apparent horizon has been developed \cite{Faraoni11}. 

Although there is a large class of cosmological spacetimes exhibiting event and apparent horizons, we restrict our attention to a particularly simple class of spacetimes in order to show how the ideas developed for the entropy of black hole horizon can be applied in this different context.

If we assume the Cosmological Principle, at least at large scale, our Universe is described by a homogeneous and isotropic Universe, and the metric is one in the class of Friedmann-Lemaitre-Robertson-Walker metric:
\begin{equation}
\dd s^2 = - \dd t^2 + a(t)^2 \bigg( \frac{1}{1 - kr} \dd r^2 + r^2 \dd \Omega^2 \bigg) \ .
\end{equation}
The metric described a spacetime foliated by 3-surfaces with constant $t$ and constant spatial curvature $k$, which can be positive, negative or zero. The \textit{cosmic time} $t$ is the proper time for particles at rest described by integral lines of the time-like vector $\partial_t$. However, it is often useful to introduce the \textit{conformal time} $\tau$ so that $\dd t = a(t) \dd \tau$. If we use the conformal time, the metric takes the form
\begin{equation}
\dd s^2 = a(t)^2 \bigg( \dd \tau^2 + \frac{1}{1-kr}\dd r^2 + r^2 \dd \Omega_2^2 \bigg) \ .
\end{equation}
By construction the cosmological time has values in an open interval $(\alpha, \beta)$ which may be infinite or not. We are interested in the case in which we can define a conformal null past infinity $\mathscr I^-$ for our spacetime, so we consider the case $\tau \in ( - \infty, 0) $.

We restrict our attention to FLRW spacetime with flat slicing, $k = 0$, in which the cosmological horizon is at infinity. Further, we restrict the possible forms of the scale factor to the case
\begin{equation}
a(\tau) = \frac{\gamma}{\tau} + \mathcal O (\frac{1}{\tau^2}) \qquad \dv{a(\tau)}{\tau} = - \frac{\gamma}{\tau^2} + \mathcal O (\frac{1}{\tau^3}) \ .
\end{equation}

As in the Schwarzschild case, to introduce a conformal boundary we need to map the physical spacetime into an unphysical one, $(\tilde{\mathcal M}, \tilde g)$ which includes the infinity in a finite interval. However, if $k = 0$ we see that, apart the scale factor $a$ in front, the metric coincides with that of Minkowski, so the compactification follows the same procedure: we introduce a new set of variables,
\begin{align*}
    V = \arctan(\tau + r) = \arctan v \\
    U = \arctan(\tau - r) = \arctan u
\end{align*}
so that in these coordinates the metric becomes
\begin{equation}
\dd s^2 = \frac{a^2(\tau)}{4 \cos^2 U \cos^2 V} \bigg( - 4 \dd U \dd V +  \sin^2{(U - V)} \dd \Omega^2_2 \bigg) \ .
\end{equation}

We see that the spatially flat FLRW Universe is mapped into the same portion of Einstein Universe as Minkowski spacetime, and therefore the conformal diagram is again a triangle. However, in this case, the conformal past infinity, corresponds to the cosmological event horizon.

\subsection{Relative Entropy}
On the past cosmological horizon we can define a symplectic form
\begin{equation}
w(\psi_1, \psi_2) = \int (\tilde \psi_2 \partial_v \tilde \psi_1 - \tilde \psi_1 \partial_v \tilde \psi_2) \dd v \dd \Omega_2 \ .
\end{equation}
Then, if we consider the vacuum with respect to translations in the $v$ direction, and a coherent perturbation generated by a classical solution $\psi$ of the Klein-Gordon equation, with initial data on a portion of the cosmological horizon $\mathscr I^-(v_0)$ given by the condition $v < v_0$, the relative entropy takes the form
\begin{equation}
S = 2 \pi \int_{\mathscr I^-(v_0)} (v_0 - v) (\partial_v \tilde \psi)^2 \ \dd v \dd \Omega_2 \ .
\end{equation}
Here we note that this vacuum is not directly related to translations of the cosmic time $t$, but rather of the conformal time $\tau$. Such a state is called \textit{Bunch-Davies vacuum} in the case of de Sitter spacetime, and plays the role of the Unruh state for black holes in cosmological contexts. As in the previous cases, if we translate the boundary of the region in which the theory is defined by an amount $v_0 \to v_0 + t$, the relative entropy varies accordingly
\begin{equation}
\dv{S}{t} = 2 \pi \int_{\mathscr I^-(v_0)} (\partial_v \tilde \psi)^2 \dd v \dd \Omega_2 \ .
\end{equation}

Now we can proceed in re-express the derivative of the relative entropy as a variation in the horizon area. Since in this background the horizon and the conformal past infinity on which the relative entropy is defined coincide, there is no need to introduce a conservation law, but we can directly rewrite the relative entropy term by means of the Raychaudhuri equation for the null congruence of the horizon generators. The normal vector to the horizon is $n = \partial_v$, and the Raychaudhuri equation then states
\begin{equation}
\dv{\theta}{v} = - 8 \pi T_{\mu \nu} n^\mu n^\nu = - 8 \pi (\partial_v \tilde \psi)^2 \ .
\end{equation}
We can substitute the right-hand side in the expression for the relative entropy, and integrating the expansion we find immediately that
\begin{equation}
\dv{S}{t} = \frac{1}{4} \int_{\mathscr I^-} \dv{\theta}{v} \dd v \dd \Omega_2 = - \frac{1}{4} \bigg( \dv{A(v_1)}{v} - \dv{A(v_0)}{v} \bigg)
\end{equation}
So that the variation of generalised entropy in this simple case is a conservation law:
\begin{equation} \label{eq:result-cosmological-horizon-RE}
    \dv{v} \bigg( S + \frac{1}{4} A \bigg) = 0 \ .
\end{equation}
    
\addcontentsline{toc}{chapter}{Conclusions}
 
\chapter*{Conclusions}

After more than 40 years from its discovery, black hole thermodynamics is still an active field of research. Because of its many connections with different branches of modern physics, from astrophysics and thermodynamics, to the nature of space and time, black hole thermodynamics plays the role of a central mystery, at the crossing point of many of the questions physicists are trying to answer.

The main question of this thesis was to take into consideration the effects of the propagation of a scalar field on a curved background on the entropy balance of a dynamical black hole. The problem has been tackled directly, computing the expression for the entropy of the scalar field and connecting it to the variation in area of the horizon, following closely the original ideas of Bekenstein and Hawking and the modern approach of Hollands and Ishibashi. In particular, we showed that the approach of computing the effects of a scalar field propagating in a stationary black hole background can be applied effectively to dynamical black holes, where a notion of entropy associated to the area of the apparent horizon naturally arises. Therefore, just as the geometric approach to black hole mechanics, the quantum approach can be generalised to apparent horizons, not only for what regards the temperature of the black hole but also its entropy.

Although Vaidya black holes are idealised models, this is the first step toward a characterisation of the thermodynamics of realistic black holes. The kind of black holes studied and observed in astrophysics are fully dynamical, axisymmetric, and in a Universe which is not necessary asymptotically flat. Therefore, one should find a description of black hole entropy based on local observables, as the apparent horizon, and without the assumptions of spherical symmetry and asymptotic flatness. It is important to understand how it can affect observations of real black holes, although it seems unlikely that the temperature of astrophysical black holes will be observed any time soon, because black hole thermodynamics could be our best hint on a quantum theory of gravity, .

In this work we followed Hawking's direct approach to black hole thermodynamics employing the modern framework of Quantum Field Theory on Curved Spacetime. The KMS characterisation of thermal states, the Tomita-Takesaki theory on algebras' modular automorphisms, and the algebraic approach to the quantization of fields in curved spacetime allowed us to derive our results in a rigorous framework, and this in turn helped to clarify the physical picture. In particular, we chose a simple background metric, and we were able to select a suitable quantum state which closely mimic the usual Minkowski vacuum. We motivate the Unruh state both on physical and mathematical grounds, and we use it in the discussion of dynamical black holes as well as stationary black holes. From the Unruh state we constructed a coherent perturbation, defined through a classical solution of the free, massless scalar field equation, and we computed the relative entropy between them by means of the Araki formula \eqref{eq:araki-formula}. We then showed that a variation of relative entropy is always accompanied by a variation of one-fourth the area of the horizon, in the static case the event horizon, in the dynamical case the apparent horizon, and that the same applies for cosmological event horizons. This method showed that the notions of black hole entropy and generalised entropy naturally arise, when one computes the effects of the propagation of a quantum field on a background with a horizon, without the necessity to appeal to the temperature of the horizon or to the analogy between the laws of thermodynamics and the geometric laws of black hole mechanics.

From this starting point, one can proceed in many different directions. To start, one could move to more general dynamical black holes, applying the same setup to the formalism for apparent horizons in spherical symmetry developed by Nielsen and Visser, \cite{Nielsen05}. The next step would be to consider axisymmetric black holes, renouncing to spherical symmetry. This way one could account for the thermodynamic properties of fully dynamical, realistic models of black holes, described by their local properties.

On the other hand, one could proceed to generalise the matter content of the theory, instead of its geometry. In this work we considered only the "most classical" quantum states, coherent states, because the modular operator takes a particularly simple form. Although in general it is difficult to find an explicit expression for the relative modular operator, there are known examples in flat spacetime in which it has been explicitly computed: for example, for free fermions \cite{Longo17}, conformal theories in 2 dimensions \cite{Hollands19}, or different equilibrium states in perturbative QFT \cite{Drago17}. One could try then to generalise these computations to curved spacetime, maybe deducing them from their properties at infinity in asymptotically flat spacetimes, as we did here, to evaluated the relative entropy for states different from coherent perturbations of the vacuum. This way one could check if the characterisation of black hole entropy via the relative entropy is dependent on the matter model considered, as entanglement entropy, or if it is a universal feature. The latter case would give strong support on the idea that the relative entropy can correctly account for the black hole entropy, since the Bekenstein-Hawking formula is a universal quantity, as shown in \cite{FredenhagenHaag89}.

Finally, one could try to go beyond the semiclassical regime and try to account for the quantum effects of gravity itself. The method we applied in this work could not account for the statistical origin of black hole entropy: we could show that a generalised entropy measure naturally arises, and from this we can argue that, for consistency with the second law of thermodynamics and with the Hawking temperature of black hole horizons, the area of the black hole is a true measure of the entropy carried by the black hole. However, we are not able to give a Boltzmann interpretation of the entropy, as the counting of the gravitational degrees of freedom which describes the black hole from a microscopic point of view. Such a result has been found in different contexts, in candidates theories of quantum gravity, as string theory \cite{StromingerVafa96} and loop quantum gravity \cite{Rovelli96}.

However, in \cite{HollandsIshibashi19}, assuming a key technical result, that is, that the decay properties of the scalar field we briefly discussed in section \ref{ssec:quantization-in-Schwarzschild} holds for linear gravitational waves too, it was shown that it is possible to find analogue formulas to \eqref{eq:result-RE-Schw} for coherent perturbations in linearized gravity. One could ask, then, if the same result holds in perturbative quantum gravity. Perturbative quantum gravity has been developed as an Effective Field Theory for a nonrenormalizable, Yang-Mills-type theory, with diffeomorphism invariance as a gauge group, in \cite{Hollands07} and \cite{Brunetti13}. In such a framework, the gravitational degrees of freedom could be taken into account directly in the relative entropy measure, rather than entering the computation in a coarse-grained, geometric way through the Raychaudhuri equation. If an entropy-area law would be found, it could point to the relative entropy between gravitational quantum states as the origin of the black hole entropy.

Black hole entropy is still probably the most mysterious property of black holes. The quest for understanding its microscopic origin has led to major results in different areas of physics, and trying to understand how quantum fields affects General Relativity has led to the development of an entirely new field of study, Quantum Field Theory in Curved Spacetimes. Still, the fact that it is possible to find a entropy-area law in a plethora of contexts and methods, and that it is not only a property of black holes, but a seemingly universal feature shared by cosmological horizons and even apparent horizons, hints to a profound connection between the quantum theory, gravity, and information.

 \clearpage 
 \singlespacing
 \addcontentsline{toc}{chapter}{Bibliography}
 
 \printbibliography 

\end{document}